\definecolor{darkred}{rgb}{0.8,0.1,0.1}
\theoremstyle{plain}
\newtheorem{theo}{Theorem}[section]
\newtheorem{lem}[theo]{Lemma}
\newtheorem{propo}[theo]{Proposition}
\newtheorem{cor}[theo]{Corollary}
\theoremstyle{definition}
\newtheorem{defi}[theo]{Definition}
\newtheorem{ex}[theo]{Example}
\newtheorem{rem}[theo]{Remark}
\numberwithin{equation}{section}
\def\nn{\nonumber}
\def\bbK{\mathbb{K}}
\def\bbR{\mathbb{R}}
\def\bbC{\mathbb{C}}
\def\bbN{\mathbb{N}}
\def\EE{\mathcal{E}}
\def\HH{\mathcal{H}}
\def\DD{\mathcal{D}}
\def\End{\mathrm{End}}
\def\Sol{\mathrm{Sol}}
\def\Ker{\mathrm{Ker}}
\def\Imm{\mathrm{Im}}
\def\id{\mathrm{id}}
\def\supp{\mathrm{supp}}
\def\dd{\mathrm{d}}
\def\vol{\mathrm{vol}_M^{}}
\def\vols{\mathrm{vol}_\Sigma^{}}
\def\sc{\mathrm{sc}}
\def\dim{\mathrm{dim}}
\def\1{\mathbbm{1}}
\def\g{\mathfrak{g}}
\def\w{\mathsf{w}}
\def\b{\mathsf{b}}
\newcommand{\ip}[2]{\langle #1,#2 \rangle}
\def\sk{\vspace{2mm}}
\title{%
Linear bosonic and fermionic quantum gauge theories \\ on curved spacetimes
}
\author{%
 Thomas-Paul Hack \and Alexander Schenkel
 }
\institute{II.~Institut f\"ur Theoretische Physik,~Universit\"at Hamburg,~Luruper Chaussee~149,~22761~Hamburg,~Germany. \email{thomas-paul.hack@desy.de} \and
 Fachgruppe Mathematik,~Bergische~Universit\"at~Wuppertal,~Gau\ss stra\ss e~20,~42119~Wuppertal,~Germany. \email{schenkel@math.uni-wuppertal.de}}
\date{\today}
\begin{document}

\maketitle

\begin{abstract}
We develop a general setting for the quantization of linear bosonic and fermionic field theories subject to 
local gauge invariance and show how standard examples such as linearised Yang-Mills theory and linearised general relativity
 fit into this framework. Our construction always leads to a well-defined and gauge-invariant quantum field algebra, 
 the centre and representations of this algebra, however, have to be analysed on a case-by-case basis. 
 We discuss an example of a fermionic gauge field theory where the necessary conditions for the existence of Hilbert space
  representations are not met on any spacetime. On the other hand, we prove that these conditions are met for the Rarita-Schwinger 
  gauge field in linearised pure $N=1$ supergravity on certain spacetimes, including asymptotically flat spacetimes 
  and classes of spacetimes with compact Cauchy surfaces. We also present an explicit example of a supergravity
  background on which the Rarita-Schwinger gauge field can not be consistently quantized.
\end{abstract}

\keywords{quantum field theory on curved spacetimes; gauge theories; supergravity; algebraic quantum field theory}


\section{\label{sec:intro}Introduction}
Quantum field theory on curved spacetimes has gone through  
major developments in the last decades. Explicit models have been constructed in this framework, including the
scalar field \cite{DimockScalar}, the Dirac field \cite{DimockDirac,Sanders:2008kp,Dappiaggi:2009xj} 
and the Proca field \cite{Furlani:1999kq}. These examples have later been recast into a general 
approach to the quantization of bosonic and fermionic matter field theories on curved spacetimes \cite{Bar:2007zz,Bar:2011iu}.
On the other hand, examples of theories exhibiting a local gauge invariance have been investigated in detail,
including the Maxwell field \cite{DimockVector,Fewster:2003ey,Pfenning:2009nx,Dappiaggi:2011cj,Dappiaggi:2011zs,DHS} and linearised general relativity
on Einstein manifolds \cite{Fewster:2012bj}. The quantization of gauge field theories bears new complications, which are not present
for matter field theories. In particular, the equation of motion in a gauge field theory is not hyperbolic and thus one does not have
a well-defined Cauchy problem or Green's operators,  which are the basic structures entering the construction
of matter quantum field theories. This problem has been resolved in the examples mentioned above 
by considering only the gauge invariant content of such a theory,
i.e.~gauge invariant observables, and making use of a special gauge fixing condition. We emphasise that even though
a gauge fixing is used in this construction, the resulting algebra of observables is by definition gauge invariant.
The algebra of gauge invariant observables of a gauge field theory can
have new features compared to matter field theories.
As it has been shown in \cite{Bar:2007zz,Bar:2011iu} (see also Section \ref{sec:algebra} in the present paper) the algebra
of observables of a bosonic matter quantum field theory never has a non-trivial centre. In gauge field theories
this can in general only be guaranteed under additional assumptions on the Cauchy surface in the spacetime,
see \cite{DimockVector} for the Maxwell field and \cite{Fewster:2012bj} for linearised general relativity on Einstein manifolds.
There are examples of Cauchy surfaces such that the algebra of gauge invariant observables of the
Maxwell field has a non-trivial centre \cite{Dappiaggi:2011zs,DHS}.
Due to the theory of degenerate Weyl algebras \cite{degenerateCCR}  these centres do not pose 
mathematical problems for the quantum field theory on an individual spacetime, but
they have impact on whether or not the theory is locally covariant in the sense of \cite{Brunetti:2001dx}, 
see e.g.~\cite{Dappiaggi:2011zs,DHS}.
Furthermore, the centres are certainly of physical interest and should be understood in detail. 
We also want to mention that in addition to these results on linear
quantum gauge field
theories there has been a lot of effort in constructing perturbatively
interacting quantum gauge field theories
on curved spacetimes, see e.g.~\cite{Hollands:2007zg,Fredenhagen:2011mq} and references therein. 
In our work we restrict ourselves to linear quantum field theories, since as it will become clear later,
there are a lot of non-trivial aspects which have to be understood in detail even at the linear level.
This is in particular the case for fermionic gauge field theories. The restriction to linear theories will allow us to quantize gauge fields without introducing auxiliary fields as it happens in the BRST/BV-formalism, cf.~\cite{Hollands:2007zg,Fredenhagen:2011mq}. However, we presume that our construction for the bosonic case yields a gauge invariant algebra of quantum observables which is isomorphic to the one obtained in \cite{Hollands:2007zg,Fredenhagen:2011mq} at lowest order in perturbation theory.

The goal of the present paper is twofold: First, we aim at developing a general framework for the quantization
of linear gauge field theories. This can be seen as an extension of \cite{Bar:2007zz,Bar:2011iu} to field theories
subject to a local gauge invariance. We allow for bosonic as well as fermionic theories and provide
an axiomatic definition of a classical linear gauge field theory in terms of fibre bundles and differential operators thereon.
Our setting is general enough to cover the matter field theories of \cite{Bar:2007zz,Bar:2011iu}, which
will be promoted to gauge field theories with a trivial gauge structure, as well as the standard
examples such as linearised Yang-Mills theory and linearised general relativity on Einstein manifolds.
Even more, our general framework is sufficiently flexible to include examples of fermionic gauge field theories.
The prime example of such a theory is the gravitino field (also called Rarita-Schwinger field) in 
linearised pure $ N=1$ supergravity, which we will discuss in detail.
A further example which we will study in detail is a fermionic version of linearised Yang-Mills theory, which emerges
for example as the fermionic sector of a Yang-Mills theory modeled on a Lie supergroup.
Bosonic gauge field theories can always be quantized in terms of (possibly degenerate) Weyl algebras,
while fermionic gauge field theories bear additional complications, similar to their matter field theory counterparts
\cite{Bar:2007zz,Bar:2011iu}. The issue there is that the inner product space associated to a fermionic matter or gauge field theory
is in general indefinite, and one therefore encounters physical as well as mathematical problems. The mathematical issue
is that such indefinite inner product spaces can not be quantized with the usual CAR-representation. The physical
problem is that, even if there would exist a suitable CAR-algebra, there are negative norm states in any representation of it.
In contrast to other approaches to the quantization of gauge field theories which are based on kinematical (i.e.~still 
containing gauge degrees of freedom) representation spaces, our negative norm states would be states in the physical 
(i.e.~gauge invariant) Hilbert space and would thus pose problems for
the physical interpretation of the fermionic gauge field theory under consideration.
This brings us to the second goal of this paper, which is the investigation under which conditions the two examples
of fermionic gauge field theories give rise to positive definite inner product spaces and thus 
can be consistently quantized in terms of a CAR-representation. We prove that the fermionic generalisation of linearised
Yang-Mills theory always leads to an indefinite inner product space and thus can not be quantized on any spacetime.
This implies that the perturbative quantization of Yang-Mills theories based on Lie supergroups is, in the above mentioned
sense, inconsistent and puts strong mathematical constraints on such theories. On the other hand, our result is well in line with the spin-statistics theorem. The situation is better for the gravitino field
of linearised pure supergravity. We provide a sufficient condition for this theory to give rise to a positive inner product space,
which demands the existence of a special type of gauge transformation. For compact  Cauchy surfaces 
this condition is fulfilled if the induced (Riemannian) Dirac 
operator on the Cauchy surface has a trivial kernel.
We also consider certain non-compact Cauchy surfaces and answer the question
of positivity affirmatively. This shows that, under assumptions on the Cauchy surface, treating the 
Rarita-Schwinger field as a fermionic gauge field theory
(as it is required by supergravity) improves on well-known issues appearing in the quantization of the Rarita-Schwinger
field when treated as a matter field theory, see e.g.~\cite{Bar:2011iu,Hack:2011yv,Schenkel:2011nv}.
Introducing a mass term for the gravitino field in a gauge-invariant way requires the coupling
of matter fields to the supergravity and will be discussed elsewhere.
We also provide an example of a supergravity background on which the Rarita-Schwinger gauge 
field can not be consistently quantized via a CAR-representation.
Considering the  spacetime $M=\bbR\times\mathbb{T}^{D-1}$ -- with
$\mathbb{T}^{D-1}$ denoting the $D{-}1$-torus --
 equipped with the flat Lorentzian metric, we show that in case of the trivial spin structure
the inner product is indefinite, while for all other spin structures it is positive definite. 
A  complete classification of Cauchy surfaces and induced metrics thereon which lead to a positive inner product for the
 Rarita-Schwinger gauge field
seems to be very complicated and is beyond the scope of this work.

The outline of this paper is as follows: In Section \ref{sec:notation} we review some basic aspects of Lorentzian geometry
and differential operators on vector bundles following mainly the presentation in \cite{Bar:2007zz,Bar:2011iu}.
We then introduce our definition of classical gauge field theories in Section \ref{sec:class} and show  that
the basic examples studied in the literature fit into this framework. We conclude this section with a theorem
on properties of classical gauge field theories, which generalises the properties found in the explicit examples
to the axiomatic level. In Section \ref{sec:algebra} we study the quantization of gauge field theories and in particular
propose suitable algebras of gauge invariant observables. The question of non-degeneracy (positivity)
of bosonic (fermionic) gauge field theories is investigated in Section \ref{sec:degeneracy}. 
The Rarita-Schwinger gauge field is discussed separately in Section \ref{sec:degeneracyrarita}.
Appendix \ref{app:conventions} contains our spinor conventions.


\section{\label{sec:notation}Notation and preliminaries}
We fix our notations and review briefly some aspects of Lorentzian manifolds and differential operators on vector bundles.
We mainly follow \cite{Bar:2007zz,Bar:2011iu} and refer to these works for more details and references
to other literature.

A {\textit{Lorentzian manifold}} is a smooth and oriented connected $D$-dimensional manifold $M$ equipped with a smooth Lorentzian metric
$g$ of signature $(-,+,\dots,+)$. The associated volume form will be denoted by $\vol$. 
A time-oriented Lorentzian manifold will be called a {\textit{spacetime}}.
For every subset $A\subseteq M$ of a spacetime $M$ we denote the causal future/past of
$A$ by $J^{\pm}(A)$. 
A closed subset $A\subseteq M$ is called {\textit{spacelike compact}} if there exists a compact 
$C\subseteq M$ such that $A\subseteq J(C) :=J^+(C)\cup J^-(C)$. 
A {\textit{Cauchy surface}} in a spacetime $M$ is a subset $\Sigma\subseteq M$ 
which is met exactly once by every inextensible causal curve and a spacetime is called {\textit{globally hyperbolic}}
if and only if it contains a Cauchy surface.
We shall need the following theorem proven by Bernal and S{\'a}nchez \cite{Bernal:2004gm,Bernal:2005qf}:
\begin{theo}\label{theo:bernalsanchez}
Let $(M,g)$ be a globally hyperbolic spacetime.
\begin{itemize}
\item[(i)] Then there exists a smooth manifold $\Sigma$, a smooth one-parameter family of Riemannian metrics $\{g_t\}_{t\in\bbR}$
on $\Sigma$ and a smooth positive function $\vartheta$ on $\bbR\times \Sigma$, such that
$(M,g)$ is isometric to $(\bbR\times \Sigma,-\vartheta^2dt^2 \oplus g_t)$. Under this
 isometry each $\{t\}\times\Sigma$ corresponds to a
smooth spacelike Cauchy surface in $(M,g)$.
\item[(ii)] Let also $\widetilde{\Sigma}$ be a smooth spacelike Cauchy surface in $(M,g)$. Then 
there exists a smooth splitting $(M,g)\simeq (\bbR\times \Sigma,-\vartheta^2dt^2\oplus g_t)$ as in (i)
such that $\widetilde{\Sigma}$ corresponds to $\{0\}\times\Sigma$.
\end{itemize}
\end{theo}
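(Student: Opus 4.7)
The plan is to adopt the two-step strategy of Bernal and S\'anchez. The first step is to construct a smooth \emph{Cauchy temporal function} $T\colon M\to\bbR$, i.e.\ a smooth function whose gradient is everywhere past-directed timelike and whose level sets are Cauchy surfaces. The second step is to use the flow of a suitable rescaling of $-\nabla T$ to produce the required diffeomorphism and to read off the metric in the prescribed orthogonal form.

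For the first step I would start from Geroch's classical construction of a \emph{continuous} Cauchy time function $\tau$, obtained by integrating a finite Borel measure on $M$ against characteristic functions of causal pasts and futures and taking the logarithm of the quotient. The nontrivial work is to upgrade $\tau$ to a smooth temporal function without destroying the Cauchy property of its level sets. Here I would follow Bernal-S\'anchez: using convex normal neighbourhoods together with local acausal smooth Cauchy surfaces through each point, one constructs a locally finite family of smooth ``temporal step functions'' whose gradients are past-directed timelike on their supports, and then sums them with a suitable partition of unity so that the gradient contributions add constructively. A careful estimate shows that the resulting function $T$ has past-directed timelike gradient on all of $M$ and that each $T^{-1}(t)$ remains a Cauchy surface.

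Given such a $T$, I set $\Sigma:=T^{-1}(0)$, which is a smooth spacelike Cauchy surface because $dT$ is timelike, and introduce the vector field $X:=-\nabla T/g(\nabla T,\nabla T)$, which is future-directed timelike and satisfies $X(T)=1$. Each maximal integral curve of $X$ is inextensible, hence meets every Cauchy level set of $T$ exactly once, so the flow of $X$ is complete and induces a diffeomorphism $\Phi\colon\bbR\times\Sigma\to M$ under which $T$ corresponds to the $\bbR$-projection. A standard lapse/shift computation, using that $\partial_t\propto -\nabla T$ is by construction orthogonal to each level set of $T$, puts the pulled-back metric in the form $-\vartheta^2\,dt^2\oplus g_t$ with $\vartheta^2=-g(\partial_t,\partial_t)$ and $g_t$ the induced Riemannian metric on $\{t\}\times\Sigma$.

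For part (ii) one adapts the smoothing so that the constructed temporal function vanishes on the prescribed Cauchy surface $\widetilde{\Sigma}$. Since $\widetilde{\Sigma}$ is smooth, spacelike and acausal, it admits a tubular neighbourhood on which a local temporal function having $\widetilde{\Sigma}$ as a level set is immediate; arranging the step functions used in the summation so that all of them vanish on $\widetilde{\Sigma}$ with prescribed signs on either side yields a global Cauchy temporal function $T$ with $T^{-1}(0)=\widetilde{\Sigma}$, to which the flow argument of part (i) applies verbatim. The main obstacle throughout is the smoothing step: Geroch's time function is only continuous, and naive mollification easily destroys either the timelike-gradient property or the Cauchy property of the level sets; controlling both globally on a non-compact spacetime is the technical heart of the Bernal-S\'anchez argument. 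For part (ii) there is the additional constraint that a particular surface be pinned down as an exact level set, which forbids a completely free smoothing and necessitates the localised refinement just sketched.
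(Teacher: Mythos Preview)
The paper does not actually prove this theorem: it is stated as a background result and attributed to Bernal and S\'anchez \cite{Bernal:2004gm,Bernal:2005qf}, with no argument given. So there is no ``paper's own proof'' to compare against beyond the citation.

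That said, your outline is an accurate summary of the Bernal--S\'anchez strategy: Geroch's continuous time function, the delicate smoothing to a Cauchy temporal function, the flow of the normalised gradient to obtain the splitting, and the refinement that pins down a prescribed Cauchy surface as a level set. You have also correctly identified where the real difficulty lies (the smoothing step). As a proof \emph{proposal} it is fine; as a self-contained proof it would of course require the full technical apparatus of the cited papers, which is why the present paper simply quotes the result.
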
 
\sk

Let $V,W$ be a $\bbK$-vector bundles over $M$ with $\bbK=\bbR$ or $\bbC$.
A {\textit{differential operator}} of order $k$ is a linear map
$P:\Gamma^\infty(V)\to\Gamma^\infty(W)$, with $\Gamma^\infty(V),\Gamma^\infty(W)$ denoting the
$C^\infty(M)$-modules of sections of $V,W$, which in local coordinates $(x^0,\dots, x^{D-1})$ and a
local trivialisation of $V$ and $W$ looks like
\begin{flalign}
P=\sum\limits_{\vert\alpha \vert\leq k} A^\alpha(x) \frac{\partial^{\vert\alpha\vert}}{\partial x^{\alpha}}~.
\end{flalign}
Here $\alpha = (\alpha_0,\dots, \alpha_{D-1} )\in \bbN_0^{D}$ denotes a multi-index, $\vert\alpha\vert = \alpha_0+\dots+
\alpha_{D-1}$ is its length and $\frac{\partial^{\vert\alpha\vert}}{\partial x^{\alpha}}=
\frac{\partial^{\vert\alpha\vert}}{\partial (x^0)^{\alpha_0}\cdots \partial (x^{D-1})^{\alpha_{D-1}}}$. The $A^\alpha$ are smooth
 functions with values in the linear homomorphisms from the typical fibre of $V$ to the one of $W$.
 The {\textit{principal symbol}} $\sigma_P$ of $P$ associates to each covector 
 $\xi\in T_x^\ast M$ a homomorphism $\sigma_P(\xi):V_x\to W_x$ between the fibre $V_x$ and $W_x$ over $x\in M$. Locally,
 \begin{flalign}
 \sigma_P(\xi) = \sum\limits_{\vert\alpha\vert =k} A^\alpha(x) \,\xi^\alpha~,
 \end{flalign}
 where $\xi^\alpha = \xi_0^{\alpha_0}\dots \xi_{D-1}^{\alpha_{D-1}}$ and $\xi = \xi_\mu \,dx^\mu$ 
 (sum over $\mu=0,\dots,D-1$ understood).
 In addition to $\Gamma^\infty(V)$ we introduce the notations $\Gamma^\infty_0(V)$ for the sections
 of compact support and $\Gamma^\infty_\sc(V)$ for the sections of spacelike compact support.
\sk

Let now $\bbK=\bbR$ and let $\ip{~}{~}_V^{}$ be a non-degenerate
bilinear form on $V$, that is a family of 
non-degenerate bilinear maps $\ip{~}{~}_{V_x}^{}:V_x\times V_x\to \bbR$ on the fibres $V_x$, for all $x\in M$, 
that depend smoothly on $x$.
 We define the bilinear map $\ip{~}{~}_{\Gamma(V)}^{}$, for all sections $f,h\in\Gamma^\infty(V)$ with compact overlapping support,
\begin{flalign}
\ip{f}{h}_{\Gamma(V)}^{}:= \int_M\vol\,\ip{f}{h}_V^{}~.
\end{flalign}
Let us also assume that $W$ comes with a non-degenerate bilinear form $\ip{~}{~}_W^{}$.
Then every differential operator $P:\Gamma^\infty(V)\to\Gamma^\infty(W)$ of order $k$ has a unique {\textit{formal adjoint}}, 
i.e.~a differential operator $P^\dagger:\Gamma^\infty(W)\to\Gamma^\infty(V)$ of order $k$, such that
\begin{flalign}
\ip{P^\dagger f}{ h}_{\Gamma(V)}^{} = \ip{f}{Ph}_{\Gamma(W)}^{}~,
\end{flalign} 
for all $f\in\Gamma^\infty(W)$ and $h\in\Gamma^\infty(V)$ with compact overlapping support. 
If $V=W$, $\ip{~}{~}_V^{} = \ip{~}{~}_W^{}$
 and $P^\dagger =P$ we say that
$P$ is {\textit{formally self-adjoint}} (with respect to $\ip{~}{~}_V^{}$).
\begin{defi}
Let $P:\Gamma^\infty(V)\to \Gamma^\infty(V)$ be a differential operator on a vector bundle $V$ over a Lorentzian manifold $M$.
A {\textit{retarded/advanced Green's operator}} for $P$ is a continuous linear map $G_{\pm}:\Gamma^\infty_0(V)\to\Gamma^\infty(V)$
satisfying
\begin{itemize}
\item[(i)] $P\circ G_{\pm} =\id$,
\item[(ii)] $G_{\pm}\circ P\big\vert_{\Gamma^\infty_0(V)} = \id$,
\item[(iii)] $\supp(G_{\pm}f)\subseteq J^\pm(\supp(f))$ for any $f\in \Gamma^\infty_0(V)$.
\end{itemize}
\end{defi}
\sk

\begin{defi}
Let $P:\Gamma^\infty(V)\to\Gamma^\infty(V)$ be a differential operator on a vector bundle $V$ over
a globally hyperbolic spacetime $M$ with
a non-degenerate bilinear form $\ip{~}{~}_V^{}$.
\begin{itemize}
\item[(i)] We say that $P$ is {\textit{Green-hyperbolic}} if $P$ and $P^\dagger$ have Green's operators\footnote{We are grateful to Ko Sanders for pointing out that the existence of Green's operators for $P^\dagger$ does in general not follow from the existence of Green's operators for $P$.}.
\item[(ii)] We say that $P$ is {\textit{Cauchy-hyperbolic}} if the
  Cauchy problems for $P$ and $P^\dagger$ are well-posed.
\end{itemize}
\end{defi}
\sk

\begin{rem}
The Green's operators of a Green-hyperbolic operator on a globally hyperbolic spacetime
 are necessarily unique, see Remark 3.7 in \cite{Bar:2011iu}.
 Cauchy-hyperbolic operators are also Green-hyperbolic, but there are Green-hyperbolic operators
 that are not Cauchy-hyperbolic, see Section 2.7 in \cite{Bar:2011iu}.
\end{rem}
\sk

\begin{ex}
Let $M$ be a globally hyperbolic spacetime and $V$ a vector bundle over $M$.
\begin{itemize}
\item[1.)] A second-order differential operator $P$ on $V$ is called a {\textit{normally hyperbolic operator}} (also wave operator)
if its principal symbol is given by the inverse metric $g^{-1}$ times the identity on the fibre, $\sigma_P(\xi) =g^{-1}(\xi,\xi)\,\id$.
In other words, a differential operator is normally hyperbolic if and only if in local coordinates $x^\mu$ 
and a local trivialisation of $V$
\begin{flalign}
P =g^{\mu\nu}(x)\,\partial_\mu\partial_\nu +A^\mu(x)\,\partial_\mu + B(x)~,
\end{flalign}
where $A^\mu$ and $B$ smooth functions valued in the endomorphisms of the typical fibre of $V$.
\item[2.)] A first-order differential operator $P$ on $V$ is called of {\textit{Dirac-type}} if 
$P^2 = P\circ P$ is a normally hyperbolic operator.
\end{itemize}
The formal adjoints of normally hyperbolic operators and operators of
Dirac-type are again normally hyperbolic and of Dirac-type
respectively, and these two classes of differential operators are Green-hyperbolic and even Cauchy-hyperbolic, see
\cite{Bar:2007zz,Bar:2011iu,Muehlhoff:2010ra}. 
\end{ex}
\sk

As a last prerequisite we require the following lemma and theorem on properties of Green's operators. See 
Lemma 3.3 and Theorem 3.5 in \cite{Bar:2011iu} for the proofs.
\begin{lem}\label{lem:greenadjoint}
Let $M$ be a globally hyperbolic spacetime and $V$ a vector bundle over $M$ equipped with 
a non-degenerate bilinear form $\ip{~}{~}_V^{}$.
 Denote by $G_\pm$ the retarded/advanced Green's operators for
a Green-hyperbolic operator $P$ on $V$. Then the retarded/advanced Green's operators $G_\pm^\dagger$
for $P^\dagger$ satisfy, for all $f,h\in \Gamma^\infty_0(V)$,
\begin{flalign}
\ip{G_\mp^\dagger f}{h}_{\Gamma(V)}^{} = \ip{f}{G_\pm h}_{\Gamma(V)}^{}~.
\end{flalign}
In particular, if $P^\dagger=P$ is formally self-adjoint then
$\ip{G_\mp f}{h}_{\Gamma(V)}^{} = \ip{f}{G_\pm h}_{\Gamma(V)}^{}$, for all $f,h\in \Gamma^\infty_0(V)$.
\end{lem}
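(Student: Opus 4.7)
The plan is to derive the identity by inserting Green's operator relations on both sides, swapping $P^\dagger$ across the pairing via formal adjointness, and then collapsing the remaining Green's operator.

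First I would set up the support estimates needed for the integrals to be defined. For $f,h\in\Gamma^\infty_0(V)$, property (iii) of the Green's operators gives $\supp(G_\mp^\dagger f)\subseteq J^\mp(\supp f)$ and $\supp(G_\pm h)\subseteq J^\pm(\supp h)$. In a globally hyperbolic spacetime the intersection of oppositely directed causal cones of compact sets is compact, so $\supp(G_\mp^\dagger f)\cap\supp(G_\pm h)$ is compact; this makes both sides of the claimed equality well-defined.

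The core computation uses property (ii) of $G_\mp^\dagger$ to rewrite $f=P^\dagger G_\mp^\dagger f$ and property (i) of $G_\pm$ to rewrite $h=PG_\pm h$, giving
\begin{align*}
\ip{f}{G_\pm h}_{\Gamma(V)}^{}
= \ip{P^\dagger G_\mp^\dagger f}{G_\pm h}_{\Gamma(V)}^{}
= \ip{G_\mp^\dagger f}{P\,G_\pm h}_{\Gamma(V)}^{}
= \ip{G_\mp^\dagger f}{h}_{\Gamma(V)}^{}\;.
\end{align*}
The main obstacle is the middle step, because the definition of $P^\dagger$ in the excerpt requires the two arguments to have compact \emph{overlapping} support, whereas neither $G_\mp^\dagger f$ nor $G_\pm h$ is individually compactly supported. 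I would bridge this by choosing a cutoff $\chi\in C^\infty_0(M)$ that equals $1$ on an open neighborhood of the compact set $\supp(G_\mp^\dagger f)\cap\supp(G_\pm h)$. Since $P$ and $P^\dagger$ are local (differential) operators, $(1-\chi)G_\pm h$ and $P((1-\chi)G_\pm h)$ have supports disjoint from $\supp(G_\mp^\dagger f)$, so replacing $G_\pm h$ by $\chi G_\pm h$ changes neither $\ip{P^\dagger G_\mp^\dagger f}{G_\pm h}_{\Gamma(V)}^{}$ nor $\ip{G_\mp^\dagger f}{P G_\pm h}_{\Gamma(V)}^{}$. After this replacement one argument has compact support, so the standard formal adjoint relation applies and yields the swap.

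The special case $P^\dagger=P$ is then immediate from the general statement. Alternatively one may notice that if $P$ is formally self-adjoint, the operators $G_\pm^\dagger$ satisfy properties (i)--(iii) for $P$ itself, and by uniqueness of Green's operators on globally hyperbolic spacetimes (noted in the remark above) one concludes $G_\pm^\dagger=G_\pm$, from which the displayed equality follows.
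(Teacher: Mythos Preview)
Your proof is correct and is essentially the standard argument; the paper does not prove this lemma itself but cites \cite{Bar:2011iu}, Lemma~3.3, whose proof is exactly the insert--swap--collapse computation you wrote. Two minor remarks: the identity $f=P^\dagger G_\mp^\dagger f$ is property~(i) of the Green's operators for $P^\dagger$, not property~(ii); and your cutoff step, while harmless, is not actually needed, since you already established in your first paragraph that $G_\mp^\dagger f$ and $G_\pm h$ have compact overlapping support, which is precisely the hypothesis under which the paper states the formal-adjoint identity.
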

\sk
\begin{theo}\label{theo:complex}
Let $M$ be a globally hyperbolic spacetime, $V$ a vector bundle over $M$ and
$P$ a Green-hyperbolic operator on $V$. For $G_\pm$ being the retarded/advanced Green's operators for
$P$ we define the linear map $G:= G_+ -G_- : \Gamma^\infty_0(V)\to\Gamma^\infty_\sc(V)$.
Then the following sequence of linear maps is a complex, which is exact everywhere:
\begin{flalign}
\{0\} \stackrel{~}{\longrightarrow} \Gamma^\infty_0(V)\stackrel{P}{\longrightarrow} \Gamma^\infty_0(V) 
\stackrel{G}{\longrightarrow} \Gamma^\infty_\sc(V) \stackrel{P}{\longrightarrow} \Gamma^\infty_\sc(V)~.
\end{flalign}
\end{theo}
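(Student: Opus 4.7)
I would begin by observing that $G$ indeed lands in $\Gamma^\infty_\sc(V)$: if $\supp(f)\subseteq K$ is compact, then $\supp(Gf)\subseteq J^+(K)\cup J^-(K)=J(K)$, which is spacelike compact by definition. Three of the four exactness conditions follow at once from $P\circ G_\pm=\id$ and $G_\pm\circ P\big\vert_{\Gamma^\infty_0(V)}=\id$: for $f\in\Gamma^\infty_0(V)$, $Pf=0$ forces $f=G_+Pf=0$ (injectivity at the first position), $GPf=G_+Pf-G_-Pf=f-f=0$, and $PGf=PG_+f-PG_-f=f-f=0$.

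For exactness at the middle copy of $\Gamma^\infty_0(V)$, suppose $Gf=0$, so that $G_+f=G_-f=:\tilde h$. Then
\begin{flalign}
\supp(\tilde h)\subseteq J^+(\supp(f))\cap J^-(\supp(f))~,\nn
\end{flalign}
which is compact in a globally hyperbolic spacetime because $\supp(f)$ is compact. Hence $\tilde h\in\Gamma^\infty_0(V)$ and $P\tilde h=PG_+f=f$, showing $f\in\Imm(P\big\vert_{\Gamma^\infty_0(V)})$.

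The substantive step is exactness at $\Gamma^\infty_\sc(V)$. Given $f\in\Gamma^\infty_\sc(V)$ with $Pf=0$, I would use Theorem~\ref{theo:bernalsanchez} to fix two disjoint smooth spacelike Cauchy surfaces $\Sigma_-,\Sigma_+$ with $\Sigma_-$ strictly to the past of $\Sigma_+$, and choose $\chi\in C^\infty(M)$ with $\chi=0$ on $J^-(\Sigma_-)$ and $\chi=1$ on $J^+(\Sigma_+)$. Set $h:=-P(\chi f)$. Since $Pf=0$, one has $h=-[P,\chi]f$, and the coefficients of $[P,\chi]$ involve only $d\chi,\ldots,d^k\chi$ and so vanish outside the slab between $\Sigma_-$ and $\Sigma_+$. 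Intersecting this time-slab with the spacelike compact set $\supp(f)$ yields a compact set, because in a globally hyperbolic spacetime $J^+(K)\cap J^-(\Sigma_+)$ and $J^-(K)\cap J^+(\Sigma_-)$ are compact for any compact $K$. Hence $h\in\Gamma^\infty_0(V)$.

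It remains to show $Gh=f$. Note that $(1-\chi)f$ and $-\chi f$ are smooth sections with past- and future-compact support respectively, satisfying $P((1-\chi)f)=h$ and $P(-\chi f)=-h$. The plan is to identify $G_+h=(1-\chi)f$ and $G_-h=-\chi f$, whence $Gh=(1-\chi)f-(-\chi f)=f$. This is precisely where I expect the main obstacle: the identifications amount to saying that any smooth solution of $Pu=0$ with past- (respectively future-) compact support vanishes, a statement that does not follow directly from Green-hyperbolicity alone. One handles it by extending $G_+$ to act on sections of past-compact support via a cutoff argument, applying the inversion identities of $G_\pm$ together with Lemma~\ref{lem:greenadjoint} to verify that the extensions still invert $P$, and then invoking uniqueness of the Green's operators. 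This is the technical core borrowed from \cite{Bar:2011iu}.
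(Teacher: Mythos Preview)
The paper does not prove this theorem at all; it simply cites \cite{Bar:2011iu} (Theorem~3.5 there) for the proof, so there is no argument to compare against beyond noting that your outline is exactly the standard one given in that reference.

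Your sketch is structurally correct, but there is a pair of compensating sign/labelling slips in the last step. With your choice $h:=-P(\chi f)$ one has $P(-\chi f)=h$ and $P((1-\chi)f)=h$; since $\chi=0$ on $J^-(\Sigma_-)$, it is $-\chi f$ that vanishes in the far past and hence has \emph{past}-compact support, so the correct identifications are $G_+h=-\chi f$ and $G_-h=(1-\chi)f$, yielding $Gh=-f$. You have swapped both the past/future-compact labels and the $G_\pm$ assignments, and the two errors cancel to give the right conclusion. The clean version is to take $h:=P(\chi f)$; then $G_+h=\chi f$, $G_-h=-(1-\chi)f$, and $Gh=f$ directly. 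Apart from this cosmetic issue, your argument is the one in \cite{Bar:2011iu}, including the honest acknowledgement that the extension of $G_\pm$ to past/future-compactly supported sections (and the attendant uniqueness) is the genuine technical input.
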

\sk


\section{\label{sec:class}Classical gauge field theories}
In this section we provide a general setting to describe classical gauge field theories.
This requires, of course, more structures compared to classical field theories which are not subject to gauge
invariance, i.e.~classical matter field theories.
Throughout this article all field theories are assumed to be real and non-interacting, i.e.~the dynamics is governed by a linear
equation of motion operator. The non-trivial coupling 
is thus only to fixed classical background fields, such as the gravitational field or background gauge fields.

Before investigating classical gauge field theories we first provide a 
definition of a classical matter field theory following the spirit of \cite{Bar:2007zz,Bar:2011iu}
and give some examples. 
\begin{defi}\label{def:matterfield}
A (real) {\textit{classical matter field theory}} is given by a triple $\big(M,V,P\big)$, where 
\begin{itemize}
\item $M$ is a globally hyperbolic spacetime 
\item $V$ is a real vector bundle over $M$ equipped  with
a non-degenerate bilinear form $\ip{~}{~}_V^{}$
\item $P:\Gamma^\infty(V)\to\Gamma^\infty(V)$ is a formally self-adjoint Green-hyperbolic operator  
\end{itemize}
We say that a classical matter field theory is {\textit{bosonic}} if $\ip{~}{~}_{V}^{}$ is symmetric and {\textit{fermionic}} if
$\ip{~}{~}_V^{}$ is antisymmetric.
\end{defi}
\sk
\begin{ex}[Klein-Gordon field]\label{ex:KG1}
Let $M$ be a globally hyperbolic spacetime and $V:=M\times \bbR$ be the trivial real line bundle.
 We equip $V$ with the canonical non-degenerate symmetric bilinear form $\ip{~}{~}_V^{}$, which is induced 
 from the inner product on the typical fibre $\bbR$ given by, for all $v_1,v_2\in\bbR$, 
\begin{flalign}
\ip{v_1}{v_2}_\bbR^{} = v_1\,v_2~.
\end{flalign}
The $C^\infty(M)$-module of sections $\Gamma^\infty(V)$ is isomorphic to $C^\infty(M)$.

Using the differential $\dd: \Omega^n(M)\to\Omega^{n+1}(M)$ and its formal adjoint 
$\delta:\Omega^{n}(M)\to\Omega^{n-1}(M)$, given by $\delta = (-1)^{n D+D}\,\ast \dd  \ast$ with $D=\dim(M)$
and $\ast$ denoting the Hodge operator, we define the Klein-Gordon operator of mass $m\in [0,\infty)$
\begin{flalign}
P:C^\infty(M)\to C^\infty(M)~,~~f\mapsto Pf=\delta\dd f + m^2f~.
\end{flalign}
This operator is formally self-adjoint with respect to $\ip{~}{~}_{V}^{}$ and 
normally hyperbolic, thus in particular also Green-hyperbolic.

This shows that the Klein-Gordon field is a bosonic classical matter field theory according 
to Definition \ref{def:matterfield}.
\end{ex}
\begin{ex}[Majorana field]\label{ex:Maj1}
For our spinor conventions see Appendix \ref{app:conventions} and for a general discussion
of spinor fields we refer to \cite{Sanders:2008kp}.
Let $M$ be a globally hyperbolic spacetime of dimension $D\text{~mod~} 8= 2,3,4$ 
 equipped with a spin structure and let $DM$ be the Dirac bundle. The typical fibre of $DM$ is given by
  $ \bbC^{2^{\lfloor D/2\rfloor}}$. We can use the charge conjugation map $^c:DM\to DM$ to define the real 
subbundle $V:=DM_\bbR := \big\{e\in DM : e^c=e \big\}$, which we call the Majorana bundle.
We equip the typical fibre $ \bbR^{2^{\lfloor D/2\rfloor}}$ of $DM_\bbR$ with the non-degenerate antisymmetric 
bilinear map, for all $v_1,v_2\in\bbR^{2^{\lfloor D/2\rfloor}}$,
\begin{flalign}\label{eqn:majprod}
\ip{v_1}{v_2}_{\bbR^{2^{\lfloor D/2\rfloor}}}^{} = i\,v_1^\mathrm{T}\,C\,v_2~,
\end{flalign}
where $C$ denotes the charge conjugation matrix, $i$ the imaginary unit 
and $^{\mathrm{T}}$ the transposition operation.
This induces a non-degenerate antisymmetric bilinear form $\ip{~}{~}_V^{}$ on $ V= DM_\bbR$.

Let us denote by $TM$ the tangent and by $T^\ast M$ the cotangent  bundle on $M$.
Using the connection $\nabla:\Gamma^\infty(V)\to\Gamma^\infty(V\otimes T^\ast M)$,
 which is induced by the Levi-Civita connection,
and the $\gamma$-matrix section $\gamma\in\Gamma^\infty\big(TM\otimes \End(V)\big)$, which is covariantly constant,
we define the Dirac operator $\slashed{\nabla}: \Gamma^\infty(V)\to\Gamma^\infty(V)$
 by the contraction of $\gamma$ and $\nabla$. In local coordinates we have $\slashed{\nabla} = \gamma^\mu\,\nabla_\mu$.
We further define the Dirac operator of mass $m\in [0,\infty)$ by
\begin{flalign}
P:\Gamma^\infty(V)\to\Gamma^\infty(V)~,~~f\mapsto Pf=\slashed{\nabla} f + m\,f ~.
\end{flalign}
 The operator $P$ is formally self-adjoint with respect to $\ip{~}{~}_V$ and of Dirac-type, thus in particular Green-hyperbolic.

This shows that the Majorana field is a fermionic classical field theory according to Definition
\ref{def:matterfield}.
\end{ex}
\sk

For a classical gauge field theory Definition \ref{def:matterfield} is not suitable, since firstly it does not encode the
notion of gauge invariance and secondly, as well-known, gauge invariance implies that the dynamics of gauge fields
can not be governed by hyperbolic operators.
To include  the missing structures we propose the following axioms:
\begin{defi}\label{def:gaugefield}
A {\textit{classical gauge field theory}} is given by a six-tuple $\big(M,V,W,P,K,T\big)$, where
\begin{itemize}
\item $M$ is a globally hyperbolic spacetime
\item $V$ and $W$ are real vector bundles over $M$  equipped  with
 non-degenerate bilinear forms $\ip{~}{~}_V^{}$ and $\ip{~}{~}_W^{}$
\item $P:\Gamma^\infty(V)\to\Gamma^\infty(V)$ is a formally self-adjoint differential operator
\item $K:\Gamma^\infty(W)\to\Gamma^\infty(V)$ is a differential operator satisfying
$P\circ K =0$ and $R:=K^\dagger\circ K$ Cauchy-hyperbolic for non-trivial $K\neq 0$
\item $T: \Gamma^\infty(W)\to \Gamma^\infty(V) $ is a differential operator, such that 
$\widetilde{P}:=P+T\circ K^\dagger$ is Green-hyperbolic and $Q:= K^\dagger\circ T$ is
 Green-hyperbolic for non-trivial $K\neq 0$
\end{itemize}
We say that a classical gauge field theory is {\textit{bosonic}} if $\ip{~}{~}_V$ is symmetric and {\textit{fermionic}} if
$\ip{~}{~}_V$ is antisymmetric.
\end{defi}
\sk
\begin{rem}
As the following examples will show, the objects appearing in the six-tuple $\big(M,V,W,P,K,T\big)$
describing a classical gauge field theory have the following physical interpretation:

Sections of the vector bundle $V$ describe configurations of the gauge field. The operator $P$ governs its dynamics and the formal self-adjointness of $P$ can be interpreted as saying that $P \psi=0$ are the Euler-Lagrange equations obtained from a quadratic action functional for $\psi$.
The operator $K$ generates gauge transformations by, for all $\psi \in\Gamma^\infty(V)$ and $\epsilon\in\Gamma^\infty(W)$,
 $\psi \mapsto \psi^\prime = \psi + K\epsilon$. Thus, sections of $W$ describe configurations of the gauge transformation
 parameters. The condition $P\circ K =0$ encodes the gauge invariance of the dynamics, in particular
 it implies that pure gauge configurations $K\epsilon\in\Gamma^\infty(V)$ solve the equation of motion.
 The condition $R:= K^\dagger\circ K$ Cauchy-hyperbolic is used to prove that $K^\dagger \psi =0$ is a consistent gauge
 fixing condition, i.e.~that any solution of $P\psi=0$ with spacelike compact support is gauge equivalent to 
 a solution in the kernel of $K^\dagger$, see Theorem \ref{theo:gaugeproperties} (iv).
The Green-hyperbolic operator $\widetilde{P}:= P+T\circ K^\dagger$ is the equation of motion operator after the
 canonical gauge fixing $K^\dagger \psi =0$. The Green-hyperbolic operator $Q:= K^\dagger\circ T$ ensures 
 that the canonical gauge fixing is compatible with time evolution.
 
 Even though $K^\dagger$ has also the interpretation of a gauge fixing operator, we want to stress that we do not perform
  any explicit gauge fixing and work completely in terms of gauge invariant quantities when discussing algebras of observables. This follows in particular from Proposition \ref{propo:solspacepairing} which implies that the canonical (anti)commutation relations of the gauge field do not depend on $\widetilde P$, but only on $P$. A related observation is that two classical gauge field theories which differ only in the operator $T$ can be considered to be equivalent, see Proposition \ref{propo:gaugevectorspace}.

Since for a given five-tuple $\big(M,V,W,P,K\big)$ the choice of $T$ seems to be non-unique
 in general and since in the following examples $T$ is usually read off from the five-tuple $\big(M,V,W,P,K\big)$ 
 rather than being given as an independent datum, a natural question is whether and under which additional
  assumptions a differential operator $T$ satisfying the last point of Definition \ref{def:gaugefield} exists for 
  every five-tuple $\big(M,V,W,P,K\big)$ satisfying the first four points of Definition \ref{def:gaugefield}. 
  Unfortunately, a satisfactory answer to this question, which would allow us to treat linear gauge theories solely 
  in terms of five-tuples $\big(M,V,W,P,K\big)$, seems to be non-trivial and is beyond the scope of this work. For this reason 
  we have chosen to consider $T$ as an additional datum in our following general treatment of linear gauge theories.

\end{rem}
\sk

Before providing non-trivial examples of classical gauge field theories we show that any 
 classical matter field theory is also a classical gauge field theory with trivial gauge structure 
$K$.
\begin{propo}
Let $\big(M,V,P\big)$ be a classical matter field theory and let $T: \Gamma^\infty(V)\to \Gamma^\infty(V)$ be an arbitrary differential operator. Then
$\big( M,V,V,P,K=0,T\big)$ is a classical gauge field theory with trivial gauge structure $K=0$.
\end{propo}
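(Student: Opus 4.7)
The plan is to simply verify each bullet point in Definition \ref{def:gaugefield} for the candidate six-tuple $\big(M,V,V,P,K=0,T\big)$, appealing to the data already supplied by Definition \ref{def:matterfield} for $(M,V,P)$. There is no real obstacle; the statement is essentially a consistency check that Definition \ref{def:gaugefield} is designed so as to degenerate gracefully when the gauge-generation operator is taken to be zero.

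First I would identify the structural data: $M$ is globally hyperbolic and $V$ is a real vector bundle equipped with a non-degenerate bilinear form $\langle\cdot,\cdot\rangle_V^{}$, both directly by hypothesis on the matter field theory. Taking $W:=V$ with the same bilinear form fulfils the bundle axiom for both $V$ and $W$. The operator $P:\Gamma^\infty(V)\to\Gamma^\infty(V)$ is formally self-adjoint by hypothesis, so the third bullet is met.

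Next I would handle $K$. Setting $K=0:\Gamma^\infty(V)\to\Gamma^\infty(V)$ is a (trivial) differential operator. The identity $P\circ K=P\circ 0=0$ holds automatically. The requirement that $R:=K^\dagger\circ K$ be Cauchy-hyperbolic is explicitly imposed in Definition \ref{def:gaugefield} only for non-trivial $K\neq 0$, so in the present case it is vacuous and there is nothing to verify.

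Finally I would treat $T$. For any choice of differential operator $T:\Gamma^\infty(V)\to\Gamma^\infty(V)$ one has $K^\dagger=0^\dagger=0$, so
\begin{equation*}
\widetilde{P}\;=\;P+T\circ K^\dagger\;=\;P,
\end{equation*}
which is Green-hyperbolic since $P$ is Green-hyperbolic by the matter field theory axioms. The Green-hyperbolicity of $Q:=K^\dagger\circ T$ is again required only for $K\neq 0$ and hence is vacuous. All axioms of Definition \ref{def:gaugefield} being satisfied, $\big(M,V,V,P,0,T\big)$ is a classical gauge field theory with trivial gauge structure, completing the proof.
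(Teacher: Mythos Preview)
Your proof is correct and follows exactly the same approach as the paper: observe $K=0$ implies $K^\dagger=0$ and then verify that all axioms of Definition \ref{def:gaugefield} are trivially satisfied. The paper's own proof is the two-line sketch ``Since $K=0$ we also have $K^\dagger=0$. All conditions of Definition \ref{def:gaugefield} are easily verified,'' of which your argument is simply the fully spelled-out version.
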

\begin{proof}
Since $K=0$ we also have $K^\dagger=0$. All conditions of Definition \ref{def:gaugefield}  are easily verified.
\end{proof}
The standard examples of linearised bosonic and fermionic gauge field theories 
also fit into Definition \ref{def:gaugefield}.
\begin{ex}[Linearised Yang-Mills field]\label{ex:YM1}
The Yang-Mills field should only serve as an illustrative example. This is why we restrict ourselves
to the case of trivial gauge bundles in order to simplify the discussion.

Let $M$ be a globally hyperbolic spacetime and $\mathfrak{g}$ be a real semisimple Lie algebra.
Let $W$ be the trivial vector bundle $W:=M\times \mathfrak{g}$ and $V:=W\otimes T^\ast M$, with
$T^\ast M$ denoting the cotangent bundle. 
We equip $W$ with the non-degenerate symmetric bilinear form $\ip{~}{~}_W^{}$ induced from the Killing form
on the typical fibre $\mathfrak{g}$, for all $w_1,w_2\in\mathfrak{g}$,
\begin{flalign}
\ip{w_1}{w_2}_\mathfrak{g}^{} = \mathrm{Tr}\big(\mathrm{ad}_{w_1}\, \mathrm{ad}_{w_2}\big)
\end{flalign}
and $V$ with the non-degenerate symmetric bilinear form $\ip{~}{~}_V^{}$ given 
by the product of $\ip{~}{~}_W^{}$ and the inverse metric $g^{-1}$ on $M$.
The $C^\infty(M)$-module of sections $\Gamma^\infty(W)$ is isomorphic to the $C^\infty(M)$-module
 of $\mathfrak{g}$-valued functions 
$C^\infty(M,\mathfrak{g})$ and $\Gamma^\infty(V)$ is isomorphic to the $\mathfrak{g}$-valued
one-forms $\Omega^1(M,\mathfrak{g})$. 

A Yang-Mills field in this setting is a section
$A\in  \Omega^1(M,\mathfrak{g})$. 
The curvature of $A$ is given by
$F=\dd A + \frac{1}{2}\,[A,A]\in \Omega^2(M,\mathfrak{g})$.
We define the covariant differential $\dd^A : \Omega^n(M,\mathfrak{g})\to
 \Omega^{n+1}(M,\mathfrak{g})$ by 
$\dd^A\eta := \dd \eta + [A,\eta]$ and denote its formal adjoint by $\delta^A: \Omega^n(M,\mathfrak{g})\to
 \Omega^{n-1}(M,\mathfrak{g})$. Explicitly, $\delta^A\eta = (-1)^{n\,D+D}\,\ast\dd^A\ast\eta$, where
 $\ast$ denotes the Hodge operator and $D=\dim(M)$.
  The Yang-Mills equation reads $\delta^A F=0$.

Let us now linearise the Yang-Mills field $A$ around a solution $A_0\in\Omega^1(M,\mathfrak{g})$ of the
Yang-Mills equation, i.e.~we write $A=A_0+\alpha$ with $\alpha\in \Omega^1(M,\mathfrak{g})$
and consider only terms linear in $\alpha$. The linearised curvature reads $F_\mathrm{lin}= F_0 +\dd^{A_0}\alpha$, where
$F_0$ is the curvature of $A_0$ and $\dd^{A_0}$ the covariant differential given by $A_0$.
The linearisation of the Yang-Mills equation  yields
\begin{flalign}
0=\delta^{A_0} F_0 + \delta^{A_0}\dd^{A_0} \alpha + (-1)^{D}\,\ast[\alpha,\ast F_0] 
= \delta^{A_0}\dd^{A_0} \alpha - \ast[\ast F_0,\alpha]~,
\end{flalign}
since $A_0$ is on-shell. We define the differential operator $P$ on
 $\Omega^1(M,\mathfrak{g})\simeq \Gamma^{\infty}(V)$, 
\begin{flalign}
P: \Omega^1(M,\mathfrak{g}) \to  \Omega^1(M,\mathfrak{g})~,~~
\alpha\mapsto P \alpha = \delta^{A_0}\dd^{A_0} \alpha - \ast[\ast F_0,\alpha]~.
\end{flalign}
It is formally self-adjoint with respect to $\ip{~}{~}_V^{}$.

The gauge invariance of the full (not linearised) theory is given by
 transformations $A\mapsto  A + \dd^A\epsilon$ labelled by $\epsilon\in C^\infty(M,\mathfrak{g})$.
 Notice that $C^\infty(M,\mathfrak{g})\simeq \Gamma^\infty(W)$. If we linearise the gauge transformations
 we obtain for all $\epsilon\in C^\infty(M,\mathfrak{g}) $ the transformation law $\alpha\mapsto 
\alpha + \dd^{A_0}\epsilon$. Let us define the  operator $K$ by
 \begin{flalign}
 K:C^\infty(M,\mathfrak{g})\to \Omega^{1}(M,\mathfrak{g})~,~~\epsilon\mapsto K\epsilon = \dd^{A_0}\epsilon~.
 \end{flalign} 
 It is a standard calculation to check that $P\circ K =0$, provided the background Yang-Mills field $A_0$ is
 on-shell, i.e.~$\delta^{A_0}F_0=0$.
 
We define further the operator 
\begin{flalign}
T: C^\infty(M,\mathfrak{g})\to \Omega^1(M,\mathfrak{g})~,~~\eta \mapsto T\eta = \dd^{A_0}\eta~.
\end{flalign}
Notice that $T=K$ and that $\widetilde{P} := P + T\circ K^\dagger =  \delta^{A_0}\circ \dd^{A_0} + \dd^{A_0}\circ\delta^{A_0} - \ast[\ast F_0,\,\cdot\,]$ is normally hyperbolic and thus in particular Green-hyperbolic.
We further obtain  $Q:= K^\dagger\circ T = \delta^{A_0}\circ \dd^{A_0}$, which is a normally hyperbolic
operator on $C^\infty(M,\mathfrak{g})$ and thus in particular Green-hyperbolic.
The operator $R:=K^\dagger\circ K=\delta^{A_0}\circ\dd^{A_0}$ agrees with $Q$ and is Cauchy-hyperbolic.
 
 This shows that the linearised Yang-Mills field on a trivial $\mathfrak{g}$-bundle is a bosonic classical gauge field theory according
 to Definition \ref{def:gaugefield}. 
\end{ex}
\begin{ex}[Linearised general relativity]\label{ex:gravity1}
The case of linearised $D{=}4$ general relativity in presence of a cosmological constant $\Lambda$ has been recently
studied in detail by Fewster and Hunt \cite{Fewster:2012bj}. We briefly show that this theory is a bosonic classical gauge
field theory according to Definition \ref{def:gaugefield} and refer to  \cite{Fewster:2012bj} for more details.
As in this paper we restrict ourselves to $D{=}4$ and employ a tensor index notation to simplify readability.

Let $M$ be a globally hyperbolic spacetime of dimension $D{=}4$. Let further $W:=T^\ast M$ be the cotangent bundle
and $V := \bigvee^2 T^\ast M $ be the bundle of symmetric contravariant tensors of rank $2$. The metric 
$g_{\mu\nu}\in \Gamma^\infty(V)$ of the globally hyperbolic spacetime
 $M$ is assumed to be a solution of the vacuum Einstein equations
$R_{\mu\nu} = \Lambda\,g_{\mu\nu}$, with $R_{\mu\nu}$ denoting the Ricci tensor of $g_{\mu\nu}$. 
We equip $W$ with the canonical 
non-degenerate symmetric bilinear form $\ip{~}{~}_W^{}$ induced by the inverse metric $g^{\mu\nu}$ 
on $M$ and $V$ with the non-degenerate symmetric bilinear form
\begin{flalign}
\ip{f}{h}_V = \overline{f}^{\mu\nu}\,h_{\mu\nu} = g^{\mu\rho}g^{\nu\sigma}\big(f_{\mu\nu}-\frac{1}{2}g_{\mu\nu}\,f\big)\,h_{\rho\sigma} 
= f^{\mu\nu}h_{\mu\nu} - \frac{1}{2}\,f\,h~,
\end{flalign}
where $f=f^\mu_\mu = g^{\mu\nu}f_{\mu\nu}$ is the trace and $\overline{\,\cdot\,}$ is called the trace-reversal operation.

Let us consider fluctuations $g_{\mu\nu}+\epsilon_{\mu\nu}$, with $\epsilon_{\mu\nu}
\in\Gamma^\infty(V)$, of the background metric. 
The equation of motion operator obtained by linearising the vacuum Einstein equations reads for the 
trace-reversed metric fluctuations $h_{\mu\nu} := \overline{\epsilon}_{\mu\nu} = \epsilon_{\mu\nu} -
\frac{1}{2}g_{\mu\nu} \,\epsilon$  
\begin{flalign}
P:\Gamma^\infty(V)\to\Gamma^\infty(V)~,~~h_{\mu\nu}\mapsto (Ph)_{\mu\nu} = g_{\mu\nu}\nabla^\rho\nabla^\sigma h_{\rho\sigma}+\square h_{\mu\nu} +2 \Lambda\,h_{\mu\nu} -2 \nabla^\rho\nabla_{(\mu}h_{\nu)\rho}~,
\end{flalign}
where $\nabla$ denotes the Levi-Civita connection corresponding to $g_{\mu\nu}$
 and $\square = \nabla^\mu\nabla_\mu= g^{\mu\nu}\nabla_\mu\nabla_\nu$ the 
d'Alembert operator. The parenthesis $(~)$ denotes symmetrisation of weight one.
It can be checked that $P$ is formally self-adjoint with respect to  $\ip{~}{~}_V$.

The gauge invariance of linearised general relativity is governed by the operator 
\begin{flalign}
K:\Gamma^\infty(W)\to \Gamma^{\infty}(V)~,~~w_\mu\mapsto (Kw)_{\mu\nu} =  \overline{\nabla_{(\mu} w_{\nu)}} = \nabla_{(\mu}w_{\nu)} -\frac{1}{2} g_{\mu\nu}\nabla^\rho w_\rho~.
\end{flalign}
The property $P\circ K =0$, which holds for backgrounds satisfying the on-shell condition $R_{\mu\nu}=\Lambda\,g_{\mu\nu}$, 
has already been verified in \cite{Fewster:2012bj}, see also \cite{Stewart:1974uz}. 
More precisely, the operators $P_{\text{FH}}$ and $K_{\text{FH}}$ of Fewster and Hunt are related
to ours by $P=-2\,P_{\text{FH}}\circ \overline{\,\cdot\,}$ and $K =\frac{1}{2}\, \overline{\,\cdot\,}\circ K_{\text{FH}}$
and from $P_{\text{FH}}\circ K_{\text{FH}} =0$ it follows $P\circ K = -P_{\text{FH}}\circ \overline{\,\cdot\,} \circ\overline{\,\cdot\,}\circ K_{\text{FH}} =  -P_{\text{FH}}\circ K_{\text{FH}} =0$, since the trace-reversal squares to the identity.
The formal adjoint of $K$ is given by, for all $h_{\mu\nu}\in \Gamma^\infty(V)$, $(K^\dagger h)_\mu =- \nabla^\nu h_{\mu\nu}$.

Let us further define the operator
\begin{flalign}
T:\Gamma^\infty(W)\to\Gamma^\infty(V)~,~~w_\mu\mapsto (Tw)_{\mu\nu} =-2 (Kw)_{\mu\nu} =- 2\left(\nabla_{(\mu} w_{\nu)}-\frac{1}{2}g_{\mu\nu} \nabla^\rho w_\rho  \right) ~.
\end{flalign}
For $\widetilde{P}:=P+ T\circ K^\dagger$  we obtain
\begin{flalign}
\widetilde{P}: \Gamma^\infty(V)\to\Gamma^\infty(V) ~,~~h_{\mu\nu}\mapsto (\widetilde{P}h)_{\mu\nu} =  \square h_{\mu\nu} -2\,R^{\rho~~\sigma}_{~\mu\nu~}h_{\rho\sigma}~,
\end{flalign}
where $R^{\rho~~\sigma}_{~\mu\nu~}$ is the Riemann tensor. This is a normally hyperbolic operator and thus in particular
Green-hyperbolic.
For $Q:= K^\dagger\circ T$ we obtain
\begin{flalign}
Q:\Gamma^\infty(W)\to\Gamma^\infty(W)~,~w_\mu \mapsto (Qw)_\mu = \square w_\mu +\Lambda w_\mu~,
\end{flalign}
which is also a normally hyperbolic operator and thus in particular Green-hyperbolic. 
The operator $R:=K^\dagger\circ K= -\frac{1}{2} Q$ is a multiple of a normally hyperbolic operator and in particular
 Cauchy-hyperbolic.

This shows that linearised general relativity in presence of a cosmological constant is a bosonic classical gauge field
theory according to Definition \ref{def:gaugefield}.
\end{ex}
\begin{ex}[Toy model: Fermionic gauge field]\label{ex:toy1}
Before introducing the Rarita-Schwinger gauge field as an example of a fermionic gauge field theory
in Example \ref{ex:RS1} we first discuss a simple toy model.

Let $M$ be a globally hyperbolic spacetime and let $\big(\bbR^{2m},\Omega\big)$, with $m\in\bbN$, be 
the symplectic vector space of dimension $2m$, 
i.e.~$\Omega$ is a non-degenerate antisymmetric $2m\times 2m$-matrix.
We define $W:=M\times \bbR^{2m}$ to be the trivial vector bundle 
and equip it with the non-degenerate antisymmetric bilinear form $\ip{~}{~}_W^{}$ induced from the
 symplectic structure on the typical fibre, for all $w_1,w_2\in \bbR^{2m}$,
\begin{flalign}
\ip{w_1}{w_2}_\Omega^{} := w_1^{\mathrm{T}}\Omega w_2~.
\end{flalign}
We further define $V:= W\otimes T^\ast M$, where $T^\ast M$ is the cotangent bundle, and equip it with the 
non-degenerate antisymmetric bilinear form $\ip{~}{~}_V^{}$ given by the product of $\ip{~}{~}_W^{}$ and
the inverse metric $g^{-1}$ on $M$.
The $C^\infty(M)$-module of sections $\Gamma^\infty(W)$ is isomorphic to the $C^\infty(M)$-module
$C^\infty(M,\bbR^{2m})$ and $\Gamma^\infty(V)$ is isomorphic to the $\bbR^{2m}$-valued one-forms 
$\Omega^1(M,\bbR^{2m})$.

We define the operator
\begin{flalign}
P:\Omega^1(M,\bbR^{2m})\to \Omega^1(M,\bbR^{2m})~,~~\alpha\mapsto P\alpha = \delta\dd\alpha~,
\end{flalign} 
which is formally self-adjoint with respect to $\ip{~}{~}_V^{}$. We further define
\begin{flalign}
K:C^\infty(M,\bbR^{2m})\to \Omega^1(M,\bbR^{2m})~,~~\epsilon \mapsto K\epsilon =\dd \epsilon~.
\end{flalign}
It obviously holds $P\circ K =0$ and the formal adjoint of $K$ is $K^\dagger =\delta$.
Defining the operator
\begin{flalign}
T:C^\infty(M,\bbR^{2m})\to \Omega^1(M,\bbR^{2m})~,~~\epsilon \mapsto T\epsilon =\dd \epsilon~,
\end{flalign}
we obtain that the operators $\widetilde{P}:= P + T\circ K^\dagger = \delta\circ\dd + \dd \circ \delta$ 
(on $\Omega^1(M,\bbR^{2m})$) and 
$Q:= K^\dagger\circ T = \delta\circ \dd$  (on $C^\infty(M,\bbR^{2m})$) are 
normally hyperbolic and thus in particular Green-hyperbolic.
Since $T=K$ we also have that $R:=K^\dagger\circ K=\delta\circ\dd$ is a normally hyperbolic operator on $C^\infty(M,\bbR^{2m})$
and in particular Cauchy-hyperbolic.

The six-tuple $\big(M,V,W,P,K,T\big)$ is thus a fermionic classical gauge field theory according to Definition 
\ref{def:gaugefield}.
\end{ex}
\begin{ex}[Rarita-Schwinger gauge field]\label{ex:RS1}
 Our model for the Rarita-Schwinger gauge field is inspired by $D{=}4$ simple supergravity, 
which we will briefly sketch. For details on supergravity we refer to \cite{VanNieuwenhuizen:1981ae,Nilles:1983ge,Wess:1992cp}.
The field content of this theory is the gravitational field, described by a vierbein $E$, and
the gravitino field $\Psi$. The action functional is given by a locally supersymmetric extension
of the Einstein-Hilbert action of general relativity.
Solutions of the corresponding equations of motion in a trivial gravitino background $\Psi=0$
are given by Ricci-flat Lorentzian manifolds $(M,g)$. We are interested in modelling 
linearised fluctuations of the gravitino field around these backgrounds.

As we have already seen in the Examples \ref{ex:YM1} and \ref{ex:gravity1}, the on-shell conditions
for the background fields are necessary to maintain gauge invariance of the linearised gauge field theory.
Thus, we are forced to assume that $M$ is a globally hyperbolic spacetime which is Ricci-flat and
 equipped with a spin structure. We take $D\text{~mod~}8 =2,3,4$ in order to have a suitable Majorana condition
 available, see Appendix \ref{app:conventions} for our spinor
 conventions.
The Rarita-Schwinger gauge field on more general spacetimes requires the coupling 
of supergravity to matter fields and will be discussed elsewhere.
We also assume that $D\geq 3$ to have a non-trivial equation of motion for 
the gravitino (otherwise the $\gamma^{\mu\nu\rho}$ defined below is
trivial; note that this is well in accord with the fact that gravity
in $D{=}2$ is not dynamical).
We define $W:= DM_\bbR$ to be the Majorana bundle (see Example \ref{ex:Maj1})
 and $V:= DM_\bbR\otimes T^\ast M$, where $T^\ast M$ denotes the  cotangent bundle.
 We equip $W$ with the canonical non-degenerate antisymmetric bilinear form $\ip{~}{~}_W^{}$, see
 (\ref{eqn:majprod}) for an expression on the typical fibre. It is convenient not to use  the
 supergravity gravitino $\Psi\in \Gamma^\infty(V)$ (linearised around the trivial configuration)
as the dynamical degrees of freedom, but to do a field redefinition instead. This is similar to the trace-reversal
 we have used in Example \ref{ex:gravity1}. Using the $\gamma$-section
  $\gamma\in\Gamma^\infty\big(TM\otimes\End(DM_\bbR)\big)$ we define the
   linear map $\widetilde{\,\cdot\,} :\Gamma^\infty(V)\to\Gamma^\infty(V)$,
  which is given in local coordinates by, for all $\psi\in\Gamma^\infty(V)$,
  $\widetilde{\psi}_\mu  := \psi_\mu -\frac{1}{D-2} \gamma_\mu \,\gamma^\nu\psi_\nu$,
  where $\gamma_\mu =  g_{\mu\rho}\,\gamma^\rho$. Notice that $\gamma^\mu\widetilde{\psi}_\mu 
  = -\frac{2}{D-2}\gamma^\mu\psi_\mu$
     and that $\widetilde{\,\cdot\,}$ is invertible via $\widetilde{\,\cdot\,}^{-1}$ given locally  by
     $\widetilde{\psi}^{-1}_\mu = \psi_\mu-\frac{1}{2}\gamma_\mu\gamma^\nu\psi_\nu$.
     We define the Rarita-Schwinger gauge field $\psi\in\Gamma^\infty(V)$ by the equation $\Psi = \widetilde{\psi}$, 
    where $\Psi\in\Gamma^\infty(V)$ is the linearised supergravity gravitino field.
 We equip $V$ with the non-degenerate bilinear form $\ip{~}{~}_V^{}$, which reads in local coordinates
 \begin{flalign}
 \ip{\psi_1}{\psi_2}_{V}^{} := \ip{\widetilde{\psi_1}_\mu}{\psi_2^\mu}_W^{} = 
 \ip{{\psi_1}_\mu}{\psi_2^\mu}_W^{} +\frac{1}{D-2} \ip{\gamma^\mu{\psi_1}_\mu}{\gamma^\nu{\psi_2}_\nu}_W^{}~.
 \end{flalign}
 Notice that $\ip{~}{~}_V^{}$ is antisymmetric. 
 
 The equation of motion for the linearised supergravity gravitino field $\Psi\in\Gamma^\infty(V)$ is 
  obtained by the supergravity action and it is given by the massless Rarita-Schwinger equation, which reads in local coordinates
 $\gamma^{\mu\nu\rho}\nabla_\nu\Psi_\rho =0$, where $\gamma^{\mu\nu\rho} = \gamma^{[\mu}\gamma^\nu\gamma^{\rho]}$,
 the parenthesis $[~]$ denotes antisymmetrisation of weight one and $\nabla$ is the connection on $V=DM_\bbR\otimes T^\ast M$
 induced by the Levi-Civita connection. For the redefined
 degrees of freedom $\psi\in\Gamma^\infty(V)$ with $\Psi =\widetilde{\psi}$ the dynamics
 is governed by the equation of motion operator $P$, given in local coordinates by
 \begin{flalign}\label{eqn:RSEOM}
 P:\Gamma^\infty(V)\to\Gamma^\infty(V)~,~~\psi_\mu \mapsto (P\psi)_\mu =
  \slashed{\nabla}\psi_\mu-\gamma_\mu\nabla^\nu\psi_\nu~. 
 \end{flalign} 
This operator is formally self-adjoint with respect to $\ip{~}{~}_V^{}$.
 
 The linearised local supersymmetry transformations act on the supergravity gravitino field $\Psi\in\Gamma^\infty(V)$
 by $\Psi_\mu \mapsto \Psi_\mu +\nabla_\mu\epsilon $, where
  $\epsilon\in\Gamma^\infty(W)$. For the redefined
  degrees of freedom $\psi\in\Gamma^\infty(V)$ with $\Psi = \widetilde{\psi}$ we
 obtain the operator $K$, given in local coordinates by
 \begin{flalign}
 K:\Gamma^\infty(W)\to\Gamma^\infty(V)~,~~\epsilon\mapsto (K\epsilon)_\mu = 
  \widetilde{\nabla_\mu\epsilon}^{-1}=
  \nabla_\mu\epsilon -\frac{1}{2}\gamma_\mu \slashed{\nabla}\epsilon~.
 \end{flalign}
 By a standard calculation one checks that $P\circ K =0$ if and only if the metric $g$ is Ricci-flat, which was exactly
 the on-shell condition imposed by supergravity. The formal adjoint of $K$ is given by, for all 
 $f\in\Gamma^\infty(V)$, $K^\dagger f = -\nabla^\mu f_\mu$.
 
 Let us further define the operator
 \begin{flalign}
 T:\Gamma^\infty(W)\to\Gamma^\infty(V)~,~~f\mapsto (Tf)_\mu = -\gamma_\mu f~.
 \end{flalign}
 Then $\widetilde{P}:= P+T\circ K^\dagger$ is simply the (twisted) Dirac operator on $V$, given in local coordinates
 by $(\widetilde{P} \psi)_\mu = \slashed{\nabla}\psi_\mu$. We further find that the operator 
 $Q:= K^\dagger\circ T$ is the Dirac operator on $W$ (remember that the section $\gamma$ is covariantly constant).
 These operators are of Dirac-type and thus in particular Green-hyperbolic. 
 For the operator $R:=K^\dagger\circ K$ we find, for all $\epsilon\in\Gamma^\infty(W)$,
 $R\epsilon= -\frac{1}{2}\nabla^\mu\nabla_\mu\epsilon$, where we have used that the metric $g$ is Ricci-flat.
 This is up to a constant prefactor a normally hyperbolic operator and thus in particular Cauchy-hyperbolic.
 
 This shows that the Rarita-Schwinger gauge field is a fermionic classical gauge field theory according
 to Definition \ref{def:gaugefield}. 
\end{ex}
\sk\sk

We collect important properties of classical gauge field theories which follow from
the Definition \ref{def:gaugefield} and will be required later for the construction and analysis of the algebra of observables.
Before, we have to introduce some notations:
\begin{defi}
Let $\big(M,V,W,P,K,T\big)$ be a classical gauge field theory. We define the following spaces:
\begin{itemize}
\item $\Ker_0(K^\dagger) := \big\{h\in\Gamma^\infty_0(V) : K^\dagger h=0\big\}$
\item $\Sol := \big\{f\in \Gamma^\infty_\sc(V): Pf=0\big\}$
\item $\mathcal{G} := K[\Gamma^\infty_\sc(W)] := \big\{ Kh : h\in\Gamma^\infty_\sc(W)\big\}$
\item $\widehat{\mathcal{G}}:= K[\Gamma^\infty(W)] \cap \Gamma^\infty_\sc(V)=\big\{Kh\in\Gamma^\infty_\sc(V):h\in\Gamma^\infty(W)\big\}$
\end{itemize}
Notice that $\mathcal{G}\subseteq \widehat{\mathcal{G}}\subseteq \Sol$, where the last inclusion is due to $P\circ K=0$.
We say that $\psi,\psi^\prime\in\Gamma^\infty_\sc(V)$ are {\textit{$\mathcal{G}$-gauge equivalent}}, if there exists a
$K \epsilon \in \mathcal{G}$ such that $\psi^\prime = \psi+K \epsilon$. 
Analogously, we say that $\psi,\psi^\prime\in\Gamma^\infty_\sc(V)$ are {\textit{$\widehat{\mathcal{G}}$-gauge equivalent}},
 if there exists a $K \epsilon \in \widehat{\mathcal{G}}$ such that $\psi^\prime = \psi+K \epsilon$. Since the inclusion
 $\mathcal{G}\subseteq \widehat{\mathcal{G}}$ holds true, $\mathcal{G}$-gauge equivalence 
 implies $\widehat{\mathcal{G}}$-gauge equivalence.
\end{defi}
\sk

\begin{theo}\label{theo:gaugeproperties}
Let $\big(M,V,W,P,K,T\big)$ be a classical gauge field theory with $\widetilde{P}:= P+T\circ K^\dagger$,
$Q:=K^\dagger\circ T$ and $R:=K^\dagger\circ K$. Let us denote by 
$G_\pm^{\widetilde{P}}: \Gamma^\infty_0(V)\to\Gamma^\infty(V)$ 
the retarded/advanced Green's operators for $\widetilde{P}$. 
In case of $K\neq 0$ we denote by $G_\pm^Q,G_\pm^{R}:\Gamma_0^\infty(W)\to\Gamma^\infty(W)$ the
 retarded/advanced Green's operators
for $Q$ and $R$, respectively. Then the following hold true:
\begin{itemize}
\item[(i)] $K^\dagger\circ \widetilde{P} = Q\circ K^\dagger$  and $\widetilde{P}\circ K = T\circ R$.
\item[(ii)] If $K\neq 0$, then $K^\dagger\circ G_\pm^{\widetilde{P}} = G_\pm^Q\circ K^\dagger$ on $\Gamma^\infty_0(V)$
and $K\circ G_{\pm}^{R} = G^{\widetilde{P}}_\pm\circ T$ on $\Gamma^\infty_0(W)$.
\item[(iii)] $G^{\widetilde{P}} := G^{\widetilde{P}}_+ - G^{\widetilde{P}}_-$ satisfies, for all $f,h\in \Ker_0(K^\dagger)$,
\begin{flalign}\label{eqn:skewker}
\ip{f}{G^{\widetilde{P}}h}_{\Gamma(V)}^{} = -\ip{G^{\widetilde{P}}f}{h}_{\Gamma(V)}^{}~.
\end{flalign}
That is, $G^{\widetilde{P}}$ is formally skew-adjoint with respect to $\ip{~}{~}_V^{}$ on the kernel $\Ker_0(K^\dagger)\subseteq \Gamma^\infty_0(V)$.
\item[(iv)] Any $\psi\in\Gamma^\infty_\sc(V)$ is $\mathcal{G}$-gauge equivalent to a $\psi^\prime \in\Gamma^\infty_\sc(V)$
satisfying $K^\dagger \psi^\prime =0$. 

 In particular, any $\psi\in\Sol$ is $\mathcal{G}$-gauge equivalent to a $\psi^\prime\in \Sol$ satisfying $K^\dagger \psi^\prime =0$
and thus also $\widetilde{P}\psi^\prime =0$.
\item[(v)] Any $\psi\in\Sol$ satisfying $K^\dagger \psi=0$ is $\mathcal{G}$-gauge equivalent
to $G^{\widetilde{P}}h$ for some $h\in\Ker_0(K^\dagger)$.
\item[(vi)] 
Let $h\in\Ker_0(K^\dagger)$, then $G^{\widetilde{P}}h\in\mathcal{G}$  
 if and only if $h\in P[\Gamma^\infty_0(V)]$.
\item[(vii)] 
Let $T^\prime :\Gamma^\infty(W)\to \Gamma^\infty(V)$ be an arbitrary differential operator 
such that replacing $T$ by $T^\prime$ is a classical gauge field theory and let
 $\widetilde{P}^\prime := P+T^\prime\circ K$. Then 
 $\ip{f}{G^{\widetilde{P}^\prime}_\pm h}_{\Gamma(V)^{}}=\ip{f}{G^{\widetilde{P}}_\pm h}_{\Gamma(V)}^{}$, 
 for all $f,h\in \Ker_0(K^\dagger)$.
\end{itemize}
\end{theo}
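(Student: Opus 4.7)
The plan is to first establish (i) and (ii) as the algebraic/analytic toolkit, then exploit one recurring observation for everything else: whenever $h\in\Ker_0(K^\dagger)$, the section $u_\pm := G_\pm^{\widetilde P} h$ satisfies $K^\dagger u_\pm = G_\pm^Q K^\dagger h = 0$ by (ii), so $\widetilde P u_\pm = h$ collapses to $P u_\pm = h$. In effect, on the gauge surface $\Ker_0(K^\dagger)$ the Green's operators of $\widetilde P$ automatically implement the gauge fixing and the $T$-term in $\widetilde P$ becomes invisible. For (i) itself, I expand $K^\dagger\widetilde P = K^\dagger P + K^\dagger T K^\dagger$ and note $K^\dagger P = (PK)^\dagger = 0$ via $P^\dagger = P$ and $PK=0$; analogously $\widetilde P K = TR$. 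For (ii), both $K^\dagger G_\pm^{\widetilde P} f$ and $G_\pm^Q K^\dagger f$ have past/future compact support and satisfy $Q(\cdot) = K^\dagger f$ by (i), so uniqueness of past/future compact solutions for the Green-hyperbolic operator $Q$ forces equality; the identity $KG_\pm^R = G_\pm^{\widetilde P} T$ is symmetric, using $\widetilde P K = TR$ and uniqueness for $\widetilde P$.

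For (iii), set $v_\pm = G_\pm^{\widetilde P} f$ and $u_\pm = G_\pm^{\widetilde P} h$; by the recurring observation $K^\dagger v_\pm = K^\dagger u_\pm = 0$ and $P v_\pm = f$, $P u_\pm = h$. Taking opposite signs, $J^\pm(\supp f)\cap J^\mp(\supp h)$ is compact by global hyperbolicity, so integration by parts is legitimate and gives $\ip{f}{u_\mp}_{\Gamma(V)}^{} = \ip{\widetilde P v_\pm}{u_\mp}_{\Gamma(V)}^{} = \ip{v_\pm}{\widetilde P^\dagger u_\mp}_{\Gamma(V)}^{}$ with $\widetilde P^\dagger = P + KT^\dagger$. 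The $P$-piece equals $\ip{v_\pm}{h}_{\Gamma(V)}^{}$ because $Pu_\mp = h$, while the $KT^\dagger$-piece rewrites as $\ip{K^\dagger v_\pm}{T^\dagger u_\mp}_{\Gamma(W)}^{} = 0$. Subtracting the two sign choices produces the antisymmetry of $G^{\widetilde P}$ on $\Ker_0(K^\dagger)$. For (iv), I solve $R\epsilon = -K^\dagger\psi\in\Gamma^\infty_\sc(W)$ with $\epsilon\in\Gamma^\infty_\sc(W)$ via the well-posed Cauchy problem for $R$ (trivial Cauchy data, say); then $\psi+K\epsilon$ lies in $\Ker(K^\dagger)$ and, if $\psi\in\Sol$, stays in $\Sol$ because $PK=0$.

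For (v), apply Theorem~\ref{theo:complex} to $\widetilde P$ to write $\psi = G^{\widetilde P} h$ with $h\in\Gamma_0^\infty(V)$; by (ii), $0 = K^\dagger\psi = G^Q K^\dagger h$, so exactness for $Q$ yields $K^\dagger h = Qg'$ for some $g'\in\Gamma_0^\infty(W)$, and replacing $h$ by $h - Tg'\in\Ker_0(K^\dagger)$ shifts $\psi$ by $KG^R g'\in\mathcal{G}$ via (ii). For (vi), the ``$\Leftarrow$'' direction computes $G^{\widetilde P}(Pf) = G^{\widetilde P}(\widetilde P f - TK^\dagger f) = -KG^R K^\dagger f\in\mathcal{G}$ using exactness and (ii); the ``$\Rightarrow$'' direction starts from $G^{\widetilde P} h = K\epsilon$, derives $R\epsilon = K^\dagger G^{\widetilde P} h = G^Q K^\dagger h = 0$, uses exactness for $R$ to write $\epsilon = G^R g$ with $g\in\Gamma_0^\infty(W)$, then $G^{\widetilde P}(h - Tg) = 0$ gives $h = Pf + T(g + K^\dagger f)$ for some $f\in\Gamma_0^\infty(V)$, and finally applying $K^\dagger$ yields $Q(g + K^\dagger f) = 0$; injectivity of $Q$ on $\Gamma_0^\infty(W)$ forces $g + K^\dagger f = 0$ and hence $h = Pf$. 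Item (vii) is immediate from the recurring observation applied to both $\widetilde P$ and $\widetilde P'$: both $G_\pm^{\widetilde P}h$ and $G_\pm^{\widetilde P'}h$ solve $\widetilde P(\cdot) = h$ with past/future compact support, so uniqueness of Green's operators forces them equal, which is even stronger than the claimed pairing identity.

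The principal obstacle is (iii): because $\widetilde P - \widetilde P^\dagger = TK^\dagger - KT^\dagger$ is nonzero in general, Lemma~\ref{lem:greenadjoint} does not directly give antisymmetry of $G^{\widetilde P}$. The decisive insight is that on $\Ker_0(K^\dagger)$ the operator $\widetilde P$ effectively behaves like the formally self-adjoint $P$, because all offending $K^\dagger$-terms annihilate the sections propagated by $G_\pm^{\widetilde P}$; once this reduction is in hand, the remaining items (iv)--(vii) become bookkeeping with (ii) and the exactness of Theorem~\ref{theo:complex}.
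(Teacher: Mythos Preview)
Your proof is correct and for items (i), (iii)--(vi) it is essentially identical to the paper's argument, including the key reduction that $K^\dagger G_\pm^{\widetilde P}h=0$ for $h\in\Ker_0(K^\dagger)$ so that $\widetilde P$ collapses to $P$ on propagated sections.

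The genuine difference lies in how you handle (ii) and (vii). The paper proves (ii) by a weak argument: it pairs $K^\dagger G_\pm^{\widetilde P}f$ against an arbitrary test section $h$, inserts $Q^\dagger G_\mp^{Q^\dagger}h=h$, uses the intertwining relation (i) to move $Q$ past $K^\dagger$, and concludes via non-degeneracy of $\ip{~}{~}_W^{}$ and Lemma~\ref{lem:greenadjoint}. For (vii) the paper likewise works at the level of pairings, computing $\ip{f}{G_\pm^{\widetilde P}h}=\ip{G_\mp^{\widetilde P}f}{\widetilde P' G_\pm^{\widetilde P'}h}$ and splitting off the $(T'-T)K^\dagger$-term which vanishes by (ii). You instead invoke directly that a Green-hyperbolic operator admits no nontrivial solution with past-compact (or future-compact) support: in (ii) both candidates solve $Q(\cdot)=K^\dagger f$ with support in $J^\pm(\supp f)$, and in (vii) both $G_\pm^{\widetilde P}h$ and $G_\pm^{\widetilde P'}h$ solve $\widetilde P(\cdot)=h$ with the right support, so they coincide as sections. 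This uniqueness fact is the content behind the paper's remark on uniqueness of Green's operators (and is proved by the very same pairing trick with $G_\mp^{P^\dagger}$), so your route is legitimate; it is conceptually cleaner and in (vii) yields the strictly stronger conclusion $G_\pm^{\widetilde P}h=G_\pm^{\widetilde P'}h$ on $\Ker_0(K^\dagger)$, at the cost of citing a lemma that the paper does not state in that generality.
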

\begin{proof}
Proof of (i): Since $P$ is formally self-adjoint and $P\circ K =0$ we obtain $K^\dagger\circ P =0$. It follows
$K^\dagger \circ\widetilde{P} =  K^\dagger\circ T\circ K^\dagger = Q\circ K^\dagger$
and $\widetilde{P}\circ K = T\circ K^\dagger \circ K=T\circ R$.
\sk

Proof of (ii):  Using (i) we obtain,  for all $h\in\Gamma^\infty_0(W)$ and $f\in\Gamma^\infty_0(V)$,
\begin{flalign}
\nn \ip{h}{K^\dagger G_\pm^{\widetilde{P}}f}_{\Gamma(W)}^{} &= \ip{Q^\dagger G_\mp^{Q^\dagger}h}{K^\dagger G_\pm^{\widetilde{P}}f}_{\Gamma(W)}^{} 
= \ip{G_\mp^{Q^\dagger}h}{Q K^\dagger G_\pm^{\widetilde{P}}f}_{\Gamma(W)}^{} \\
&= \ip{G_\mp^{Q^\dagger}h}{K^\dagger\widetilde{P} G_\pm^{\widetilde{P}}f}_{\Gamma(W)}^{} 
=\ip{G_\mp^{Q^\dagger}h}{K^\dagger f}_{\Gamma(W)}^{} = \ip{h}{G_\pm^Q K^\dagger f}_{\Gamma(W)}^{}~,
\end{flalign}
where we also have used Lemma \ref{lem:greenadjoint} in the last equality.
The hypothesis now follows from the non-degeneracy of $\ip{~}{~}_W^{}$.
The other identity is proven analogously.
\sk

Proof of (iii): For $K=0$ we have $\widetilde{P}=P$ and the hypothesis follows from the fact that $P$ was assumed to
be formally self-adjoint and Lemma \ref{lem:greenadjoint}. Let us now assume that $K\neq 0$ and consider
 $f,h\in\Ker_0(K^\dagger)$. From (ii) we obtain 
$K^\dagger G^{\widetilde{P}}_\pm f = G^Q_\pm K^\dagger f=0$ and similarly $K^\dagger G^{\widetilde{P}}_\pm h=0$.
Thus,
\begin{flalign}
\nn \ip{f}{G^{\widetilde{P}}_\pm h}_{\Gamma(V)} &= \ip{\widetilde{P} G^{\widetilde{P}}_\mp f}{G^{\widetilde{P}}_\pm h}_{\Gamma(V)}
=\ip{P G^{\widetilde{P}}_\mp f}{G^{\widetilde{P}}_\pm h}_{\Gamma(V)} = \ip{G^{\widetilde{P}}_\mp f}{PG^{\widetilde{P}}_\pm h}_{\Gamma(V)}\\
&=\ip{G^{\widetilde{P}}_\mp f}{\widetilde{P}G^{\widetilde{P}}_\pm h}_{\Gamma(V)} = \ip{G^{\widetilde{P}}_\mp f}{ h}_{\Gamma(V)}~,
\end{flalign}
where we have used in the second and fourth equality that on $\Ker(K^\dagger)$ the operator $\widetilde{P}$ equals $P$
and in the third equality that $P$ is formally self-adjoint. This in particular shows (\ref{eqn:skewker}).
\sk

Proof of (iv): Let $\psi\in\Gamma^\infty_\sc(V)$ be arbitrary and let $\epsilon\in\Gamma_\sc^\infty(W)$.
We define $\psi^\prime := \psi + K\epsilon$ and obtain from the condition $K^\dagger \psi^\prime =0$ the equation
$K^\dagger K \epsilon = - K^\dagger \psi$. Since $K^\dagger \psi\in\Gamma^\infty_\sc(W)$ and $R=K^\dagger \circ K$ was assumed to
be Cauchy-hyperbolic this equation has a solution $\epsilon\in\Gamma^\infty_\sc (W)$, see \cite[Chapter 3, Corollary 5]{Bar:2009zzb}
for a discussion of how to treat inhomogeneities of non-compact support.
It then holds that $\psi^\prime = \psi+K\epsilon \in \Gamma_\sc^\infty(V)$ with $K^\dagger \psi^\prime =0$
and $K\epsilon\in\mathcal{G}$.
\sk

Proof of (v): We first note that as a consequence of (ii) and Theorem \ref{theo:complex}
we obtain that $G^{\widetilde{P}}h$ with $h\in\Gamma^\infty_0(V)$
satisfies $K^\dagger G^{\widetilde{P}}h =G^QK^\dagger h=0$ if and only if $K^\dagger h\in Q[\Gamma^\infty_0(W)]$.

Let now $\psi\in \Sol$ be such that $K^\dagger \psi =0$. As a consequence,
$\widetilde{P}\psi =0$ and since $\widetilde{P}$ is Green-hyperbolic there is a $h\in\Gamma^\infty_0(V)$ such that
$\psi=G^{\widetilde{P}}h$, see Theorem \ref{theo:complex}. Due to the argument above, 
we have $K^\dagger h = Q k$ for some $k\in \Gamma^\infty_0(W)$.
Let us consider the following $\mathcal{G}$-gauge transformation
\begin{flalign}
\psi-K G^{R} k \stackrel{\text{(ii)}}{=} \psi - G^{\widetilde{P}}Tk = G^{\widetilde{P}}\big(h-Tk\big)~.
\end{flalign}
Defining $h^\prime := h-Tk$ we have shown that $\psi$ is $\mathcal{G}$-gauge equivalent to $G^{\widetilde{P}}h^\prime$
with $K^\dagger h^\prime = K^\dagger h-K^\dagger Tk = Qk-Qk=0$, i.e.~$h^\prime\in \Ker_0(K^\dagger)$.
\sk

Proof of (vi): If $h = Pf \in P[\Gamma^\infty_0(V)]$ 
then $G^{\widetilde{P}}h = G^{\widetilde{P}}Pf = -G^{\widetilde{P}}TK^\dagger f = -K G^{R}K^\dagger f$
is an element in $\mathcal{G}$. To show the other direction, let $h\in\Ker_0(K^\dagger)$ be such that there is
a $k\in\Gamma^\infty_\sc(W)$ satisfying $G^{\widetilde{P}}h =Kk$. It follows that 
$K^\dagger Kk =0$ and since $R=K^\dagger\circ K$ is assumed to by Cauchy-hyperbolic
 there is by Theorem \ref{theo:complex} an $f\in\Gamma^\infty_0(W)$ such that $k=G^{R}f$.
 Using (ii) we obtain $K k = K G^{R} f = G^{\widetilde{P}}Tf = G^{\widetilde{P}}h$,
 which implies $h-Tf =\widetilde{P}q$ for some $q\in\Gamma^\infty_0(V)$.
 The condition $K^\dagger h=0$ further leads us to $-K^\dagger Tf =QK^\dagger q$, 
 i.e.~$Q\big(K^\dagger q + f\big)=0$, and since $f$ and $q$ are of compact support we have by Theorem \ref{theo:complex}
 $f=-K^\dagger q$. Thus, $h=Tf + \widetilde{P}q = -T K^\dagger q + \widetilde{P}q = Pq$.

Proof of (vii): For arbitrary $f,h\in\Ker_0(K^\dagger)$ we compute using (iii) and (ii)
\begin{flalign}
\nn \ip{f}{G_\pm^{\widetilde{P}}h}_{\Gamma(V)}^{}&=\ip{G_\mp^{\widetilde{P}}f}{h}_{\Gamma(V)}^{}=\ip{G_\mp^{\widetilde{P}}f}{\widetilde{P}^\prime G_\pm^{\widetilde{P}^\prime}h}_{\Gamma(V)}^{}\\
&=\ip{G_\mp^{\widetilde{P}}f}{\widetilde P G_\pm^{\widetilde{P}^\prime}h}_{\Gamma(V)}^{}+\ip{G_\mp^{\widetilde{P}}f}{\big(T^\prime-T\big)K^\dagger G_\pm^{\widetilde{P}^\prime}h}_{\Gamma(V)}^{}=\ip{f}{G_\pm^{\widetilde{P}^\prime}h}_{\Gamma(V)}^{}~.
 \end{flalign}
\end{proof}


\section{\label{sec:algebra}Gauge invariant on-shell algebra of observables}
The goal of this section is to construct from the data of a classical gauge field theory 
$\big(M,V,W,P,K,T\big)$ a suitable quantum algebra of gauge invariant observables describing the quantized gauge field theory.
We will first review the quantization of bosonic and fermionic matter field theories
and then extend these constructions to gauge field theories. We again follow the spirit of \cite{Bar:2011iu},
where also more details on bosonic and fermionic quantization can be found. 

The strategy to quantize a bosonic (fermionic) matter field theory $\big(M,V,P\big)$ is to first associate
to it a suitable symplectic (inner product) space, which is then quantized in terms of a CCR (CAR) representation.
\begin{propo}\label{propo:mattervectorspace}
Let $\big(M,V,P\big)$ be a classical matter field theory, denote the Green's operators 
for $P$ by $G_\pm$ and $G:=G_+-G_-$.
We define the vector space $\EE:= \Gamma_0^{\infty}(V)/P[\Gamma_0^\infty(V)]$
and the bilinear map
\begin{flalign}\label{eqn:sigmamat}
\tau: \EE\times \EE \to\bbR ~,~~([f],[h])\mapsto \tau\big([f],[h]\big) =\ip{f}{Gh}_{\Gamma(V)}^{} = \int_M\vol\,\ip{f}{Gh}^{}_V~.
\end{flalign} 
Then the map $\tau$ is well-defined and weakly non-degenerate. If further $\big(M,V,P\big)$ 
is bosonic, then $\tau$ is antisymmetric, i.e.~$(\EE,\tau)$ is a symplectic vector space.
If $\big(M,V,P\big)$ is fermionic, then $\tau$ is symmetric, 
i.e.~$(\EE,\tau)$ is an (i.g.~indefinite) inner product space.
\end{propo}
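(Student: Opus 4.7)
The plan is to prove the three assertions (well-definedness, (anti)symmetry, weak non-degeneracy) separately, leveraging three tools already in hand: the formal self-adjointness of $P$, Lemma \ref{lem:greenadjoint}, and the complex of Theorem \ref{theo:complex}.

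First I would check that the right-hand side of (\ref{eqn:sigmamat}) makes sense and only depends on the equivalence classes. Since $f\in\Gamma^\infty_0(V)$ and $Gh\in\Gamma^\infty_\sc(V)$ by Theorem \ref{theo:complex}, their supports overlap in a compact set, so the integral is finite. For independence of representatives, if $f'=f+Pu$ with $u\in\Gamma^\infty_0(V)$, formal self-adjointness of $P$ gives $\langle Pu,Gh\rangle_{\Gamma(V)}=\langle u,PGh\rangle_{\Gamma(V)}$, which vanishes because $P\circ G=0$ on $\Gamma^\infty_0(V)$ by Theorem \ref{theo:complex}. The computation for $h'=h+Pv$ is symmetric and uses $G\circ P=0$ from the same complex.

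Next I would establish the (anti)symmetry of $\tau$ by reducing it to the (anti)symmetry of $\langle~,~\rangle_V$. Applying Lemma \ref{lem:greenadjoint} with $P^\dagger=P$ yields $\langle f,G_\pm h\rangle_{\Gamma(V)}=\langle G_\mp f,h\rangle_{\Gamma(V)}$, hence
\begin{flalign*}
\langle f,Gh\rangle_{\Gamma(V)}=\langle G_-f,h\rangle_{\Gamma(V)}-\langle G_+f,h\rangle_{\Gamma(V)}=-\langle Gf,h\rangle_{\Gamma(V)}~.
\end{flalign*}
Combined with $\langle Gf,h\rangle_{\Gamma(V)}=\pm\langle h,Gf\rangle_{\Gamma(V)}$ according to whether $\langle~,~\rangle_V$ is symmetric or antisymmetric, this immediately gives $\tau([f],[h])=\mp\tau([h],[f])$, producing the symplectic structure in the bosonic case and the symmetric inner product in the fermionic case.

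For weak non-degeneracy I would argue as follows: suppose $\tau([f],[h])=0$ for all $[h]\in\EE$. The displayed identity above rewrites this as $\langle Gf,h\rangle_{\Gamma(V)}=0$ for all $h\in\Gamma^\infty_0(V)$; since $\langle~,~\rangle_V$ is fibrewise non-degenerate and $h$ ranges over all compactly supported sections, the standard bump-function argument forces $Gf=0$. By exactness of the complex in Theorem \ref{theo:complex} at the middle $\Gamma^\infty_0(V)$, this means $f\in P[\Gamma^\infty_0(V)]$, i.e.\ $[f]=0$ in $\EE$.

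The arguments are essentially bookkeeping, so there is no real obstacle; the one point that requires a little care is that Lemma \ref{lem:greenadjoint} is formulated for $f,h\in\Gamma^\infty_0(V)$, so one must keep the manipulations within compactly supported sections before passing to the quotient, and one must invoke exactness at the correct position of the complex in Theorem \ref{theo:complex} both to justify well-definedness ($P\circ G=G\circ P=0$ on $\Gamma^\infty_0(V)$) and to conclude non-degeneracy ($\Ker(G)=P[\Gamma^\infty_0(V)]$).
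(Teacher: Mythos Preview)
Your proof is correct and follows essentially the same route as the paper: well-definedness via $G\circ P=0$ (and $P\circ G=0$) from Theorem \ref{theo:complex}, (anti)symmetry from the formal skew-adjointness of $G$ given by Lemma \ref{lem:greenadjoint} combined with the (anti)symmetry of $\ip{~}{~}_V^{}$, and weak non-degeneracy from fibrewise non-degeneracy of $\ip{~}{~}_V^{}$ together with exactness at the middle $\Gamma^\infty_0(V)$ in Theorem \ref{theo:complex}. The only cosmetic differences are that the paper handles well-definedness in the second slot and transfers it to the first via skew-adjointness (rather than checking both slots directly), and it runs the non-degeneracy argument in the second slot rather than the first.
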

\begin{proof}
The map $\tau$ is well-defined, since $G$ is formally skew-adjoint with respect to
$\ip{~}{~}_V^{}$ (see Lemma \ref{lem:greenadjoint}) and we have $G\circ P =0$ on $\Gamma^\infty_0(V)$.

We now show that $\tau$ is weakly non-degenerate. Notice that because of the non-degeneracy of
$\ip{~}{~}_V^{}$ the condition $\ip{f}{Gh}_{\Gamma(V)}^{} =0$, for all $f\in\Gamma^\infty_0(V)$, implies
$Gh=0$. By Theorem \ref{theo:complex} there exists $k\in \Gamma^\infty_0(V)$, such that
$h=Pk$, meaning that $[h]=[0]$. Thus, $\tau$ is weakly non-degenerate.

Using again the skew-adjointness of $G$ and the symmetry (antisymmetry) of $\ip{~}{~}_V^{}$ for 
a bosonic (fermionic) matter field theory we obtain, for all $[f],[h]\in \EE$,
\begin{flalign}
\tau\big([f],[h]\big) = \ip{f}{Gh}_{\Gamma(V)}^{} = -\ip{Gf}{h}_{\Gamma(V)}^{} = \mp \ip{h}{Gf}_{\Gamma(V)}^{} 
=\mp \,\tau\big([h],[f]\big)~,
\end{flalign}
where $-$ is for bosonic and $+$ for fermionic theories. 
\end{proof}
For a physically and also mathematically consistent quantization of fermionic field theories
we have to demand further a positivity condition on $\tau$. See the Remarks \ref{rem:degenerate} and
 \ref{rem:positive} below for a detailed comment.
\begin{defi}\label{def:matterpositive}
A fermionic classical matter field theory $\big(M,V,P\big)$ is of {\textit{positive type}} if
$(\EE,\tau)$ is a (real) pre-Hilbert space, i.e.~the map $\tau$ is positive definite.
\end{defi}
\sk
We provide examples of fermionic classical matter field theories of positive type in the next section.
Any bosonic classical matter field theory can be quantized in terms of a CCR-representation.
\begin{defi}\label{def:CCR}
A {\textit{CCR-representation}} of a symplectic vector space $(\EE,\tau)$ is a pair
$(\w,A)$, where $A$ is a unital $C^\ast$-algebra and  $\w:\EE\to A$ is a map satisfying:
\begin{itemize}
\item[(i)] $A=C^\ast(\w(\EE))$,
\item[(ii)] $\w(0)=\1$,
\item[(iii)] $\w(f)^\ast = \w(-f)$,
\item[(iv)] $\w(f+h) = e^{i\,\tau(f,h)/2}~\w(f)\,\w(h)$,
\end{itemize}
for all $f,h\in \EE$.
\end{defi}
\sk

Furthermore, any fermionic classical matter field theory of positive type can be quantized in terms of a CAR-representation.
\begin{defi}\label{def:CAR}
A (self-dual) {\textit{CAR-representation}} of a  pre-Hilbert space $(\EE,\tau)$ over $\bbR$ is
a pair $(\b,A)$, where $A$ is a unital $C^\ast$-algebra and $\b:\EE\to A$ is a linear map satisfying:
\begin{itemize}
\item[(i)] $A=C^\ast(\b(\EE))$,
\item[(ii)] $\b(f)^\ast = \b(f) $,
\item[(iii)] $\big\lbrace \b(f),\b(h) \big \rbrace = \tau(f,h)\,\1$,
\end{itemize} 
for all $f,h\in \EE$.
\end{defi}
\sk

The following theorem is proven in \cite{Bar:2011iu,Bar:2007zz}.
\begin{theo}\label{theo:uniqueness}
There exists up to $C^\ast$-isomorphism a unique CCR-representation
  (CAR-representation) for every symplectic vector space (pre-Hilbert space). 
\end{theo}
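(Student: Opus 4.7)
The plan is to split the argument into existence and uniqueness for both the CCR and CAR cases, proving existence by exhibiting a concrete realisation and uniqueness by showing that any two representations agree algebraically on the dense $\ast$-subalgebras generated by the images of $\w$ (respectively $\b$), then invoking the standard fact that a $\ast$-isomorphism between $C^\ast$-algebras is automatically isometric and hence extends uniquely to the completions.

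For existence in the CCR case, I would form the free complex vector space on the symbols $\{W(f):f\in\EE\}$, impose the relations (ii)--(iv) of Definition \ref{def:CCR} to obtain an associative unital $\ast$-algebra $\FF_0$, and then produce a faithful $\ast$-representation of $\FF_0$ on a Hilbert space. A standard way is to pick a compatible complex structure $J$ on $\EE$, consider the quasi-free state $\omega(W(f))=\exp(-\tau(f,Jf)/4)$, verify its positivity, and perform the GNS construction; defining the $C^\ast$-norm as the operator norm in this representation and taking the completion yields the required $C^\ast$-algebra. For existence in the CAR case, the analogous strategy uses the Clifford $\ast$-algebra generated by $\{B(f):f\in\EE\}$ subject to (ii)--(iii) of Definition \ref{def:CAR}, represented faithfully on the fermionic Fock space $\Lambda^\bullet(\EE\otimes_\bbR\bbC)$ built from a complex structure on $\EE$; the positivity of $\tau$ in Definition \ref{def:matterpositive} is exactly what is needed so that $\b(f)^2=\tfrac{1}{2}\tau(f,f)\,\1$ admits a $C^\ast$-norm, since then $\|\b(f)\|^2=\tfrac{1}{2}\tau(f,f)\geq 0$.

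For uniqueness, given two CCR-representations $(\w_i,A_i)$, $i=1,2$, I would define on the $\ast$-subalgebra $\FF_i\subseteq A_i$ linearly spanned by the products $\w_i(f_1)\cdots\w_i(f_n)$ a map by sending $\w_1(f)\mapsto \w_2(f)$ and extending $\bbC$-linearly. The Weyl relation (iv) reduces any such product to a scalar multiple of a single $\w_i(g)$, and the elements $\{\w_i(f)\}_{f\in\EE}$ are linearly independent in either $C^\ast$-algebra (which can be seen directly via the quasi-free state built in the existence construction). Hence the assignment extends to a well-defined bijective $\ast$-homomorphism $\varphi_0:\FF_1\to\FF_2$, which by automatic continuity of $\ast$-isomorphisms between $C^\ast$-algebras extends to a $C^\ast$-isomorphism $A_1\cong A_2$. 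The CAR case follows the same template, using the CAR relation (iii) to reduce any product $\b(f_1)\cdots\b(f_n)$ to a normal ordered form and the positivity of $\tau$ to guarantee linear independence.

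The main technical obstacle is precisely the linear-independence statement that makes $\varphi_0$ well-defined, because this is the step that genuinely uses the axioms beyond formal manipulation of the relations. In the fermionic setting this is where Definition \ref{def:matterpositive} is indispensable: without positivity of $\tau$ the element $\b(f)^2=\tfrac{1}{2}\tau(f,f)\,\1$ could be a strictly negative multiple of the identity, which is incompatible with $\b(f)^\ast=\b(f)$ in \emph{any} $C^\ast$-algebra, so no CAR-representation exists at all. This a posteriori explains the positivity requirement and indicates why the analogous theorem fails for indefinite fermionic inner product spaces, motivating the positivity analysis carried out in the subsequent sections of the paper.
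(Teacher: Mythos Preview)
The paper does not actually prove this theorem: it is stated and immediately attributed to \cite{Bar:2011iu,Bar:2007zz}, with no argument given. So there is no in-paper proof to compare against; your sketch is being measured against the standard constructions in those references.

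Your outline is broadly along the usual lines, but the uniqueness step contains a genuine gap. You construct a bijective $\ast$-homomorphism $\varphi_0:\FF_1\to\FF_2$ between the dense $\ast$-subalgebras and then invoke ``automatic continuity of $\ast$-isomorphisms between $C^\ast$-algebras''. That theorem requires the domain to be a $C^\ast$-algebra, which $\FF_1$ is not; a $\ast$-isomorphism between dense $\ast$-subalgebras of two $C^\ast$-algebras need not be bounded and hence need not extend to the completions. What is actually needed is the stronger fact that the abstract Weyl $\ast$-algebra admits a \emph{unique} $C^\ast$-norm. In the standard treatments this is obtained by first showing that the $C^\ast$-completion is simple (for non-degenerate $\tau$), so that any $\ast$-homomorphism into another $C^\ast$-algebra is either zero or isometric; equivalently one shows that the minimal regular norm coincides with any $C^\ast$-norm. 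Without this ingredient your argument does not close. The CAR case has the same structural issue, though there the local finite-dimensionality of the Clifford algebra over finite-dimensional subspaces of $\EE$ makes the uniqueness of the $C^\ast$-norm easier to see directly.

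A secondary point: your CCR existence argument presupposes a compatible complex structure $J$ on $(\EE,\tau)$. For an arbitrary real symplectic vector space with no topology this is not entirely innocent; the cited references instead build the Weyl algebra directly via its universal property and exhibit a $C^\ast$-norm without passing through a Fock representation.
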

\begin{rem}\label{rem:degenerate}
For defining the CCR-representation we have assumed that the map $\tau$ is weakly non-degenerate.
While for a bosonic classical matter field theory this is automatically given by Proposition \ref{propo:mattervectorspace},
this condition turns out to be too restrictive for gauge field theories, see Section \ref{sec:degeneracy}
for a discussion. The quantization of a pre-symplectic vector space $(\EE,\tau)$ can always be performed 
in terms of a field polynomial algebra. However, one looses the $C^\ast$-algebra property
 when making this choice. Fortunately, in \cite{degenerateCCR} the existence and uniqueness of
 the Weyl algebra for a generic pre-symplectic vector space has been proven.
 This means that Definition \ref{def:CCR} can be extended to any pre-symplectic vector space and
 the result of Theorem \ref{theo:uniqueness} is unaltered in this case. We refer to \cite{degenerateCCR} for 
 details on Weyl algebras of degenerate pre-symplectic vector spaces.
 We finish this remark by noting that a similar result for degenerate pre-Hilbert spaces and their
 CAR-quantization are not known to us.
\end{rem} 
\begin{rem}\label{rem:formalnotation}
This remark is quite standard, however, it is essential for understanding our construction of the
algebra of observables for a gauge field theory.

  Let $(\EE,\tau)$ be the symplectic vector space associated to a bosonic classical matter field theory $\big(M,V,P\big)$,
  i.e.~$\EE=\Gamma^\infty_0(V)/P[\Gamma^\infty_0(V)]$ and $\tau$ as given in (\ref{eqn:sigmamat}). The Weyl symbols
  $\w([f])$, $[f]\in\EE$, are physically interpreted as quantizations of the following functionals $\w_f$, $f\in\Gamma^\infty_0(V)$,
   on the configuration space  $\Gamma^\infty(V)$ of the classical matter field theory
  \begin{flalign}\label{eqn:formal1}
  \w_f : \Gamma^\infty(V)\to \bbC ~,~~\psi \mapsto \w_f[\psi] = e^{i\,\ip{\psi}{f}_{\Gamma(V)}^{}}~.
  \end{flalign}
  The on-shell condition $P\psi=0$ is then encoded on the level of the functionals by identifying
  $\w_{Pf}$ by the constant functional $\1$ (use (\ref{eqn:formal1}) and that $P$ is formally self-adjoint).
  The functionals on the on-shell configuration space are thus labelled by equivalence classes, i.e.~elements
   in $\EE=\Gamma^\infty_0(V)/P[\Gamma^\infty_0(V)]$.

An analogous interpretation holds for a fermionic matter field theory, where the symbols $\b([f])$, $[f]\in\EE$, are interpreted
as quantizations of the functionals
\begin{flalign}
\b_f:\Gamma^\infty(V)\to\bbR~,~~\psi\mapsto \b_f[\psi] = \ip{\psi}{f}_{\Gamma(V)}^{}~,
\end{flalign}
with $f\in\Gamma^\infty_0(V)$. The on-shell condition $P\psi=0$ is encoded here by identifying
the functionals $\b_{Pf}$, $f\in\Gamma_0^\infty(V)$, with $0$, giving rise to the
vector space $\EE = \Gamma^\infty_0(V)/P[\Gamma^\infty_0(V)]$ which labels the functionals on the
 on-shell configuration space.
\end{rem}
\begin{rem}\label{rem:positive}
We give a physical motivation for the positivity requirement for fermionic matter field theories
given in Definition \ref{def:matterpositive}. Take any $[f]\in\EE$ and consider the corresponding 
symbol $\b([f])$. From Definition \ref{def:CAR} (ii) and (iii) it follows that
\begin{flalign}\label{eqn:argument}
\big\{\b([f]),\b([f])\big\} = 2\,\b([f])^\ast\,\b([f]) = \tau([f],[f])\,\1~.
\end{flalign}
 Assume that we have a representation of this algebra on a Hilbert space 
 and let $\vert\Psi\rangle$ be any normalised vector $\langle \Psi\vert\Psi\rangle =1$.
 Taking the expectation value of both sides of (\ref{eqn:argument}) gives us the equality
 $\tau([f],[f]) = 2\,\langle \b([f])\Psi\vert \b([f])\Psi \rangle $. If now $\tau([f],[f])<0$
 the  vector $\vert \b([f])\Psi\rangle$ has a negative norm square, which is inconsistent with the Hilbert space
 assumption. In case $\tau([f],[f])=0$ the Hilbert space vector
  $\vert \b([f])\Psi\rangle$ has zero norm and since $\vert\Psi\rangle$ has been an arbitrary normalised
  vector the operator associated to $\b([f])$ is  the zero operator in any Hilbert space representation. 
\end{rem}
\sk

Let us now consider a classical gauge field theory $\big(M,V,W,P,K,T\big)$.
The goal is to construct a pre-symplectic vector space for bosonic
and a possibly indefinite inner product space for fermionic classical gauge field theories.
Following the interpretation of Remark \ref{rem:formalnotation} we 
are thus looking for a suitable vector space of smearing functions.
It turns out to be convenient to directly encode gauge invariance 
on the level of this vector space, leading later to a quantization of only the gauge invariant
degrees of freedom. Let us for example consider a bosonic classical
gauge field theory: We can again consider functionals on the off-shell configuration space as in 
(\ref{eqn:formal1}). Such a functional $\w_f$ is gauge invariant, i.e.~independent on whether we evaluate it
on $\psi$ or $\psi+K\epsilon$ with $\epsilon\in\Gamma^\infty(W)$, if and only if $K^\dagger f=0$.
Indeed,
\begin{flalign}
\w_f[\psi+K\epsilon] = e^{i\,\ip{\psi+K\epsilon}{f}_{\Gamma(V)}^{}} = e^{i\,\ip{\psi}{f}_{\Gamma(V)}^{} + i\,\ip{\epsilon}{K^\dagger f}_{\Gamma(W)}^{}}=\w_f[\psi]~,
\end{flalign}
for all $\epsilon\in\Gamma^\infty(W)$ if and only if $K^\dagger f=0$.
Thus, in order to capture the gauge invariant degrees of freedom we should
consider instead of $\Gamma^\infty_0(V)$ only the kernel $\Ker_0(K^\dagger) \subseteq \Gamma_0^\infty(V)$
 of $K^\dagger$ when formulating the space $\EE$ for gauge theories. The implementation of the on-shell condition
is then a suitable quotient by the equation of motion operator. This construction can be performed and
a well-defined pre-symplectic structure  (respectively, indefinite inner product structure) can
be defined on $\EE$ for bosonic (respectively fermionic) gauge field theories.
\begin{propo}\label{propo:gaugevectorspace}
Let $\big(M,V,W,P,K,T\big)$ be a classical gauge field theory with $\widetilde{P}:=P+T\circ K^\dagger$,
$Q:= K^\dagger\circ T$ and $R:= K^\dagger\circ K$. Let us denote by $G^{\widetilde{P}}_\pm$ the
 retarded/advanced Green's operators for $\widetilde{P}$ and  $G^{\widetilde{P}}:= G^{\widetilde{P}}_+ - G^{\widetilde{P}}_-$. 
We define the vector space $\EE:= \Ker_0(K^\dagger)/P[\Gamma^\infty_0(V)]$
and the bilinear map
\begin{flalign}\label{eqn:sigma}
\tau: \EE\times \EE \to\bbR ~,~~([f],[h])\mapsto \tau\big([f],[h]\big) =\ip{f}{G^{\widetilde{P}}h}_{\Gamma(V)}^{} 
= \int_M\vol\,\ip{f}{G^{\widetilde{P}}h}^{}_V~.
\end{flalign} 
Then the map $\tau$ is well-defined. Furthermore, $\tau$ is antisymmetric for bosonic gauge field theories and
symmetric for fermionic ones. Finally, let $T^\prime: \Gamma^\infty(W)\to \Gamma^\infty(V)$ be an arbitrary differential operator such that $\big(M,V,W,P,K,T^\prime \big)$ is a classical gauge field theory and let $\tau^\prime: \EE\times \EE \to\bbR$ be defined in analogy to \eqref{eqn:sigma} by means of $G^{\widetilde{P}^\prime}$ with $\widetilde{P}^\prime:=P+T^\prime\circ K^\dagger$. Then $\tau^\prime = \tau$.
\end{propo}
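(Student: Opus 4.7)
The plan is to verify the three claims in turn: (a) well-definedness of $\tau$ on the quotient $\EE=\Ker_0(K^\dagger)/P[\Gamma^\infty_0(V)]$, (b) the (anti)symmetry of $\tau$, and (c) independence of $\tau$ on the auxiliary datum $T$. Only (a) requires any real computation; the remaining two are essentially direct consequences of Theorem \ref{theo:gaugeproperties}.

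For (a), I fix representatives $f,h\in\Ker_0(K^\dagger)$ and check that replacing $f\mapsto f+Pk$ or $h\mapsto h+Pk'$ with $k,k'\in\Gamma^\infty_0(V)$ leaves $\ip{f}{G^{\widetilde{P}}h}_{\Gamma(V)}^{}$ invariant. In the $h$-slot I would rewrite $G^{\widetilde{P}}\circ P=-G^{\widetilde{P}}\circ T\circ K^\dagger$ on $\Gamma^\infty_0(V)$, using $\widetilde{P}=P+T\circ K^\dagger$ together with the fact that $G^{\widetilde{P}}\circ\widetilde{P}=0$ on $\Gamma^\infty_0(V)$; by Theorem \ref{theo:gaugeproperties}(ii) this further equals $-K\circ G^R\circ K^\dagger$, and pairing against $f\in\Ker_0(K^\dagger)$ annihilates the expression after moving $K^\dagger$ onto $f$ by adjointness. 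In the $f$-slot I would transfer $P$ onto $G^{\widetilde{P}}h$ via the formal self-adjointness of $P$ and then observe that $PG^{\widetilde{P}}h=\widetilde{P}G^{\widetilde{P}}h-TK^\dagger G^{\widetilde{P}}h=0-TG^QK^\dagger h=0$, using that $\widetilde{P}G^{\widetilde{P}}h=0$ together with Theorem \ref{theo:gaugeproperties}(ii) and $K^\dagger h=0$.

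Claim (b) is immediate from Theorem \ref{theo:gaugeproperties}(iii): on $\Ker_0(K^\dagger)$ one has the formal skew-adjointness $\ip{f}{G^{\widetilde{P}}h}_{\Gamma(V)}^{}=-\ip{G^{\widetilde{P}}f}{h}_{\Gamma(V)}^{}$, and swapping the arguments of the fibre bilinear form produces an additional minus sign in the bosonic case (symmetric $\ip{~}{~}_V^{}$) and no extra sign in the fermionic case (antisymmetric $\ip{~}{~}_V^{}$), yielding $\tau([h],[f])=-\tau([f],[h])$ and $\tau([h],[f])=\tau([f],[h])$ respectively. Claim (c) is precisely Theorem \ref{theo:gaugeproperties}(vii) applied to the retarded and advanced Green's operators separately; taking the difference yields $\ip{f}{G^{\widetilde{P}^\prime}h}_{\Gamma(V)}^{}=\ip{f}{G^{\widetilde{P}}h}_{\Gamma(V)}^{}$ on $\Ker_0(K^\dagger)\times\Ker_0(K^\dagger)$, which is exactly $\tau^\prime=\tau$. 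The only mildly delicate point in the whole argument is the well-definedness in the $h$-slot, since $G^{\widetilde{P}}\circ P$ does not vanish on $\Gamma^\infty_0(V)$ and must be rewritten via the $T\circ K^\dagger$-deformation before the kernel condition $K^\dagger f=0$ can be used; once the intertwining identities of Theorem \ref{theo:gaugeproperties} are available, however, this is a one-line manipulation.
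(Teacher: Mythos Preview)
Your proposal is correct and follows essentially the same route as the paper. The only minor variation is in the $f$-slot of the well-definedness argument: the paper handles this by invoking the formal skew-adjointness of $G^{\widetilde{P}}$ on $\Ker_0(K^\dagger)$ from Theorem~\ref{theo:gaugeproperties}(iii) to reduce to the $h$-slot computation, whereas you instead move $P$ across by self-adjointness and observe directly that $PG^{\widetilde{P}}h=0$ for $h\in\Ker_0(K^\dagger)$; both are one-line consequences of the same ingredients.
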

\begin{proof}
For the trivial case $K=0$ the proof is as in Proposition \ref{propo:mattervectorspace}. In particular, 
the vector space $\EE$ and the map $\tau$ are then exactly those of a classical matter field theory. 
So let us assume that $K\neq 0$.
According to Theorem \ref{theo:gaugeproperties} (iii) $G^{\widetilde{P}}$ is formally skew-adjoint with 
respect to $\ip{~}{~}_V^{}$ on $\Ker_0(K^\dagger)$. That $\tau$ is well-defined follows from this fact and the following calculation,
for all $f\in \Ker_0(K^\dagger)$ and $h\in\Gamma^\infty_0(V)$,
\begin{flalign}
\nn \ip{f}{G^{\widetilde{P}}Ph}_{\Gamma(V)}^{} &= \ip{f}{G^{\widetilde{P}}(\widetilde{P}-TK^\dagger) h}_{\Gamma(V)}^{}
= -\ip{f}{ K G^{R}K^\dagger h}_{\Gamma(V)}^{}\\
&=-\ip{K^\dagger f}{ G^{R}K^\dagger h}_{\Gamma(W)}^{}=0~,
\end{flalign}
where in the second equality we have used Theorem \ref{theo:gaugeproperties} (ii) and $G^{\widetilde{P}}\widetilde{P}h=0$.

The antisymmetry (symmetry) of $\tau$ for bosonic (fermionic) gauge field theories is proven as in the proof of 
Proposition \ref{propo:mattervectorspace}.

The last statement follows immediately from Theorem \ref{theo:gaugeproperties} (vii).
\end{proof}
In contrast to classical matter field theories we can not guarantee that the map $\tau$  is weakly non-degenerate for a classical 
gauge field theory. 
For bosonic gauge field theories this is mathematically not problematic, since the CCR-representation
of Definition \ref{def:CCR} is also available and well-behaved for degenerate pre-symplectic vector spaces, see Remark
\ref{rem:degenerate}. Physically, these degeneracies might be interpreted as charge observables and are worth being
studied in detail for the important examples of gauge field theories, see \cite{DHS} for the Maxwell field case.
In order to quantize fermionic gauge field theories we have to require analogously to
Definition \ref{def:matterpositive} positivity of the inner product.
\begin{defi}\label{def:gaugenondegpositive}
A fermionic classical gauge field theory $\big(M,V,W,P,K,T\big)$ is called of {\textit{positive type}} if
$\tau$ is positive definite, i.e.~$(\EE,\tau)$ is a (real) pre-Hilbert space.
\end{defi}
\sk

Bosonic classical gauge field theories can be quantized via the CCR-representation
 (see Definition \ref{def:CCR} with a possible extension to pre-symplectic vector spaces as in Remark \ref{rem:degenerate})
  and fermionic classical gauge field theories of positive type via the CAR-representation (see Definition \ref{def:CAR}). Although a quantization of fermionic classical gauge field theories in terms of a field polynomial algebra is also mathematically possible if they are not of positive type, the physical interpretation of such a quantum theory would remain unclear, cf. Remark \ref{rem:positive}.
It thus remains to study if a given fermionic classical gauge field theory $\big(M,V,W,P,K,T\big)$ is of positive type or not.
From the physical perspective it is also interesting to understand if a given bosonic classical gauge field theory
has a weakly non-degenerate $\tau$ or not.

Irrespective of the non-degeneracy or positivity of $\tau$ we can
already prove an important structural property of the (classical and
quantized) gauge field theory
corresponding to $\big(M,V,W,P,K,T\big)$. 
\begin{propo} 
Every gauge field theory $\big(M,V,W,P,K,T\big)$
  satisfies the time-slice axiom: 
  Let $\Sigma$ be an arbitrary Cauchy
  surface in $(M,g)$, and let $\Sigma_{\pm}$ be any two other Cauchy
  surfaces such that $\Sigma\subsetneq \bigl({J^-(\Sigma_+) \cap J^+(\Sigma_-)}\bigr)$. Then for every $[f]\in \EE$ there is a
  representative $f\in \Ker_0(K^\dagger)$ with $\supp (f)\subset\bigl({J^-(\Sigma_+) \cap J^+(\Sigma_-)}\bigr)$.
\end{propo}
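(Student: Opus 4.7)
The plan is to take a representative $f\in\Ker_0(K^\dagger)$ of $[f]\in\EE$ and produce $g\in\Gamma_0^\infty(V)$ such that $f-Pg$ is supported in the slab $S:=J^-(\Sigma_+)\cap J^+(\Sigma_-)$. Since $K^\dagger\circ P=(P\circ K)^\dagger=0$, the resulting $f':=f-Pg$ will then automatically lie in $\Ker_0(K^\dagger)$ and satisfy $[f']=[f]$, giving the desired representative.

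The closed sets $J^+(\Sigma_+)$ and $J^-(\Sigma_-)$ are disjoint, because the strict inclusion $\Sigma\subsetneq S$ together with Theorem \ref{theo:bernalsanchez} yields a Cauchy temporal function on $M$ witnessing that $\Sigma_+$ lies strictly in the future of $\Sigma_-$. By smooth Urysohn I pick $\chi\in C^\infty(M)$ with $\chi\equiv 1$ on a neighborhood of $J^+(\Sigma_+)$ and $\chi\equiv 0$ on a neighborhood of $J^-(\Sigma_-)$, so that $\supp(\dd\chi)\subset S$. Setting $u_\pm:=G_\pm^{\widetilde P}f$ and $u:=u_+-u_-$, I define
$$g:=\chi\,u_-+(1-\chi)\,u_+\in\Gamma^\infty(V).$$
Compact support of $g$ follows piecewise from the fact that in a globally hyperbolic spacetime $J^{\mp}(\supp f)\cap J^{\pm}(\Sigma_{\pm})$ is compact, and from $\supp(u)\cap S\subset J(\supp f)\cap S$ being compact since $\supp(u)$ is spacelike compact and the slab is bounded by two Cauchy surfaces.

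A direct Leibniz calculation using $\widetilde P u_\pm=f$ gives $\widetilde P g=f-[\widetilde P,\chi]u$. The crucial step that circumvents the non-hyperbolicity of $P$ is the propagation of the gauge-fixing condition through the Green's operators: by Theorem \ref{theo:gaugeproperties}(ii) and $K^\dagger f=0$ one has $K^\dagger u_\pm=G_\pm^Q K^\dagger f=0$, hence $K^\dagger g=-[K^\dagger,\chi]u$. Substituting $P=\widetilde P-T\circ K^\dagger$ then yields
$$f-Pg=[\widetilde P,\chi]u-T[K^\dagger,\chi]u,$$
and both terms are compactly supported inside $\supp(\dd\chi)\subset S$, because commutators with $\chi$ vanish where $\chi$ is locally constant and differential operators do not enlarge supports. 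Thus $f'=f-Pg$ is the required representative.

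The main obstacle is conceptual: since $P$ itself is not Green-hyperbolic, the classical time-slice argument of splicing retarded and advanced solutions across $\Sigma$ is unavailable for $P$, and one is forced to work with $\widetilde P$. The resolution rests on two observations: on the subspace $\Ker(K^\dagger)$ the operators $P$ and $\widetilde P$ coincide, and the intertwining relation $K^\dagger\circ G_\pm^{\widetilde P}=G_\pm^Q\circ K^\dagger$ from Theorem \ref{theo:gaugeproperties}(ii) guarantees that this subspace is preserved by the Green's operators of $\widetilde P$. The discrepancy $TK^\dagger g$ between $Pg$ and $\widetilde P g$ is therefore controlled by the commutator $[K^\dagger,\chi]u$ and stays inside the cutoff region $\supp(\dd\chi)\subset S$.
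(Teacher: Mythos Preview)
Your proof is correct and follows essentially the same strategy as the paper: modify a representative by $Pg$ with $g$ built from a cutoff $\chi$ applied to Green's operators of $\widetilde P$, and use the intertwining relation $K^\dagger\circ G_\pm^{\widetilde P}=G_\pm^Q\circ K^\dagger$ (Theorem~\ref{theo:gaugeproperties}(ii)) to control the discrepancy between $P$ and $\widetilde P$. The only difference is cosmetic: the paper first reduces WLOG to $\supp(h)\subset J^+(\Sigma_-)$ and then uses the single choice $g=\chi\,G^{\widetilde P}_-h$, whereas you treat both ends at once with the symmetric splice $g=\chi u_-+(1-\chi)u_+$; your version is slightly more explicit about why the support lands in $S$ (the paper hides the use of $K^\dagger G^{\widetilde P}_-h=0$ behind ``one can now verify''). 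One small indexing slip: your compactness claim should read $J^{\mp}(\supp f)\cap J^{\pm}(\Sigma_{\mp})$ rather than $J^{\pm}(\Sigma_{\pm})$, but the underlying argument is right.
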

\begin{proof}
We can obtain such $f$ by a standard construction. Let $h\in [f]$ be
arbitrary. Without loss of generality we can assume that $\supp (h)
\subset J^+(\Sigma_-)$. We pick a smooth function $\chi$ such that
$\chi=0$ on $J^-(\Sigma_-)$ and $\chi=1$ on $J^+(\Sigma_+)$ and define
\begin{flalign}
f:= h - P \chi G^{\widetilde P}_- h~.
\end{flalign}
One can now verify that $\chi G^{\widetilde P}_- h$ has compact
support, whence $[f]=[h]$, and that
$f$ has the required support property.
\end{proof}


\section{\label{sec:degeneracy}Non-degeneracy and positivity of gauge field theories}
Let $\big(M,V,W,P,K,T\big)$ be a classical gauge field theory and denote by 
$(\EE,\tau)$ the vector space of Proposition \ref{propo:gaugevectorspace} equipped 
with the bilinear map $\tau$, which is antisymmetric for bosonic and symmetric for fermionic theories.
In order to investigate if $\tau$ is weakly non-degenerate for bosonic or respectively positive definite for
fermionic theories it is in some cases convenient to induce an equivalent bilinear map on the space of solutions of $P$.

Let us denote by $\Sol:= \big\{\psi\in\Gamma^\infty_\sc(V): P\psi =0\big\}$ the space of
all solutions of $P$ with spacelike compact support. For every $\psi$ there exists a compact set
$C\subseteq M$, such that $\supp(\psi) \subseteq J(C)$. We can split $\psi =\psi^+ + \psi^-$ such that
 $\supp(\psi^\pm) \subseteq J^\pm(C)$. This splitting is not unique and the difference between two such 
 splittings $\psi = \psi^+ + \psi^- = \psi^{\widetilde{+}} + \psi^{\widetilde{-}}$ is given by a compactly 
 supported section $\psi^{\widetilde{+}}-\psi^+ =\psi^- - \psi^{\widetilde{-}}=: \chi \in \Gamma^\infty_0(V)$.
 We define on $\Sol$ the bilinear map
 \begin{flalign}\label{eqn:solbimap}
 \ip{~}{~}_\Sol^{}:\Sol\times\Sol \to\bbR ~,~~(\psi_1,\psi_2) \mapsto \ip{\psi_1}{\psi_2}_\Sol^{} = \ip{P\psi_1^+}{\psi_2}_{\Gamma(V)}^{} ~.
 \end{flalign}
 This map is well-defined, since firstly 
 from $P\psi_1 =0$ it follows that $P\psi_1^+ = -P\psi_1^-$ and in particular that $P\psi_1^\pm$ has compact support,
 such that the integral exists.
 Secondly,
 it is independent of the splitting,
 \begin{flalign}
 \nn \ip{P\psi_1^{\widetilde{+}}}{\psi_2}_{\Gamma(V)}^{} &= \ip{P\psi_1^{+}}{\psi_2}_{\Gamma(V)}^{} + \ip{P\chi}{\psi_2}_{\Gamma(V)}^{}\\
 &= \ip{\psi_1}{\psi_2}_\Sol + \ip{\chi}{P\psi_2}_{\Gamma(V)}^{} = \ip{\psi_1}{\psi_2}_\Sol^{} ~,
 \end{flalign}
 where we have used that $P$ is formally self-adjoint and that $P\psi_2=0$. Notice that the map (\ref{eqn:solbimap})
 is not trivial, since $\psi_1^+$ and $\psi_2$ i.g.~{\it do not} have compact overlapping support and thus we 
 {\it can not} integrate by parts $P$ to the right side.
\begin{propo}\label{propo:solspacepairing}
The following statements hold true:
\begin{itemize}
\item[(i)] The map $\ip{~}{~}_\Sol^{}$ is antisymmetric for bosonic and symmetric for fermionic gauge field theories.
\item[(ii)] The map $\ip{~}{~}_\Sol^{}$ is $\widehat{\mathcal{G}}$-gauge invariant, 
i.e.~for all $\psi\in \Sol$ and $\epsilon \in\Gamma^\infty(W)$ 
such that $K\epsilon\in\Gamma^\infty_\sc(V)$ we have $\ip{\psi}{K\epsilon}^{}_\Sol = \ip{K\epsilon}{\psi}_\Sol^{}=0$.

In particular, the map $\ip{~}{~}_\Sol^{}$ induces well-defined bilinear maps on the quotients $\Sol/\widehat{\mathcal{G}}$ and 
$\Sol/\mathcal{G}$ (remember that $\mathcal{G}\subseteq \widehat{\mathcal{G}}$).

\item[(iii)] Let $f,h\in \Ker_0(K^\dagger)$, then 
\begin{flalign}
\ip{G^{\widetilde{P}}f}{G^{\widetilde{P}}h}_{\Sol}^{} = \tau\big([f],[h]\big)~.
\end{flalign} 
\end{itemize}
\end{propo}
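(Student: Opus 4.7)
The plan is to prove (i), (ii), (iii) in sequence, the main technical tool throughout being careful bookkeeping of supports so that formal self-adjointness of $P$ may be invoked on sections that are only spacelike compact. Two key underlying facts are: for any $\psi\in\Sol$ with splitting $\psi=\psi^++\psi^-$ as in the discussion preceding the proposition, the section $P\psi^+=-P\psi^-$ has compact support (it lives in $J^+(C)\cap J^-(C)$, a compact causal diamond); and $K^\dagger\circ P=0$, which follows from $P\circ K=0$ upon taking formal adjoints and using $P^\dagger=P$.

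For (i), I would split $\psi_2=\psi_2^++\psi_2^-$ and, using $P\psi_2^-=-P\psi_2^+$ together with formal self-adjointness of $P$ to integrate by parts in the piece involving $\psi_2^-$, rewrite $\langle\psi_1,\psi_2\rangle_\Sol = \langle P\psi_1^+,\psi_2^+\rangle_{\Gamma(V)} - \langle\psi_1^+,P\psi_2^+\rangle_{\Gamma(V)}$. The integration by parts is legitimate because $P\psi_1^+$ is compactly supported, forcing its overlap with $\psi_2^-$ to be compact. Expanding $\langle\psi_2,\psi_1\rangle_\Sol$ in the same way and applying the (anti)symmetry of $\langle\cdot,\cdot\rangle_V$ termwise, one finds that the sum of the two expressions vanishes in the bosonic case (giving antisymmetry of $\langle\cdot,\cdot\rangle_\Sol$) while the difference vanishes in the fermionic case (giving symmetry).

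For (ii), I would note that $P\psi^+$ has compact support while $K\epsilon$ has spacelike compact support, so the integration by parts $\langle\psi,K\epsilon\rangle_\Sol = \langle P\psi^+,K\epsilon\rangle_{\Gamma(V)} = \langle K^\dagger P\psi^+,\epsilon\rangle_{\Gamma(W)}$ is well-defined, and the right-hand side vanishes by $K^\dagger\circ P=0$. The companion vanishing $\langle K\epsilon,\psi\rangle_\Sol=0$ is then automatic from part (i). For (iii), I would pick the canonical splitting $(G^{\widetilde P}h)^+:=G^{\widetilde P}_+h$, which has the required support property relative to $C=\supp(h)$. By Theorem \ref{theo:gaugeproperties} (ii) together with $f\in\Ker_0(K^\dagger)$ one has $K^\dagger G^{\widetilde P}_+f=G^Q_+K^\dagger f=0$, so that $PG^{\widetilde P}_+f = \widetilde PG^{\widetilde P}_+f - T\circ K^\dagger G^{\widetilde P}_+f = f$. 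The definition then collapses to $\langle G^{\widetilde P}f,G^{\widetilde P}h\rangle_\Sol = \langle f,G^{\widetilde P}h\rangle_{\Gamma(V)} = \tau([f],[h])$.

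The main obstacle is not conceptual but organisational: at every integration by parts the overlap of supports must be verified compact so that formal self-adjointness of $P$ (or formal adjointness of $K$) applies. This is where the globally hyperbolic causal-diamond argument enters repeatedly, both in establishing compactness of $\supp(P\psi^+)$ and in bounding the overlaps that arise when pairing spacelike compact sections.
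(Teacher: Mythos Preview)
Your proposal is correct and follows essentially the same route as the paper. The only minor differences are organisational: in (i) you stop at the antisymmetric-looking expression $\langle P\psi_1^+,\psi_2^+\rangle_{\Gamma(V)} - \langle\psi_1^+,P\psi_2^+\rangle_{\Gamma(V)}$ and compare with the analogous expansion of $\langle\psi_2,\psi_1\rangle_\Sol$, whereas the paper pushes one step further to $-\langle\psi_1,P\psi_2^+\rangle_{\Gamma(V)}$ before invoking the (anti)symmetry of $\langle\cdot,\cdot\rangle_V$; and in (ii) you deduce $\langle K\epsilon,\psi\rangle_\Sol=0$ from (i), while the paper uses the intermediate identity $\langle\psi_1,\psi_2\rangle_\Sol = -\langle\psi_1,P\psi_2^+\rangle_{\Gamma(V)}$ directly. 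One small point of care: the justification you give for the integration by parts in (i) (``$P\psi_1^+$ compactly supported'') ensures the \emph{integrals} exist, but what legitimises moving $P$ across is that $\psi_1^+$ and $\psi_2^-$ themselves have compact overlapping support, namely $J^+(C_1)\cap J^-(C_2)$, compact by global hyperbolicity --- exactly the causal-diamond argument you flag in your closing paragraph.
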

\begin{proof}
Proof of (i): Let $\psi_1,\psi_2\in\Sol$ be arbitrary and consider the splittings $\psi_i = \psi_i^+ + \psi_i^-$, $i=1,2$.
Notice that from $P\psi_i =0$ is follows that $P\psi_i^+ = -P\psi_i^-$.
Then
\begin{flalign}
\nn \ip{\psi_1}{\psi_2}_\Sol^{} &= \ip{P\psi_1^+}{\psi_2}_{\Gamma(V)}^{}
= \ip{P\psi_1^+}{\psi_2^+}_{\Gamma(V)}^{} +  \ip{P\psi_1^+}{\psi_2^-}_{\Gamma(V)}^{}\\
\nn &=- \ip{P\psi_1^-}{\psi_2^+}_{\Gamma(V)}^{} +  \ip{\psi_1^+}{P\psi_2^-}_{\Gamma(V)}^{}
=- \ip{\psi_1^-}{P\psi_2^+}_{\Gamma(V)}^{} -  \ip{\psi_1^+}{P\psi_2^+}_{\Gamma(V)}^{}\\
\label{eqn:tmp1} &=- \ip{\psi_1}{P\psi_2^+}_{\Gamma(V)}^{} = \mp\,\ip{P\psi_2^+}{\psi_1}_{\Gamma(V)}^{} =\mp\, \ip{\psi_2}{\psi_1}_\Sol^{}~,
\end{flalign}
where $-$ is for bosonic and $+$ for fermionic theories. All integrations by parts of $P$ in the calculation above
are well-defined, since the integrals are always over functions with compact support.
\sk

Proof of (ii):  Let $\psi\in \Sol$ and $K\epsilon \in\widehat{\mathcal{G}}$. We obtain
\begin{flalign}
\ip{\psi}{K\epsilon}_\Sol = \ip{P\psi^+}{K\epsilon}_{\Gamma(V)}^{} = \ip{K^\dagger P\psi^+}{\epsilon}_{\Gamma(W)}^{} = 0~.
\end{flalign}
In the second equality we have used that $P\psi^+$ is of compact support and in the third equality that
$K^\dagger\circ P=0 $. By (\ref{eqn:tmp1}) we have $\ip{K\epsilon}{\psi}_\Sol^{} = - \ip{K\epsilon}{P\psi^+}_{\Gamma(V)}^{}
= -\ip{\epsilon}{K^\dagger P\psi^+}_{\Gamma(W)}^{}=0$.
\sk

Proof of (iii):  Let $f,h\in\Ker_0(K^\dagger)$. Then $G^{\widetilde{P}}f,G^{\widetilde{P}}h\in\Sol$, since 
\begin{flalign}
PG^{\widetilde{P}} f = (\widetilde{P}-TK^\dagger) G^{\widetilde{P}}f = -TK^\dagger G^{\widetilde{P}}f  =0~,
\end{flalign}
where in the last equality we have used Theorem \ref{theo:gaugeproperties} (ii). The same applies for $G^{\widetilde{P}}h$.
A convenient decomposition is given by $G^{\widetilde{P}}f = G_+^{\widetilde{P}}f - G_-^{\widetilde{P}}f  $
and we find
\begin{flalign}
\ip{G^{\widetilde{P}} f }{G^{\widetilde{P}} h}_\Sol^{} = \ip{PG_+^{\widetilde{P}} f }{G^{\widetilde{P}} h}_{\Gamma(V)}^{}
=\ip{\widetilde{P}G_+^{\widetilde{P}} f }{G^{\widetilde{P}} h}_{\Gamma(V)}^{} =\ip{ f }{G^{\widetilde{P}} h}_{\Gamma(V)}^{}
=\tau\big([f],[h]\big)~,
\end{flalign}
where in the second equality we have used Theorem \ref{theo:gaugeproperties} (ii).
\end{proof}
We combine the statements proven in Theorem \ref{theo:gaugeproperties} and Proposition
 \ref{propo:solspacepairing} in order to construct an isomorphism between
 the space $(\EE,\tau)$ of Proposition \ref{propo:gaugevectorspace} and the space
 $(\Sol/\mathcal{G},\ip{~}{~}_\Sol^{})$. 
 \begin{theo}
 The sequence of maps
 \begin{flalign}
 \Ker_0(K^\dagger) \stackrel{G^{\widetilde{P}}}{\longrightarrow} \Sol \stackrel{\id}{\longrightarrow}\Sol~
 \end{flalign}
 induces a well-defined sequence of maps on the quotients (which we denote with a slight abuse of notation
 by the same symbols)
 \begin{flalign}\label{eqn:sequencegauge}
 \EE =\Ker_0(K^\dagger)/P[\Gamma^\infty_0(V)] \stackrel{G^{\widetilde{P}}}{\longrightarrow} \Sol/\mathcal{G}
 \stackrel{\id }{\longrightarrow} \Sol/\widehat{\mathcal{G}}~.
 \end{flalign}
 The first map is an isomorphism and the second map is a surjection which becomes an isomorphism if and only if
 $\mathcal{G}=\widehat{\mathcal{G}}$. Furthermore, the sequence of maps (\ref{eqn:sequencegauge}) preserves the bilinear mappings
 in $(\EE,\tau)$, $(\Sol/\mathcal{G},\ip{~}{~}_\Sol^{})$ and $(\Sol/\widehat{\mathcal{G}},\ip{~}{~}_\Sol^{})$.
 \end{theo}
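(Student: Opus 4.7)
The plan is to verify four claims in turn: well-definedness of both induced maps on the quotients, bijectivity of the first, surjectivity (plus the exact criterion for injectivity) of the second, and compatibility with the respective bilinear forms. Each step will reduce to a result already established in Theorem \ref{theo:gaugeproperties} or Proposition \ref{propo:solspacepairing}, so the overall task is more of a repackaging than new work.

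First I would check well-definedness at the level of the unquotiented maps and then pass to the quotients. The quotient $\EE$ makes sense because $K^\dagger\circ P=0$ (formal self-adjointness of $P$ combined with $P\circ K=0$), so $P[\Gamma^\infty_0(V)]\subseteq\Ker_0(K^\dagger)$. For the first map, I need to verify that $G^{\widetilde{P}}$ sends $\Ker_0(K^\dagger)$ into $\Sol$ and carries $P[\Gamma^\infty_0(V)]$ into $\mathcal{G}$. The former is the computation $PG^{\widetilde{P}}f=\widetilde{P}G^{\widetilde{P}}f-TK^\dagger G^{\widetilde{P}}f=0-TG^Q_{\!}K^\dagger f=0$ for $f\in\Ker_0(K^\dagger)$ via Theorem \ref{theo:gaugeproperties}~(ii), as in the proof of Proposition \ref{propo:solspacepairing}~(iii). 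The latter is the easy direction of Theorem \ref{theo:gaugeproperties}~(vi). Well-definedness of the second map is automatic since $\mathcal{G}\subseteq\widehat{\mathcal{G}}$.

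Second I would establish that the first induced map is an isomorphism. Surjectivity is a direct combination of Theorem \ref{theo:gaugeproperties}~(iv) and (v): any $\psi\in\Sol$ is $\mathcal{G}$-gauge equivalent to some $\psi^\prime\in\Sol$ satisfying $K^\dagger\psi^\prime=0$, and such a $\psi^\prime$ is in turn $\mathcal{G}$-gauge equivalent to $G^{\widetilde{P}}h$ for some $h\in\Ker_0(K^\dagger)$, hence $[\psi]=[G^{\widetilde{P}}h]$ in $\Sol/\mathcal{G}$. Injectivity is the non-trivial direction of Theorem \ref{theo:gaugeproperties}~(vi): if $G^{\widetilde{P}}h\in\mathcal{G}$ for some $h\in\Ker_0(K^\dagger)$, then $h\in P[\Gamma^\infty_0(V)]$, so $[h]=0$ in $\EE$. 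This is the step I anticipate as the main substantive one, but the heavy lifting has already been done in Theorem \ref{theo:gaugeproperties}.

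Third I would treat the second map and the bilinear form compatibility. Surjectivity of $\id:\Sol/\mathcal{G}\to\Sol/\widehat{\mathcal{G}}$ is immediate. Its kernel consists exactly of classes $[\psi]\in\Sol/\mathcal{G}$ with $\psi\in\widehat{\mathcal{G}}$, i.e.~of $\widehat{\mathcal{G}}/\mathcal{G}$, which is trivial precisely when $\mathcal{G}=\widehat{\mathcal{G}}$. For the bilinear forms: the identity $\tau([f],[h])=\ip{G^{\widetilde{P}}f}{G^{\widetilde{P}}h}_{\Sol}^{}$ is exactly Proposition \ref{propo:solspacepairing}~(iii), showing that the first map preserves the pairings; and Proposition \ref{propo:solspacepairing}~(ii) asserts that $\ip{~}{~}_\Sol^{}$ is $\widehat{\mathcal{G}}$-gauge invariant, hence descends consistently to both $\Sol/\mathcal{G}$ and $\Sol/\widehat{\mathcal{G}}$, so the identity map on representatives preserves the induced pairings as well.
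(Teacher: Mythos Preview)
Your proposal is correct and follows essentially the same approach as the paper: well-definedness and injectivity of the first map via Theorem~\ref{theo:gaugeproperties}~(vi), surjectivity via Theorem~\ref{theo:gaugeproperties}~(iv) and~(v), the second map handled by the inclusion $\mathcal{G}\subseteq\widehat{\mathcal{G}}$, and preservation of the bilinear forms via Proposition~\ref{propo:solspacepairing}~(ii) and~(iii). You are simply more explicit about a few routine checks (e.g.\ why $G^{\widetilde{P}}$ lands in $\Sol$ and why $P[\Gamma^\infty_0(V)]\subseteq\Ker_0(K^\dagger)$), but the logical skeleton is identical.
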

 \begin{proof}
From Theorem \ref{theo:gaugeproperties} (vi) it follows that the first map is well-defined and injective.
Surjectivity of the first map follows
from Theorem \ref{theo:gaugeproperties} (iv) and (v). The second map is well-defined and surjective since
$\mathcal{G}\subseteq \widehat{\mathcal{G}}$. It is an isomorphism if and only if $\mathcal{G}=\widehat{\mathcal{G}}$.
The bilinear mappings are preserved due to Proposition \ref{propo:solspacepairing} (iii) and (ii).
 \end{proof}
 \begin{cor} \label{cor:gaugeproperties}
 If $\mathcal{G}\subset \widehat{\mathcal{G}}$ is a proper subspace then the  map $\tau$ in $(\EE,\tau)$ is degenerate.
 \end{cor}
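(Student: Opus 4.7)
The strategy is to exploit the theorem just proved: the first map in the sequence \eqref{eqn:sequencegauge} is an isomorphism preserving bilinear forms, so degeneracy of $\tau$ on $\EE$ is equivalent to degeneracy of $\ip{~}{~}_\Sol^{}$ on $\Sol/\mathcal{G}$. Thus it suffices to exhibit a nonzero class in $\Sol/\mathcal{G}$ that pairs trivially with every element of $\Sol/\mathcal{G}$.

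The plan is as follows. By the assumption $\mathcal{G}\subsetneq \widehat{\mathcal{G}}$, I can pick some $K\epsilon\in\widehat{\mathcal{G}}\setminus\mathcal{G}$, so that $[K\epsilon]\neq 0$ in $\Sol/\mathcal{G}$ while $K\epsilon$ still lies in $\widehat{\mathcal{G}}\subseteq\Sol$ (recall $\widehat{\mathcal{G}}\subseteq\Sol$ because $P\circ K=0$). Now Proposition \ref{propo:solspacepairing} (ii) tells me that $\ip{K\epsilon}{\psi}_\Sol^{} = \ip{\psi}{K\epsilon}_\Sol^{} = 0$ for every $\psi\in\Sol$, since $\ip{~}{~}_\Sol^{}$ is $\widehat{\mathcal{G}}$-gauge invariant on both slots. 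Hence $[K\epsilon]$ is a nontrivial element in the radical of $\ip{~}{~}_\Sol^{}$ on the quotient $\Sol/\mathcal{G}$.

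Finally, I pull this back via the isomorphism $G^{\widetilde{P}}:\EE\stackrel{\cong}{\longrightarrow}\Sol/\mathcal{G}$ provided by the theorem. Let $[f]\in\EE$ be the unique preimage of $[K\epsilon]$; then $[f]\neq 0$, and preservation of the bilinear forms along \eqref{eqn:sequencegauge} gives
\begin{flalign}
\tau\big([f],[h]\big) \;=\; \ip{G^{\widetilde{P}}f}{G^{\widetilde{P}}h}_\Sol^{} \;=\; \ip{K\epsilon}{G^{\widetilde{P}}h}_\Sol^{} \;=\; 0
\end{flalign}
for every $[h]\in\EE$. Thus $[f]$ is a nonzero radical element and $\tau$ is degenerate.

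There is no real obstacle here; the entire argument is a direct consequence of the isomorphism in the preceding theorem together with the $\widehat{\mathcal{G}}$-gauge invariance already recorded in Proposition \ref{propo:solspacepairing} (ii). The only point requiring minor care is checking that the representative $K\epsilon$ indeed lands inside $\Sol$ (which is immediate from $P\circ K=0$ and $K\epsilon\in\Gamma^\infty_\sc(V)$ by the definition of $\widehat{\mathcal{G}}$), so that applying the $\Sol$-pairing to it is legitimate.
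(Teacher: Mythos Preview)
Your argument is correct and follows essentially the same route as the paper: pick $K\epsilon\in\widehat{\mathcal{G}}\setminus\mathcal{G}$, use Proposition~\ref{propo:solspacepairing}~(ii) to see that its class is a nonzero radical element of $(\Sol/\mathcal{G},\ip{~}{~}_\Sol^{})$, and then transfer the degeneracy to $(\EE,\tau)$ via the isomorphism from the preceding theorem. The only difference is cosmetic---you spell out the pullback to an explicit $[f]\in\EE$, whereas the paper simply invokes the isomorphism of bilinear spaces.
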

 \begin{proof}
 Assume that $\mathcal{G}\subset \widehat{\mathcal{G}}$ is a proper subspace. Then there is a $\epsilon\in\Gamma^\infty(W)$
 such that $K\epsilon \not\in \mathcal{G}$, but $K\epsilon\in \widehat{\mathcal{G}}\subseteq \Sol$. From 
 Proposition \ref{propo:solspacepairing} (ii) we know that $\ip{\psi}{K\epsilon}_\Sol^{} =0$ for all $\psi\in \Sol$.
 Since in $\Sol/\mathcal{G}$ this $K\epsilon$ is not equivalent to zero the bilinear map
 $\ip{~}{~}_\Sol^{}$ is degenerate on $\Sol/\mathcal{G}$. Because $(\Sol/\mathcal{G},\ip{~}{~}_\Sol^{})$ is isomorphic
 to $(\EE,\tau)$ the statement follows.
 \end{proof}
\begin{rem}
This corollary might suggest that it is more convenient (regarding non-degeneracy) to 
choose $(\Sol/\widehat{\mathcal{G}},\ip{~}{~}_\Sol^{})$ instead of $(\EE,\tau)$
as the underlying vector space for a CCR or CAR-representation. There are, however, two arguments
against this choice. Firstly, the additional elements in $\widehat{\mathcal{G}}$, which are not in $\mathcal{G}$,
can not be interpreted as on-shell conditions in accord with Remark \ref{rem:formalnotation}. Secondly,
as clarified in \cite{DHS} for the Maxwell field case, the observables in $\widehat{\mathcal{G}}\setminus\mathcal{G}$
can be of physical significance.
\end{rem}
\sk

We next show that the map $\ip{~}{~}_\Sol^{}$ can be evaluated on any Cauchy surface $\Sigma \subseteq M$.
We split the globally hyperbolic spacetime $M=\Sigma^+\cup \Sigma^-$ into 
the future/past $\Sigma^\pm:=J^\pm(\Sigma)\subseteq M$ 
of the Cauchy surface $\Sigma$. We also split $\ip{~}{~}_{\Gamma(V)}^{}$, for all $f,h\in\Gamma^\infty(V)$ 
with compact overlapping support,
\begin{flalign} 
\ip{f}{h}_{\Gamma(V)}^{} = \int_{\Sigma^+} \vol\,\ip{f}{h}_V^{}
+\int_{\Sigma^-}\vol\,\ip{f}{h}_V^{} =: \ip{f}{h}_{\Gamma(V)}^+ + \ip{f}{h}_{\Gamma(V)}^-~.
\end{flalign}
This allows us to rewrite $\ip{~}{~}_\Sol^{}$ as follows, for all $\psi_1,\psi_2\in\Sol$,
\begin{flalign}
\nn\ip{\psi_1}{\psi_2}_\Sol^{} &= \ip{P\psi_1^+}{\psi_2}_{\Gamma(V)}^{} = \ip{P\psi_1^+}{\psi_2}_{\Gamma(V)}^+ + \ip{P\psi_1^+}{\psi_2}_{\Gamma(V)}^- \\
\label{eqn:cauchy}&=-\ip{P\psi_1^-}{\psi_2}_{\Gamma(V)}^+ + \ip{P\psi_1^+}{\psi_2}_{\Gamma(V)}^- ~.
\end{flalign}
In both terms we can now perform integration by parts, since the integral over the future $\Sigma^+$
 (respectively the past $\Sigma^-$)
is over a function of support in $J^-(C)$ (respectively in $J^+(C)$). The remaining boundary
 terms are then located on the Cauchy surface $\Sigma$.
 \begin{propo}
 Let $P:\Gamma^\infty(V)\to\Gamma^\infty(V)$ be a  first-order differential operator, which is formally self-adjoint
 with respect to $\ip{~}{~}_V^{}$.
 Then for all  $\psi_1,\psi_2\in\Sol$ we have for any Cauchy surface $\Sigma\subseteq M$
 \begin{flalign}\label{eqn:solprodhyper}
 \ip{\psi_1}{\psi_2}_\Sol^{} = \int_\Sigma \vols\,\ip{\sigma_P^{}(n^\flat)\psi_1\vert_\Sigma}{\psi_2\vert_\Sigma}_V^{}~,
 \end{flalign}
 where $\sigma_P^{}$ is the principal symbol of $P$, $n$ is the future pointing normal vector field of $\Sigma$,
 $\vols$ is the induced volume form on $\Sigma$ and $\vert_\Sigma^{}$ denotes the restriction of sections to $\Sigma$. 
 \end{propo}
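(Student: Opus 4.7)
The plan is to begin from the rewritten expression \eqref{eqn:cauchy}, namely
\begin{flalign*}
\ip{\psi_1}{\psi_2}_\Sol^{} = -\ip{P\psi_1^-}{\psi_2}_{\Gamma(V)}^+ + \ip{P\psi_1^+}{\psi_2}_{\Gamma(V)}^-,
\end{flalign*}
and to perform an integration by parts on each of the two summands separately. In each integral the integrand has spacelike compact support intersected with the appropriate causal half-space, so the formal self-adjointness of $P$ can be invoked in integrated form on the manifolds-with-boundary $\Sigma^\pm$ whose common boundary is $\Sigma$. The bulk integrals so produced will vanish because $P\psi_2=0$, and we will be left only with surface integrals along $\Sigma$. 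The key input that makes the surface integrals explicit is that for a first-order differential operator, the boundary contribution of integration by parts is precisely the pairing of the principal symbol evaluated on the outward conormal with the boundary values of the sections.

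More concretely, I would invoke the following standard Green-type identity for a first-order formally self-adjoint $P$ on a region $N\subseteq M$ with smooth boundary $\partial N$ and outward unit conormal $\nu^\flat$: for all $f,h\in \Gamma^\infty(V)$ with compact overlapping support on $\overline{N}$,
\begin{flalign*}
\ip{Pf}{h}_{\Gamma(V)}^{N} - \ip{f}{Ph}_{\Gamma(V)}^{N} = \int_{\partial N}\vol_{\partial N}^{}\,\ip{\sigma_P^{}(\nu^\flat)f\vert_{\partial N}^{}}{h\vert_{\partial N}^{}}_V^{}.
\end{flalign*}
Applying this with $N=\Sigma^+$ (whose outward conormal along $\Sigma$ is $-n^\flat$, since $n$ is future-pointing and $\Sigma^+$ lies to the future of $\Sigma$) to the pair $(\psi_1^-,\psi_2)$, and separately with $N=\Sigma^-$ (outward conormal $+n^\flat$) to the pair $(\psi_1^+,\psi_2)$, the bulk terms $\ip{\psi_1^\pm}{P\psi_2}_{\Gamma(V)}^{\pm}$ vanish by $\psi_2\in\Sol$, and I obtain
\begin{flalign*}
\ip{P\psi_1^-}{\psi_2}_{\Gamma(V)}^{+} &= -\int_\Sigma \vols\,\ip{\sigma_P^{}(n^\flat)\psi_1^-\vert_\Sigma^{}}{\psi_2\vert_\Sigma^{}}_V^{},\\
\ip{P\psi_1^+}{\psi_2}_{\Gamma(V)}^{-} &= \phantom{-}\int_\Sigma \vols\,\ip{\sigma_P^{}(n^\flat)\psi_1^+\vert_\Sigma^{}}{\psi_2\vert_\Sigma^{}}_V^{}.
\end{flalign*}
Substituting these into \eqref{eqn:cauchy} and using linearity of $\sigma_P^{}(n^\flat)$ together with $\psi_1=\psi_1^+ +\psi_1^-$ immediately yields \eqref{eqn:solprodhyper}.

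The main technical point to verify is that the surface integrals on $\Sigma$ are well-defined and that the integration by parts genuinely closes. Since $\psi_1\in\Sol$ has $\supp(\psi_1)\subseteq J(C)$ for some compact $C\subseteq M$, the intersection $J(C)\cap\Sigma$ is compact in the globally hyperbolic spacetime $M$, so both $\psi_1^+\vert_\Sigma^{}$ and $\psi_1^-\vert_\Sigma^{}$ have compact support on $\Sigma$ and the boundary integrals exist. The supports of $P\psi_1^\pm$ are also compact (as noted after \eqref{eqn:solbimap}), so the integration by parts on $\Sigma^\pm$ only involves integrands of compact support, justifying the Green-type identity. Verifying the sign conventions for the outward conormal on $\Sigma^+$ versus $\Sigma^-$ is the only place one must be careful; this is the step I would expect to be the chief source of bookkeeping errors, but it is not a serious obstacle, as the two signs combine with the explicit signs already present in \eqref{eqn:cauchy} to reconstruct exactly $\psi_1\vert_\Sigma^{}$.
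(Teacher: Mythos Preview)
Your proposal is correct and follows essentially the same approach as the paper: start from \eqref{eqn:cauchy}, apply Green's formula for a first-order operator on each half-space $\Sigma^\pm$ (the paper simply cites \cite[p.~160, Prop.~9.1]{Taylor} for this), use $P\psi_2=0$ to kill the bulk terms, and combine the two boundary contributions into $\psi_1\vert_\Sigma$. Your treatment is in fact more explicit than the paper's about the outward conormals and the compact-support justification; the one minor imprecision is that the relevant support fact for Green's formula on $\Sigma^+$ is that $\psi_1^-$ itself (not just $P\psi_1^-$) has compact support there, which follows from $J^-(C)\cap J^+(\Sigma)$ being compact in a globally hyperbolic spacetime.
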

 \begin{proof}
 This is a result of Green's formula \cite[p.~160, Prop.~9.1]{Taylor} and of $P\psi_2=0$. We have, for all $\psi_1,\psi_2\in\Sol$,
 \begin{flalign}
 \nn \ip{\psi_1}{\psi_2}_\Sol^{}  &= -\ip{P\psi_1^-}{\psi_2}_{\Gamma(V)}^+ + \ip{P\psi_1^+}{\psi_2}_{\Gamma(V)}^-\\
 \nn &=\int_\Sigma\vols\,\Big( \ip{\sigma_P^{}(n^\flat) \psi_1^-\vert_\Sigma^{}}{\psi_2\vert_\Sigma^{}}_V^{} +\ip{\sigma_P^{}(n^\flat) \psi_1^+\vert_\Sigma^{}}{\psi_2\vert_\Sigma^{}}_V^{}  \Big)\\
 &=\int_\Sigma\vols\, \ip{\sigma_P^{}(n^\flat) \psi_1\vert_\Sigma^{}}{\psi_2\vert_\Sigma^{}}_V^{}~.
 \end{flalign}
 \end{proof}

 Before we discuss our examples of  gauge field theories it is instructive to consider first the case of fermionic
  matter field theories. We will show that there are fermionic matter field theories which are not of 
  positive type (see Definition \ref{def:matterpositive}), see also \cite{Bar:2011iu}.
  This means that positivity is not a property which follows from the basic axioms of
  a fermionic classical matter or gauge field theory, see Definitions \ref{def:matterfield} and \ref{def:gaugefield}.
  \begin{ex}[Positive and non-positive fermionic matter field theories]\label{ex:positivematter}
We start with the Majorana field of Example \ref{ex:Maj1} as an example for a fermionic matter field theory of positive type.
The principal symbol of the massive Dirac operator is given by $\sigma_P(\xi) = \gamma^\mu\,\xi_\mu=\slashed{\xi}$, where 
in local coordinates $\xi = \xi_\mu dx^\mu$. The bilinear map (\ref{eqn:solprodhyper}) then reads, for all
$\psi_1,\psi_2\in\Sol$,
\begin{flalign}
\ip{\psi_1}{\psi_2}_\Sol^{} = i\,\int_\Sigma\vols\,\big(\slashed{n}\psi_1\vert^{}_\Sigma\big)^\mathrm{T} C\psi_2\vert^{}_\Sigma~.
\end{flalign}
Using Theorem \ref{theo:bernalsanchez} we obtain that the future-pointing normal vector field of the Cauchy surface $\Sigma$
is given by $n= \vartheta^{-1}\,\partial_t$, where $\vartheta$ is the positive function on $\bbR\times\Sigma$ appearing in the metric
$g=-\vartheta^2\,dt^2 \oplus g_t$ of Theorem \ref{theo:bernalsanchez}. Then $\slashed{n} = \gamma^0\vartheta=-i\beta$,
where $\beta$ is the matrix used in defining the Dirac adjoint, see Appendix \ref{app:conventions}. Since on Majorana spinors
the Dirac adjoint equals the Majorana adjoint and since $\beta^\dagger =\beta = \beta^{-1}$ we have
\begin{flalign}
\ip{\psi_1}{\psi_2}_\Sol^{} = \int_\Sigma\vols\,\psi_1^\dagger \vert^{}_\Sigma \psi_2\vert^{}_\Sigma~.
\end{flalign}
It holds that $\ip{\psi}{\psi}_\Sol^{}\geq 0$ for all $\psi\in\Sol$. Even more, $\ip{\psi}{\psi}_\Sol^{} =0$ 
implies that the initial data $\psi\vert^{}_\Sigma \equiv 0$ vanishes and thus due to the Cauchy-hyperbolicity of the 
massive  Dirac operator $\psi\equiv 0$.

An example of a fermionic matter field theory which is not of positive type is the projected Rarita-Schwinger
field presented in \cite[Section 2.6]{Bar:2011iu}. As above we use Theorem \ref{theo:bernalsanchez} to get a particularly
simple expression for the normal vector field.
 We take $V:=DM_\bbR\otimes T^\ast M$, but restrict ourselves
to the image of the projection operator defined by, 
for all $\psi\in \Gamma^\infty(V)$, $(\pi\psi)_\mu := \psi_\mu-\frac{1}{D}\gamma_\mu \gamma^\nu\psi_\nu$.
These sections satisfy $\gamma^\mu\psi_\mu=0$.
We equip the  bundle $V$ with the non-degenerate antisymmetric bilinear form 
$\ip{f}{h}_{V}^{} = i\, f_\mu^\mathrm{T} C h^\mu$. The projected Rarita-Schwinger operator
is defined by, for all $\psi\in\Gamma^\infty(V)$ with $\gamma^\mu\psi_\mu=0$, 
$(P\psi)_\mu := \slashed{\nabla}\psi_\mu - \frac{2}{D} \gamma_\mu\nabla^\nu\psi_\nu$.
It satisfies $\gamma^\mu (P\psi)_\mu=0$ for all $\psi\in\Gamma^\infty(V)$ with $\gamma^\mu\psi_\mu=0$ and thus is
a differential operator on the projected Rarita-Schwinger bundle. It is formally self-adjoint with respect to $\ip{~}{~}_V^{}$
on sections of the projected Rarita-Schwinger bundle.
The bilinear map (\ref{eqn:solprodhyper})  reads, for all $\psi_1,\psi_2\in\Sol$,
\begin{flalign}\label{eqn:projRStemp}
\ip{\psi_1}{\psi_2}_\Sol^{} = \int_\Sigma \vols \,\psi_{1\mu}^\dagger\vert_\Sigma^{}\,\psi_2^\mu\vert_\Sigma^{}~.
\end{flalign}
We can solve the constraint $\gamma^\mu\psi_\mu=0$ for $\psi_0$ and find $\psi_0 = -\gamma_0\gamma^i\psi_i$,
where $i=1,\dots,D-1$ is a spatial index.  Putting this into (\ref{eqn:projRStemp}) and setting $\psi_1=\psi_2=\psi\in\Sol$ 
leads to
\begin{flalign}
\ip{\psi}{\psi}_\Sol^{} = \int_\Sigma \vols \,\Big(\psi_{i}^\dagger\vert_\Sigma^{}\,\psi^i\vert_\Sigma^{} - (\gamma^i\psi_{i})^\dagger\vert_\Sigma^{}\,(\gamma^j\psi_{j})\vert_\Sigma^{} \Big)~.
\end{flalign}
This is an indefinite inner product, since if we evaluate it on initial data $\psi_i\vert_\Sigma^{}$
 with $\gamma^i\psi_i\vert_\Sigma^{}=0$ we obtain a positive number, while evaluating 
 it on $\psi_i\vert_\Sigma^{}=(\gamma_i \chi)\vert_\Sigma^{}$ with
  $\chi\vert_\Sigma^{}\in\Gamma_\sc^\infty(DM_\bbR)\vert_\Sigma^{}$
  we obtain a negative one.
   \end{ex}
  \sk\sk
  
We next will briefly comment on the question of weak non-degeneracy for our examples of  bosonic gauge field theories.
\begin{ex}[Linearised Yang-Mills field]\label{ex:YM2}
We analyze the case of a Yang-Mills field  linearised around a vanishing background $A_0$
 and sketch the main non-degeneracy result, see \cite{DHS} for details on the $U(1)$ case. In this case, $K=\dd^{A_0} = \dd$ 
is the exterior differential and the associated (compactly supported) de Rham cohomology groups of $M$ are defined as
\begin{subequations}
\begin{flalign}
H^n(M,\g)&:= \frac{\Ker \left(\dd: \Omega^n(M,\g) \to \Omega^{n+1}(M,\g)\right)}{\Imm \left(\dd: \Omega^{n-1}(M,\g) \to \Omega^{n}(M,\g)\right)}=H^n(M,\bbR)\otimes \g~,\\
H_0^n(M,\g) &:= \frac{\Ker \left(\dd: \Omega_0^n(M,\g) \to \Omega_0^{n+1}(M,\g)\right)}{\Imm \left(\dd: \Omega_0^{n-1}(M,\g) \to \Omega_0^{n}(M,\g)\right)}=H_0^n(M,\bbR)\otimes \g~.
\end{flalign}
\end{subequations}
We first observe that 
\begin{flalign}
\tau([f],[h])=\langle f, G^{\widetilde P}
h\rangle_{\Gamma(V)}^{}=\int_M  \ip{f}{\ast G^{\widetilde P}h}_\g^{}=0~,
\end{flalign}
 for all $f\in\Ker_0(K^\dagger)=\Ker_0(\delta)$, implies in
particular that 
\begin{flalign}
\int_M \ip{k}{\ast \dd G^{\widetilde P} h}_\g^{}=0~,
\end{flalign}
for all $k \in \Omega^2_0(M,\g)$. From the non-degeneracy of $\int_M \ip{\,\cdot\,}{\ast \,\cdot\,}_\g^{}$ 
we then obtain $\dd G^{\widetilde P} h=0$, such that $G^{\widetilde P}h$ defines an element in 
$H^1(M,\g)$. The Hodge-dual $\ast f$ for $f\in \Ker_0(\delta)$ defines an element
in $H_0^{D-1}(M,\g)$ and thus $\tau([f],[h])=0$ for all $f$ implies that
$G^{\widetilde P} h$ corresponds to the trivial element in $H^1(M,\g)$
by Poincar\'e duality (see e.g.~\cite{Bott:1995}), i.e.~$G^{\widetilde P}h = \dd \epsilon$
for some $\epsilon \in C^\infty(M,\g)$. 
This in turn implies that the necessary condition for weak non-degeneracy found in Corollary
\ref{cor:gaugeproperties} is sufficient in the case at hand.

In particular, for any spacetime with compact Cauchy surfaces we have $\mathcal{G}=\widehat{\mathcal{G}}$
and thus for the linearised Yang-Mills field with $A_0=0$ the space $(\EE,\tau)$ is symplectic.
We next provide a simple example of a spacetime for which $\mathcal{G}\subset \widehat{\mathcal{G}}$ is a proper subset, thus
$\tau$ is degenerate by Corollary \ref{cor:gaugeproperties}. Let us take Minkowski space $\bbR^D$ with flat metric $g$ and
remove the light cone of the origin $0\in\bbR^D$, i.e.~we consider the globally hyperbolic spacetime
 $M := \bbR^{D}\setminus J(\{0\})$ with the induced metric. We further take two closed balls (with strictly positive radius)
 $B_1\subset B_2\subset \bbR^D$ centred at $0$ in $\bbR^D$ and denote $B_1^M := B_1\cap M$ and $B_2^M := B_2\cap M$.
 Let us now take a function $\epsilon \in C^\infty(M,\mathfrak{g})$ such that $0\neq \epsilon = w\in \mathfrak{g}$ is a constant
 on $J(B^M_1)$ and  $\epsilon =0$ on $M\setminus J(B_2^M)$. The differential $\dd \epsilon$ is then an element
 in $\Omega_\sc^1(M,\mathfrak{g})$ and thus $\dd\epsilon \in\widehat{\mathcal{G}}$. It remains to show that there is 
 no $\tilde\epsilon\in C^\infty_\sc(M,\mathfrak{g})$ such that $\dd \epsilon =\dd\tilde\epsilon$. 
 In order to show this, let us consider the smooth embedding $\iota : (0,\infty) \to M\subset \bbR^D$ given 
 in Cartesian coordinates on $M\subset \bbR^D$ by $x\mapsto \iota(x) = (0,x,0,\dots,0)$. Pulling back the
 one-form $\dd\epsilon$ and integrating over $(0,\infty)$ we find by Stokes theorem
 \begin{flalign}
 \int_{(0,\infty)} \iota^\ast(\dd \epsilon) = \int_{(0,\infty)} \dd\iota^\ast(\epsilon)  =-w\neq 0~,
 \end{flalign}
 while doing the same for $\dd\tilde \epsilon$ with $\tilde\epsilon \in C^\infty_\sc(M,\mathfrak{g})$
 results in $0$. Thus, $\mathcal{G}\subset \widehat{\mathcal{G}}$ is a proper subset for the model under consideration
 and $\tau$ in $(\EE,\tau)$ is degenerate. For a  physical interpretation of this degeneracy
 we refer to \cite{DHS}.
\end{ex}

\begin{ex}[Linearised general relativity]\label{ex:gravity2}
If the globally hyperbolic spacetime $M$ has  compact Cauchy surfaces the weak non-degeneracy
of the pre-symplectic structure for linearised general relativity on Einstein manifolds has been shown
by Fewster and Hunt \cite[Theorem 4.3]{Fewster:2012bj}. The analysis of the non-compact case
is to our best knowledge not yet completely understood. 
\end{ex}
\sk

As it has been argued above, the positivity of a fermionic gauge field theory according to Definition
 \ref{def:matterpositive} is a physically and mathematically motivated condition. We will study this aspect for our two
 examples of fermionic gauge field theories in detail.
\begin{ex}[Toy model: Fermionic gauge field]\label{ex:toy2}
We give a simple proof that the fermionic toy model introduced in Example \ref{ex:toy1} is not of positive type.
For this proof we do not need the expression of $\tau$ on a Cauchy surface (\ref{eqn:cauchy}),
but we will work with $\tau$ as given in (\ref{eqn:sigma}). Our strategy is as follows: We assume the existence
of a $f\in\Ker_0(K^\dagger)$ such that $\tau\big([f],[f]\big) >0$ and then explicitly construct 
an $f^\prime \in \Ker_0(K^\dagger)$ such that $\tau\big([f^\prime],[f^\prime]\big)<0$.
For this we choose a basis of the symplectic vector space $\big(\bbR^{2m},\Omega\big)$, such that $\Omega$ takes
the standard form
\begin{flalign}
\Omega = \begin{pmatrix}
0 & 1 & 0 & 0 & \dots\\
-1 & 0 & 0 & 0 &\\
0 & 0 & 0 & 1 &\\
0 & 0 & -1 & 0 & \\
\vdots & & & & \ddots 
\end{pmatrix}~.
\end{flalign}
We further consider the $2m\times 2m$-matrix
\begin{flalign}
B = \begin{pmatrix}
0 & 1 & 0 & 0 & \dots\\
1 & 0 & 0 & 0 &\\
0 & 0 & 0 & 1 &\\
0 & 0 & 1 & 0 & \\
\vdots & & & & \ddots 
\end{pmatrix}~.
\end{flalign}
Let now $f\in\Ker_0(K^\dagger)$ be such that $\tau\big([f],[f]\big) >0$. Then defining
$f^\prime := Bf$ we have $f^\prime \in\Ker_0(K^\dagger)$, since $K^\dagger= \delta$ and $B$
commutes. Using that $B^\mathrm{T}\Omega B = -\Omega$ and also that $B$ commutes with $G^{\widetilde{P}}$
and the Hodge operator, we obtain
\begin{flalign}
\tau\big([f^\prime],[f^\prime]\big) = \int_M {f^\prime}^\mathrm{T}\wedge \Omega \ast G^{\widetilde{P}}f^\prime
= \int_M {f}^\mathrm{T}\wedge B^\mathrm{T}\Omega B \ast G^{\widetilde{P}}f = -\tau\big([f],[f]\big)<0~.
\end{flalign}
\end{ex}


\section{\label{sec:degeneracyrarita}Positivity of the Rarita-Schwinger gauge field}

We derive a sufficient condition for the positivity of the Rarita-Schwinger gauge field and
prove that this condition is satisfied on a large class of spacetimes.
\begin{theo} \label{theo:RSpos}
Consider the Rarita-Schwinger gauge field $(M,V,W,P,K,T)$ defined in Example \ref{ex:RS1}. 
Then the following statements hold:
\begin{itemize}
\item[(i)] For all $f_1,f_2\in \Ker_0(K^\dagger)$ and on a Cauchy surface $\Sigma$ as in Theorem \ref{theo:bernalsanchez}
\begin{flalign}\label{eqn:RSipdegtmp}
\tau([f_1],[f_2])=\int_\Sigma
\vols \left(\psi_{1\mu}^\dagger\vert_\Sigma^{} \,\psi_2^\mu\vert_\Sigma^{} - \frac{1}{D-2} \slashed{\psi_{1}}^\dagger\vert_\Sigma^{} \,\slashed{\psi_{2}}\vert_\Sigma^{}\right)~,
\end{flalign}
where $\psi_i := G^{\widetilde P}f_i \in\Sol$, $i=1,2$, and $\slashed{\psi} := \gamma^\mu\psi_\mu$.
\item[(ii)] Let us assume that for all $\psi \in \Sol$ satisfying $\gamma^\mu \psi_\mu=0$ there exists an $\epsilon\in \Gamma^\infty(W)$ such that
\begin{subequations}\label{eq:RSgaugeeq}
 \begin{flalign}\label{eq:hyperbolic}
\slashed{\nabla}\epsilon&=0~\qquad \quad ~~\text{on}~M~,\\
 \label{eq:elliptic}
 \gamma^i \nabla_i \epsilon &= - \gamma^i \psi_i\qquad  \text{on}~
 \Sigma~,
  \end{flalign}
  \end{subequations}
and $\epsilon\vert^{}_\Sigma$ is vanishing on the (possibly empty) boundary of $\Sigma$, whereas $\nabla \epsilon|_\Sigma$ is bounded. Then $(\EE, \tau)$ is a pre-Hilbert space, i.e.~$\tau$ is positive definite.
\item[(iii)] Let $D\ge4$ and let $M$ be asymptotically flat in the
  following sense \cite{Parker:1981uy}: There is a $t\in\bbR$, such that in a canonical 
  foliation given by Theorem \ref{theo:bernalsanchez} the Cauchy surface $(\Sigma,g_t)$
  is complete. Further, there is a compact set $C\subset \Sigma$, such that $\Sigma \setminus C$ is the disjoint union of a finite
   number of subsets $\Sigma_1, \ldots, \Sigma_N$ of $\Sigma$, each diffeomorphic to the complement of a contractible compact
   set in $\bbR^{D-1}$. Under this diffeomorphism, the Riemannian metric $g_t$ on $\Sigma_b$, $b=1,\dots,N$,
    should be of the form      
    \begin{flalign}
    \left(g_t\right)_{ij}= \delta_{ij} + a_{ij}
    \end{flalign}
in Cartesian coordinates $x^i$ of $\bbR^{D-1}$, where $a_{ij}=O(r^{-D+3})$, $\partial_k a_{ij}=O(r^{-D+2})$, and $\partial_l\partial_k a_{ij}=O(r^{-D+1})$. Furthermore, the second fundamental form (extrinsic curvature) $h_{ij}$ of $\{t\}\times \Sigma$ should satisfy $h_{ij}=O(r^{-D+2})$, $\partial_k h_{ij}=O(r^{-D+1})$. 

In this case $(\EE, \tau)$ is a pre-Hilbert space.

\item[(iv)] Let $M$ contain compact Cauchy surfaces.
In a canonical foliation given by Theorem \ref{theo:bernalsanchez} let  there be a $t\in\bbR$, such that
the induced Dirac operator on $\{t\}\times \Sigma$ has a trivial kernel.

In this case $(\EE,\tau)$ is a pre-Hilbert space.
\end{itemize}
\end{theo}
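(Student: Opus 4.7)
For part (i), the plan is to combine Proposition \ref{propo:solspacepairing}(iii), which gives $\tau([f_1],[f_2]) = \langle \psi_1, \psi_2 \rangle_\Sol$ with $\psi_i := G^{\widetilde P}f_i \in \Sol$, with the Cauchy-surface formula \eqref{eqn:solprodhyper}, available because the Rarita--Schwinger operator $P$ in \eqref{eqn:RSEOM} is first-order and formally self-adjoint. Its principal symbol acts on spinor-valued one-forms as $\sigma_P(\xi)\psi_\mu = \slashed{\xi}\psi_\mu - \gamma_\mu \xi^\nu \psi_\nu$. Inserting the future-pointing unit normal $n^\flat$ and expanding the bilinear form on $V$ via the trace-reversal $\widetilde{\,\cdot\,}$, the terms proportional to $n^\nu\psi_\nu$ recombine (using $\gamma^\mu\gamma_\mu = D\,\1$) into the two contributions $\langle \slashed{n}\psi_{1\mu}, \psi_2^\mu\rangle_W$ and $-\tfrac{1}{D-2}\langle \slashed{n}\slashed{\psi_1}, \slashed{\psi_2}\rangle_W$. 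In the canonical foliation of Theorem~\ref{theo:bernalsanchez} one has $\slashed{n} = -i\beta$, and the Majorana identity $i v^T C \beta v' = v^\dagger v'$ already used in Example~\ref{ex:positivematter} converts each Majorana bilinear form into the Hermitian one, reproducing \eqref{eqn:RSipdegtmp}.

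For part (ii), the strategy is a two-stage gauge reduction. Start with $\psi := G^{\widetilde P}f \in \Sol$, which satisfies $K^\dagger\psi = 0$ by Theorem~\ref{theo:gaugeproperties}(ii). In stage one I use the identity $\gamma^\mu(K\alpha)_\mu = \tfrac{2-D}{2}\slashed{\nabla}\alpha$: since the Dirac operator $Q$ on $W$ is Green-hyperbolic and $\gamma^\mu\psi_\mu \in \Gamma^\infty_\sc(W)$, Theorem~\ref{theo:complex} furnishes an $\eta \in \Gamma^\infty_\sc(W)$ solving $\slashed{\nabla}\eta = \tfrac{2}{D-2}\gamma^\mu\psi_\mu$. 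Replacing $\psi$ by $\psi - K\eta$ (a $\mathcal{G}$-gauge, which leaves $\tau$ unchanged by Proposition~\ref{propo:solspacepairing}(ii)) achieves $\slashed{\psi} = 0$ on all of $M$. In stage two I apply the hypothesis of (ii) to this new $\psi$: the resulting $\epsilon \in \Gamma^\infty(W)$ satisfies $\slashed{\nabla}\epsilon = 0$, so $(K\epsilon)_\mu = \nabla_\mu \epsilon$ and $\gamma^\mu(K\epsilon)_\mu = 0$, preserving $\slashed{\psi} = 0$; the initial condition $\gamma^i \nabla_i \epsilon|_\Sigma = -\gamma^i \psi_i|_\Sigma$ then forces $\gamma^i(\psi+K\epsilon)_i|_\Sigma = 0$, which together with the global constraint $\gamma^\mu(\psi+K\epsilon)_\mu = 0$ yields $(\psi+K\epsilon)_0|_\Sigma = 0$ (since $\gamma^0$ is invertible). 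The specified decay of $\epsilon|_\Sigma$ on the boundary of $\Sigma$ and the boundedness of $\nabla\epsilon|_\Sigma$ are precisely what is needed to guarantee that this $K\epsilon$ contributes no boundary terms at infinity when inserted into the Cauchy-surface integral from (i). Substituting the gauged representative into that formula, the second term vanishes and the first collapses to $\tau([f],[f]) = \int_\Sigma \vols\, \psi_i^\dagger \psi^i$, manifestly non-negative because $g_t$ is Riemannian. If it vanishes then $\psi_i|_\Sigma = 0$, and together with $(\psi)_0|_\Sigma = 0$ the Cauchy data of $\psi$ vanish entirely on $\Sigma$; Cauchy-hyperbolicity of the Dirac-type operator $\widetilde P$ (with $K^\dagger\psi=0$ maintained throughout by exploiting that $R\eta = -\tfrac{1}{2}\Box \eta$ vanishes on solutions of $\slashed{\nabla}\eta=0$ via the Lichnerowicz identity on the Ricci-flat background) then gives $\psi \equiv 0$, whence $[f] = 0$.

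Parts (iii) and (iv) both reduce to verifying the hypothesis of (ii): given $\psi\in\Sol$ with $\gamma^\mu\psi_\mu = 0$, construct $\epsilon \in \Gamma^\infty(W)$ with $\slashed{\nabla}\epsilon = 0$ on $M$ and $\gamma^i \nabla_i \epsilon|_\Sigma = -\gamma^i\psi_i|_\Sigma$. Since the Cauchy problem for the Dirac operator on a globally hyperbolic spacetime is well-posed, it is enough to exhibit the initial datum $\epsilon_0 := \epsilon|_\Sigma$ solving the elliptic equation $\gamma^i \nabla_i \epsilon_0 = -\gamma^i\psi_i|_\Sigma$ on $\Sigma$. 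The operator $\epsilon_0 \mapsto \gamma^i\nabla_i\epsilon_0$ is a first-order elliptic operator on $\Sigma$ differing from the intrinsic Riemannian Dirac operator $\mathcal{D}_\Sigma$ of $(\Sigma, g_t)$ by a zero-order term built from the second fundamental form. For part (iv), compactness and self-adjointness make $\mathcal{D}_\Sigma$ Fredholm, and triviality of its kernel yields invertibility; boundedness of $\nabla\epsilon_0$ is automatic and boundary conditions are vacuous. For part (iii), I invoke the classical analytical results for Dirac operators on asymptotically flat Riemannian manifolds (going back to Parker \cite{Parker:1981uy} and refined via the spinorial proof of the positive-mass theorem) to solve the equation in weighted Sobolev spaces adapted to the stated decay rates of metric and extrinsic curvature, so that $\epsilon_0$ decays at infinity and $\nabla\epsilon_0$ is bounded. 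The main obstacle lies here in (iii): pinning down the correct weighted function spaces and verifying that the stated asymptotic decay exponents exclude kernel modes of the perturbed spatial Dirac operator is a delicate PDE issue, as is the careful verification in (ii) that the non-$\mathcal{G}$ gauge $K\epsilon$ really drops out of the Cauchy-surface integral under the boundedness hypotheses on $\nabla \epsilon|_\Sigma$.
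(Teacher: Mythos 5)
Your proposal follows the paper's proof closely in all four parts: (i) is the same principal-symbol computation combined with Proposition \ref{propo:solspacepairing} (iii), the Cauchy-surface formula \eqref{eqn:solprodhyper} and the Majorana identities; (ii) is the same two-stage gauge reduction (first $\slashed{\psi}=0$ on $M$ by a $\mathcal{G}$-gauge transformation, then $\psi_0\vert_\Sigma^{}=0$ by a gauge transformation with non-spacelike-compact parameter); (iii) and (iv) reduce to (ii) exactly as in the paper, via the weighted Sobolev estimates of Parker--Taubes and via Fredholm theory for the formally skew-adjoint elliptic operator $\gamma^i\nabla_i$ on a compact $\Sigma$, respectively.

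Two points deserve attention. First, the step you flag as delicate --- that the second gauge transformation, which is neither a $\mathcal{G}$- nor in general even a $\widehat{\mathcal{G}}$-transformation, leaves the Cauchy-surface pairing unchanged --- is not merely a matter of ``no boundary terms at infinity'', and the paper does supply the missing mechanism: since $n_\mu\sigma_P(n^\flat)^{\mu\nu}=0$, the integrand of \eqref{eq:vanish} involves only tangential derivatives of $\epsilon$ after using the Gauss relation \eqref{eqn:Wald}; one then integrates by parts \emph{along} $\Sigma$ (this is where the vanishing of $\epsilon\vert_\Sigma^{}$ at $\partial\Sigma$ and the boundedness of $\nabla\epsilon\vert_\Sigma^{}$ enter) and the remaining bulk term is killed by $P\psi=0$. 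Without this tangential integration by parts the invariance of $\tau$ does not follow. Second, your parenthetical claim that $K^\dagger\psi=0$ is maintained ``via the Lichnerowicz identity'' fails in stage one: there the gauge parameter $\eta$ solves the inhomogeneous equation $\slashed{\nabla}\eta=\tfrac{2}{D-2}\,\slashed{\psi}$ rather than $\slashed{\nabla}\eta=0$, so $R\eta=-\tfrac{1}{2}\square\eta$ need not vanish. The fact you need for the definiteness argument is nevertheless true for a different reason: contracting $(P\psi)_\mu=0$ with $\gamma^\mu$ in the gauge $\slashed{\psi}=0$ yields $(2-D)\nabla^\mu\psi_\mu=0$, hence $K^\dagger\psi=0$ and $\slashed{\nabla}\psi_\mu=0$, after which Cauchy-hyperbolicity of the twisted Dirac operator gives $\psi\equiv 0$ from vanishing Cauchy data. (A smaller slip: solvability of $\slashed{\nabla}\eta=j$ with $j,\eta\in\Gamma^\infty_\sc(W)$ is the well-posedness of the inhomogeneous Cauchy problem, not a consequence of Theorem \ref{theo:complex}, which concerns the homogeneous equation.)
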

\begin{proof} Proof of (i):  The principal symbol of the Rarita-Schwinger operator \eqref{eqn:RSEOM}
reads $\sigma_P(\xi)_\mu^{~\nu} = \slashed{\xi} \delta_\mu^\nu - \gamma_\mu \xi^\nu$. Hence,
$\widetilde{(\sigma_P(n^\flat)\psi)}_\mu =\slashed{n} \psi_\mu + \frac{1}{D-2}\gamma_\mu \slashed{n}\slashed{\psi}$
and by \eqref{eqn:solprodhyper} we have
\begin{flalign}
\nn \tau([f_1],[f_2]) &= \ip{\psi_1}{\psi_2}_\Sol=\int_\Sigma
\vols\,\ip{\sigma_P^{}(n^\flat)\psi_1\vert_\Sigma}{\psi_2\vert_\Sigma}_V^{}~\\
\nn &
=\int_\Sigma
\vols\left(\ip{\slashed{n} \psi_1^\mu\vert_\Sigma^{}}{\psi_{2\mu}\vert_\Sigma^{}}_W^{}-
\frac{1}{D-2}\ip{\slashed{n}\slashed{\psi_1}\vert_\Sigma^{}}{\slashed{\psi_2}\vert_\Sigma^{}}_W^{}\right)\\
&=\int_\Sigma
\vols \left(\psi_{1\mu}^\dagger\vert_\Sigma^{} \,\psi_2^\mu\vert_\Sigma^{} - \frac{1}{D-2} \slashed{\psi_{1}}^\dagger\vert_\Sigma^{} \,\slashed{\psi_{2}}\vert_\Sigma^{}\right)~,
\end{flalign}
where the last identity follows by arguments used in Example \ref{ex:positivematter}.
\sk 

Proof of (ii): We see from (\ref{eqn:RSipdegtmp}) that
positivity in particular holds if for all $\psi\in \Sol$ we can set $\gamma^\mu\psi_\mu=0$ and $\psi_0=0$ on $\Sigma$
by a suitable choice of gauge fixing (recall that in our conventions
the metric is positive definite on spacelike vectors). It is convenient to perform such a gauge fixing in two steps. 
First, let $\psi^\prime\in\Sol$ be arbitrary. Using a $\mathcal{G}$-gauge transformation $K\epsilon$ with
$\epsilon\in\Gamma^\infty_\sc(W)$, we define $\psi_\mu := \psi^\prime_\mu + (K\epsilon)_\mu =
\psi^\prime_\mu + \nabla_\mu\epsilon - \frac{1}{2}\gamma_\mu \slashed{\nabla}\epsilon$.
Demanding $\gamma^\mu \psi_\mu =0$ leads to the equation
\begin{flalign}
\slashed{\nabla}\epsilon=\frac{2}{D-2}\,\gamma^\mu \psi^\prime_\mu~,
\end{flalign}
which can be solved for $\epsilon\in\Gamma^\infty_\sc(W)$, e.g.~by imposing a trivial initial condition.
Thus, any $\psi^\prime\in\Sol$ is $\mathcal{G}$-gauge equivalent to a $\psi\in\Sol$ satisfying
$\gamma^\mu\psi_\mu=0$.
Using Proposition \ref{propo:solspacepairing} (ii) and (\ref{eqn:RSipdegtmp}) we obtain after this gauge transformation
\begin{flalign}
\tau([f_1],[f_2])= \ip{\psi_1}{\psi_2}_\Sol=\int_\Sigma
\vols \,\psi_{1\mu}^\dagger\vert_\Sigma^{} \,\psi_2^\mu\vert_\Sigma^{}~.
\end{flalign}
Given such a $\psi\in\Sol$ with $\gamma^\mu\psi_\mu=0$ we perform a second gauge transformation
to set the zero-component $\psi_0=0$ on $\Sigma$, while preserving 
the $\gamma$-trace condition $\gamma^\mu\psi_\mu=0$ on $M$. 
The $\gamma$-trace condition is preserved by the gauge transformation $K\epsilon$, $\epsilon\in\Gamma^\infty(W)$,
 if and only if $\slashed{\nabla}\epsilon =0$ on $M$. Using this and demanding that the zero component
 of the gauge transformed section vanishes leads us to the equation \eqref{eq:elliptic}.
We assume that a solution $\epsilon\in\Gamma^\infty(W)$ of \eqref{eq:RSgaugeeq} exists, for all $\psi\in \Sol$ with 
$\gamma^\mu \psi_\mu=0$, and that $\epsilon\vert_\Sigma^{}$ is vanishing on $\partial \Sigma$ 
whereas $\nabla \epsilon\vert_\Sigma^{}$ is bounded. 

Notice that we do not demand that $\epsilon$ is an element in $\Gamma^\infty_\sc(W)$, nor that
$K\epsilon \in\Gamma^\infty_\sc(V)$. It thus remains to show that the inner product $\ip{~}{~}_\Sol^{}$
is also gauge invariant under such extended gauge transformations, more precisely that (note that 
$\widetilde{\nabla_\mu\epsilon}= \widetilde{\nabla_\mu\epsilon}^{-1}=\nabla_\mu\epsilon$ due to \eqref{eq:hyperbolic})
 \begin{flalign}\label{eq:vanish}
 \int_\Sigma
\vols\,\ip{\sigma_P(n^\flat)^{\mu\nu}\psi_{\nu}\vert_\Sigma^{}}{(\nabla_\mu \epsilon)\vert_\Sigma^{}}_W^{}
\end{flalign}
vanishes for all $\epsilon\in\Gamma^\infty(W)$ which vanish at $\partial \Sigma$ and all $\psi\in\Gamma^\infty(V)$
 which are bounded on $\Sigma$ and satisfy $P\psi=0$. To this avail, 
we note that the covariant derivative $\nabla^\Sigma$ compatible with the Riemannian metric $g_t$ on $\Sigma$ 
and $\nabla$ compatible with $g$ are related by \cite[Lemma 10.2.1]{Wald:1984}
\begin{flalign}\label{eqn:Wald}
\nabla^\Sigma_\rho T^{\alpha_1\cdots \alpha_k}_{\beta_1\cdots \beta_l}= \Pi^{\alpha_1}_{\mu_1}\cdots
 \Pi^{\alpha_k}_{\mu_k}\Pi^{\nu_1}_{\beta_1}\cdots \Pi^{\nu_l}_{\beta_l}\Pi^\lambda_\rho \nabla_\lambda T^{\mu_1\cdots \mu_k}_{\nu_1\cdots \nu_l}~,
\end{flalign}
where $\Pi^\mu_\nu := \delta^\mu_\nu + n^\mu n_\nu$ is the projector to the tangent bundle on $\Sigma$.
Since $n_\mu \sigma_P(n^\flat)^{\mu\nu} =0$ we have $\Pi_\rho^\mu \sigma_P(n^\flat)^{\rho\nu} = \sigma_P(n^\flat)^{\mu\nu}$
and we can replace $\nabla$ in \eqref{eq:vanish} by $\nabla^\Sigma$. Integration by parts 
is well-defined under the assumptions on $\psi$ and $\epsilon$. 
Using again (\ref{eqn:Wald}) in order to replace $\nabla^\Sigma$ by $\nabla$
and projectors $\Pi_\mu^\nu$, the statement follows by applying the Leibniz rule and using the equation
of motion $P\psi=0$. 
\sk

Proof of (iii): The first equation \eqref{eq:hyperbolic} for $\epsilon$ can be solved for arbitrary 
initial conditions $\epsilon|_\Sigma$ as $\slashed{\nabla}$ is Cauchy-hyperbolic, while the second equation  \eqref{eq:elliptic}
 is an elliptic constraint equation for such initial conditions, whose solvability in general depends on the topology of
$\Sigma$ and the properties of $g_t$. We shall now use
a generalisation of \cite[Theorem 4.2]{Parker:1981uy} to prove this
solvability under our hypotheses. Let $R\ge 1$
be large enough such that each $\Sigma_b\subset \bbR^{D-1}$ (we omit the
diffeomorphisms $\Sigma_b \to \bbR^{D-1}\setminus \tilde C$, with suitable contractible compact $\tilde C\subset \bbR^{D-1}$,
 here and in the following) contains the exterior of the ball $B_R$ of radius $R$. For each $b$ and each $r\ge
R$, we set $\Sigma_{b,r}:=\Sigma_b\setminus B_r$ and fix a smooth
function $\rho$ on $\Sigma$ such that $\rho\ge 1$, $\rho=r$ in
$\Sigma_{b,2R}$ and $\rho=1$ in $\Sigma\setminus \big(\bigcup_{b=1}^N
\Sigma_{b,R}\big)$. Let now $s\in \{0,1\}$ and let 
$\Vert \epsilon\Vert_{s,\delta,p}$, $\epsilon \in \Gamma_\sc^\infty(W)\vert_\Sigma$, denote the weighted Sobolev norm
\begin{subequations}
\begin{flalign}
\Vert \epsilon\Vert_{s,\delta,p}:=s \Vert\rho^{\delta+1} \nabla^\Sigma
\epsilon\Vert_p+\Vert\rho^{\delta} \epsilon\Vert_p~,
\end{flalign}
where $\nabla^\Sigma$ is the spin connection on $\Sigma$ and
\begin{flalign}
\Vert \epsilon\Vert_p := \left(\int_\Sigma \vols\, \big(\epsilon^\dagger
    \epsilon\big)^{p/2}\right)^{1/p}~.
\end{flalign}
\end{subequations}
By $\HH_{s,\delta,p}$ we denote the completion of $\Gamma^\infty_\sc(W)|_\Sigma$ with respect to 
$\Vert \cdot\Vert_{s,\delta,p}$. Let us first consider the case $D=4$.
By \cite[Theorem 4.2]{Parker:1981uy}, the map
\begin{flalign}
\gamma^i \nabla_i=:\DD: \HH_{1,\delta,p}\to \HH_{0,\delta+1,p}
\end{flalign}
is an isomorphism with a bounded inverse $\DD^{-1}$, if $p=2$, $\delta = -1$ or $p\ge 2$,
$0<\delta<2-3/p$.  Furthermore, $\DD^{-1}$ maps sections  in $\HH_{0,\delta+1,p}\cap \Gamma^\infty(W)$
to sections in $\HH_{1,\delta,p}\cap \Gamma^\infty(W)$. This proves that
\eqref{eq:elliptic} has a unique solution and that $\epsilon\in\Gamma^\infty(W)$. 
The required decay/boundedness properties of
$\epsilon\vert_\Sigma^{}$ and $\nabla\epsilon\vert_\Sigma^{}$ follow by
the arguments used in the proof of \cite[Proposition
5.3]{Parker:1981uy}.
This implies that the condition in (ii) is fulfilled and thus $(\EE,\tau)$ is a pre-Hilbert space for the
asymptotically flat case in $D=4$.

 One can straightforwardly generalise
\cite[Theorem 4.2]{Parker:1981uy} to the case $D> 4$ by noting that the part of the proof of the said theorem which is concerned with the invertability of $\DD$ for $p=2$, $\delta = -1$ can be straightforwardly generalised to $D>4$ as all inbetween steps are still valid in higher dimensions and with the steeper decay of $a_{ij}$, $\partial_k a_{ij}$ and $h_{ij}$. At the same time, these parameters are
sufficient to guarantee the required decay/boundedness properties of $\epsilon\vert_\Sigma^{}$
and $\nabla \epsilon\vert_\Sigma^{}$ for $D>4$. Hence, the condition in (ii) is fulfilled and $(\EE,\tau)$ is a pre-Hilbert space for the
asymptotically flat case in general $D\geq 4$.
\sk

Proof of (iv): The elliptic differential operator $\gamma^i\nabla_i$ on $\Sigma$ is formally skew-adjoint with
 respect to the inner product $\ip{\psi}{\chi}=\int_\Sigma \vols\, \psi^\dagger\,\chi$, see \cite[Section 3]{Parker:1981uy}
 and note the different Clifford algebra conventions used by the authors.
Thus, the trivial kernel of $\gamma^i\nabla_i$ implies a trivial kernel of its formal adjoint, 
and the solvability of (\ref{eq:elliptic}) for all source terms is guaranteed by
 the general theory of elliptic operators on vector bundles over compact Riemannian manifolds,
see e.g.~\cite[Chapter III]{Lawson} or Donaldson's lecture notes \cite[Section 3]{Donaldson}. 
Elliptic regularity implies that $\epsilon\vert_\Sigma^{} \in\Gamma^\infty(W)\vert_\Sigma^{}$.
This $\epsilon\vert_\Sigma^{}$ can be used as initial condition for solving (\ref{eq:hyperbolic}) and 
the resulting section $\epsilon\in\Gamma^\infty(W)$ satisfies the required properties, since $\Sigma$ is compact. 
Hence, the condition in (ii) is fulfilled and $(\EE,\tau)$ is a pre-Hilbert space.
\end{proof}

To close, we present an example of a Ricci-flat globally hyperbolic spacetime $M$ 
with spin structure on which the Rarita-Schwinger gauge field 
is not of positive type. Let us take $M=\bbR\times \mathbb{T}^{D-1}$, with $\mathbb{T}^{D-1}$ denoting the 
$D{-}1$-torus, equipped with the flat metric $g= -dt^2 + \sum_{i=1}^{D-1} d\varphi_i^2$. Here
$t\in \bbR$ denotes time and $\varphi_i\in [0,2\pi)$ are the angles on the torus. We choose the trivial spin structure
on $M$, in particular there exists a global basis of $\Gamma^\infty(V)$.
The equation of motion for the Rarita-Schwinger gauge field (\ref{eqn:RSEOM}) reads
$(P\psi)_\mu = \gamma^\nu\partial_\nu \psi_\mu -\gamma_\mu \partial^\nu\psi_\nu =0$.
Notice that, in particular, all constant sections $\psi_\mu \equiv \mathrm{const}$ solve this equation
and thus belong to the space $\Sol$. We obtain for such sections
\begin{flalign}
\ip{\psi}{\psi}_\Sol^{} = (2\pi)^{D-1}\,
\left(\psi^\dagger_\mu \psi^\mu - \frac{1}{D-2} \slashed{\psi}^\dagger\,\slashed{\psi}\right)~,
\end{flalign}
where $(2\pi)^{D-1}$ is the volume of the torus.
Choosing $\psi_\mu\neq 0$ such that $\psi_0 =0$ and $\gamma^i\psi_i=0$ we  obtain that
$\ip{\psi}{\psi}_\Sol^{}= (2\pi)^{D-1} \,\psi_i^\dagger\psi^i>0$. On the other hand, choosing 
$\psi_i=0$ and $\psi_0\neq 0$ we obtain
\begin{flalign}
\ip{\psi}{\psi}_\Sol^{} = -  (2\pi)^{D-1}\,\frac{D-1}{D-2}\,\psi_0^\dagger\psi_0^{} <0~.
\end{flalign}
We note that if we equip  $M=\bbR\times\mathbb{T}^{D-1}$ with one of the $2^{D-1}-1$ non-trivial spin structures
\cite{BarDirac}, the induced Dirac operator on the torus $\mathbb{T}^{D-1}$ has a trivial kernel.
Thus, the Rarita-Schwinger gauge field is of positive type by Theorem \ref{theo:RSpos} (iv).
This shows an interesting correlation between the choice of spin structure and the positivity of
the Rarita-Schwinger gauge field.


\begin{acknowledgements}
We would like to thank Claudio Dappiaggi, Klaus Fredenhagen, Hanno
Gottschalk, Katarzyna Rejzner, Ko Sanders, Christoph Stephan and Christoph F.~Uhlemann
for useful discussions and comments. T.P.H.~gratefully acknowledges
financial support from the Hamburg research cluster LEXI ``Connecting
Particles with the Cosmos''.
\end{acknowledgements}


\appendix

\section{\label{app:conventions}Spinor and gamma-matrix conventions}
We review some aspects of spinors in higher dimensions following \cite{VanProeyen:1999ni}, being mainly interested
in properties of Majorana spinors.
Let $D\text{~mod~} 8=2,3,4$ and we denote by $\eta^{ab} = \mathrm{diag}\left(-,+,+,\dots,+\right)^{ab}$ the
$D$-dimensional Minkowski metric. The $\gamma$-matrices $\gamma^a$, $a=0,\dots,D-1$, 
are complex $2^{\lfloor D/2\rfloor}\times 2^{\lfloor D/2\rfloor}$-matrices satisfying the Clifford algebra relations
$\{\gamma^a,\gamma^b\}=2\,\eta^{ab}$. We take the timelike $\gamma$-matrix to be 
antihermitian ${\gamma^0}^\dagger =-\gamma^0$ and the spatial $\gamma$-matrices 
hermitian ${\gamma^i}^\dagger= \gamma^i$, for all $i=1,\dots,D-1$. We further fix $\beta := i\gamma^0$
which satisfies $\beta^\dagger =\beta$.
There exists a charge conjugation matrix $C$, which is antisymmetric, i.e.~$C^{\mathrm{T}}=-C$, 
in the dimensions we are considering, 
see Table 1 in \cite{VanProeyen:1999ni}. Further properties are $C^\dagger = C^{-1}$ and, for all $a=0,\dots,D-1$,
\begin{flalign}
{\gamma^a}^{\mathrm{T}} = - C\gamma^a C^{-1}~.
\end{flalign}
We define the charge conjugation operation on spinors $\chi\in\bbC^{2^{\lfloor D/2\rfloor}}$ by
\begin{flalign}
\chi^c := -\beta\,C^\ast\, \chi^\ast~,
\end{flalign}
where  $\ast$ denotes component-wise complex conjugation. This operation squares to the identity,
${\chi^c}^c =\chi$, for all $\chi$. A Majorana spinor is defined by the reality condition $\chi^c =\chi$
and the space of Majorana spinors is a real vector space of dimension $2^{\lfloor D/2\rfloor}$.
For every Majorana spinor $\chi$ the Dirac adjoint equals the Majorana adjoint, 
$\overline{\chi} := \chi^\dagger \beta = \chi^{\mathrm{T}} C$, and thus the hermitian structure $\overline{\chi}\lambda$ 
on Dirac spinors equivalently reads for Majorana spinors
\begin{flalign}
\overline{\chi}\lambda = \chi^{\mathrm{T}}C\lambda = -\lambda^{\mathrm{T}} C\chi~,
\end{flalign}
where in the last equality we have used that $C^{\mathrm{T}} = -C$.
 We thus have a non-degenerate $\bbR$-bilinear antisymmetric map $\chi^{\mathrm{T}}C\lambda$
  on the space of Majorana spinors. However, this map takes values in the purely imaginary numbers $i\bbR$
and therefore should be rescaled by the imaginary unit in order to take values in the reals $\bbR$.


\end{document}